 \pgfplotsset{
        table/search path={Figures/},
    }
\definecolor{darkred}{rgb}{0.6,0,0}
\definecolor{darkgreen}{rgb}{0,0.5,0}
\definecolor{darkblue}{rgb}{0,0,0.5}
\pgfplotsset{compat=1.5.1}
\newcommand{\fd}{\mathbf{f}}
\newcommand{\kd}{\mathbf{k}}
\newcommand{\xd}{\mathbf{x}}
\newcommand{\zd}{\mathbf{z}}
\newcommand{\md}{\mathbf{m}}
\newcommand{\qd}{\mathbf{q}}
\newcommand{\yd}{\mathbf{y}}
\newcommand{\sd}{\mathbf{s}}
\newcommand{\ud}{\mathbf{u}}
\newcommand{\uind}{\mathbf{u}^\mathrm{in}}
\newcommand{\omegad}{\bm{\upomega}}
\newcommand{\Gtild}{\tilde{G}}
\newcommand{\Gtildd}{\mathbf{\Gtild}}
\newcommand{\Gd}{\mathbf{G}}
\newcommand{\Id}{\mathbf{I}}
\newcommand{\kb}{k_\mathrm{b}}
\newcommand{\Nf}{N}
\newcommand{\Nmeas}{M}
\newcommand{\Pd}{\mathbf{P}}
\newcommand{\wl}{\lambda}
\newcommand{\nb}{\eta_\mathrm{b}}
\newcommand{\uin}{u^\mathrm{in}}
\newcommand{\usc}{u^\mathrm{sc}}
\def\LS{LS{} model}
\newcommand{\argmin}[2]{\arg\,\underset{#1}{\min} \, #2}
\newcommand{\dd}{\mathrm{d}}
\newcommand{\diag}[1]{\mathbf{diag}(#1)}
\newcommand{\ii}{\mathrm{j}}
\def\C{\mathbb{C}}
\def\R{\mathbb{R}}
\def\N{\mathbb{N}}
\def\Z{\mathbb{Z}}
\newcommand{\disp}{\displaystyle}                                           
\newcommand{\ie}{\textit{i.e., }}                                            
\newcommand{\eg}{\textit{e.g., }}                                            
\newtheorem{remark}{Remark}[section]
\newtheorem{lemma}{Lemma}[section]
\newtheorem{theorem}[lemma]{Theorem}
\newtheorem{proposition}[lemma]{Proposition}
\begin{document}
\title{Three-Dimensional Optical Diffraction Tomography with Lippmann-Schwinger Model}

\author{Thanh-an~Pham$^1$,
Emmanuel~Soubies$^2$,
Ahmed Ayoub$^3$,
Joowon~Lim$^3$,
Demetri~Psaltis$^3$, and
Michael~Unser$^1$\\
$^1$ Biomedical Imaging Group, \'Ecole polytechnique f\'ed\'erale de Lausanne,  Lausanne, Switzerland.\\
$^2$ IRIT, Université de Toulouse, CNRS, Toulouse, France.\\
$^3$ Optics Laboratory, \'Ecole polytechnique f\'ed\'erale de Lausanne, Lausanne, Switzerland.\\
e-mail: thanh-an.pham@epfl.ch.
}


\maketitle

\begin{abstract}
 A broad class of imaging modalities involve the resolution of an inverse-scattering problem. Among them, three-dimensional optical diffraction tomography (ODT) comes with its own challenges. These include a limited range of views, a large size of the sample with respect to the illumination wavelength, and optical aberrations that are inherent to the system itself. In this work, we present an accurate and efficient implementation of the forward model. It relies on the exact (nonlinear) Lippmann-Schwinger equation. We address several crucial issues such as the discretization of the Green function, the computation of the far field, and the estimation of the incident field. We then deploy this model in a regularized variational-reconstruction framework and show on both simulated and real data that it leads to substantially better reconstructions than the approximate models that are traditionally used in ODT.
\end{abstract}



%

\section{Introduction}

Optical diffraction tomography (ODT) is a noninvasive quantitative imaging modality~\cite{wolf1969three, jin2017tomographic}.
This label-free technique allows one to determine a three-dimensional map of the refractive index (RI) of samples, which is of particular interest for applications that range from biology~\cite{liu2016cell} to  nanotechnologies~\cite{zhang2016far}.
The acquisition setup sequentially illuminates the sample from different angles.
For each illumination, the outgoing complex wave field (\ie{} the scattered field) is recorded by a digital-holography microscope~\cite{goodman1967digital,kim2010principle}.
Then, from this set of measurements, the RI of the sample can be reconstructed by solving an inverse-scattering problem.
However, its resolution is very challenging due to the nonlinear nature of the interaction between the light and the sample.

\subsection{Related Works}\label{sec:relworks}

To simplify the reconstruction problem, pioneering works focused on linearized models. These include Born~\cite{wolf1969three} and Rytov~\cite{devaney1981inverse} approximations, which are valid for weakly scattering samples~\cite{chen1998validity}. Although originally used to deploy direct inversion methods, these linearized models have been later
combined with iterative regularization techniques to improve their robustness to noise and to alleviate the missing-cone problem~\cite{sung2009optical, lim2015comparative}.

Nonlinear models that adhere more closely to the physic  of the acquisition are needed to recover samples with higher variations of their refractive index. For instance, beam-propagation methods (BPM)~\cite{kamilov2015learning, kamilov2016optical,lim2017assessment,lim2019high} rely on a slice-by-slice propagation model that accounts for multiple scatterings within the direction of propagation (no reflection). Other nonlinear models include the contrast source-inversion method~\cite{Abubakar2002} or the recursive Born approximation~\cite{Kamilov2016}. Although more accurate, all these models come at the price of a large computational cost. 

The theory of scalar diffraction recognizes the Lippmann-Schwinger (LS) model to be the most faithful. It accounts for multiple scatterings, both in transmission and reflection. Iterative forward models that solve the LS equation have been successfully used in~\cite{liu2017seagle,soubies2017efficient,ma2018accelerated} to reconstruct two-dimensional samples from data acquired in the radio-frequency regime. An alternative approach is known as the discrete dipole approximation (DDA) which, in addition, can account for polarized light~\cite{draine1994discrete,girard2010nanometric,zhang2016far}.

Finally, it is noteworthy to mention that the aforementioned approaches have been extended to the phaseless (\ie intensity-only) inverse-scattering problem~\cite{maiden2012ptychographic,tian20153d,horstmeyer2016diffraction,pham2018versatile,unger2019versatile}.

\subsection{Challenges in Three-Dimensional ODT}

  So far, the use of the more sophisticated LS model and DDA has been mostly limited to microwave imaging~\cite{chaumet2009three,abubakar2012application,maire2013high} (see also the numerous references listed in~\cite{litman2009testing}).
  Although led by the same underlying physics, ODT differs from microwave imaging on several aspects that further increases the difficulty of the reconstruction problem.

    \begin{itemize}
        \item The direction of propagation of the incident wave is restricted to a small cone around the optical axis (see Figure~\ref{fig:ODT}). This  lack of measurements leads to the well-known missing-cone problem~\cite{lim2015comparative}.
        \item In typical ODT applications such as biology, the size of the sample is significantly larger (\eg $100 \times$) than the wavelength of the incident wave. This requires a fine discretization that entails very large memory requirements.
        \item The large size of the detector  leads to numerical challenges for the computation of the far-field.  
        \item The benefit of a theoretical expression of the incident wave field, as used in microwave imaging~\cite{litman2009testing}, is made unlikely in ODT due to unknown distortions that are inherent to the system.
    \end{itemize}
      These challenges hindered the adoption of sophisticated models in ODT, with notable exceptions~\cite{zhang2016far,girard2010nanometric} that focused on the reflective mode and considered relatively simple non-biological samples.

\subsection{Contributions and Outline}\label{sec:contributions}

  This paper builds upon the prior works~\cite{liu2017seagle,soubies2017efficient,ma2018accelerated} that are dedicated to the resolution of the 2D inverse scattering problem using an iterative LS forward model. We propose to  extend these works to the 3D ODT problem. Our main contribution is the development of an accurate and efficient implementation of the forward model in 3D. This is crucial to obtain good reconstructions while keeping the computational burden of the method reasonable for large-scale volumes. 
  
  More precisely, we provide a description on how to implement the iterative LS forward model by tackling three challenging difficulties.
  \begin{itemize}
      \item \textit{Discretization of the Green function (Section~\ref{sec:vainikko}).}
      Following an idea proposed by Vainikko~\cite{vainikko1997fast}, we derive an accurate discretization of the Green function and analyze the  errors that are produced when convolving it with a given vector (Theorem~\ref{th:DiscG}). Moreover, we propose a new way of building the discrete Green kernel that avoids a large memory overhead (Proposition~\ref{th:ReducedMem}).
      \item \textit{Computation of the far field (Sections~\ref{sec:far-field} and~\ref{sec:defocus}).}
      We combine the convolutional nature of the model with the fact that the measurements lie on a plane  to derive an efficient method to evaluate the far field.
      \item \textit{Estimation of the incident field (Section~\ref{sec:uin3D}).}
      We build the volume of the incident field through numerical propagation of a real acquisition of it at the detector plane. In particular, we propose a strategy that results in significantly reduced numerical errors.
  \end{itemize}
  Let us emphasize that, to the best of our knowledge, the present paper is the first to provide practical details (\eg discretization, speedup, and memory-saving strategies) concerning the implementation of the iterative LS model in ODT.
  
  Finally, to deal with the missing-cone problem, we deploy a regularized variational reconstruction approach (Section~\ref{sec:probform}). We then present in Section~\ref{sec:results} reconstructions of biological samples for both simulated and real data, and compare them to those of baselines methods.
  
\subsection{Notations}\label{sec:notations}
Scalar and continuously defined functions are denoted by italic letter~(\eg $\eta \in \R$, $g \in L_2(\R)$).
Vectors and matrices are denoted by bold lowercase and bold uppercase letters, respectively (\eg $\fd \in \R^\Nf$, $\Gd \in \C^{\Nf \times \Nf}$).
For a vector $\fd \in \R^N$, $\|\fd\|$ stands for its $\ell_2$-norm. Other $p$-norms will be specified with an index (\ie $\|\cdot \|_p$). The $n$th element of a vector is denoted as $\fd[n]$.  
Then, we denote by $\mathbf{F}$ the discrete Fourier transform (DFT) defined in 1D by $(\mathbf{F}\mathbf{v})[k] = \sum_{n=-N/2+1}^{N/2} \mathbf{v}[n] \mathrm{e}^{\frac{-2\ii \pi}{N}nk}$. (The higher-dimension DFT follows by recursive application of the 1D DFT along each dimension.) The notations $\hat{f}$ and $\hat{\fd}$ refer to the continuous Fourier transform of $f$ and the discrete Fourier transform of $\fd$, respectively. Finally, $\odot$ stands for the Hadamard product and $[\![1;N]\!] := [1\ldots N]$.

\tikzset{
pics/cone/.style args={#1}{
  code={
    \draw [dotted,fill=#1!12,thick,join=round,opacity=0.5](0,0) -- (2,-.7) -- (2,.7) --cycle;
    \draw [fill=#1!12,dashed,thin,opacity=0.5](2,0) ellipse (.4 and .7);
    \draw [dashed,thin,opacity=0.5](1.2,0) ellipse (.24 and .42);
    \draw [dashed,thick] (-2.8,0) -- (2,0);
    \draw [red,opacity=0.7,-latex,thin] (2,0.7) -- (1.2,0.42);
    \draw [red,opacity=0.7,-latex,thin] (2,-0.7) -- (1.2,-0.42);
    \draw [red,-latex,thin] (1.6,0) -- (1,0);
    \draw [red,opacity=0.5,-latex,thin] (2.35,-0.25) -- (1.4,-0.15);
    \draw [red,opacity=0.5,-latex,thin] (2.35,0.25) -- (1.4,0.15);
    \draw [red,-latex,thin] (1.65,-0.35) -- (1,-0.21);
    \draw [red,-latex,thin] (1.65,0.35) -- (1,0.21);

  }
}
}

\begin{figure}
    \centering
\begin{tikzpicture}
\draw [fill stretch image={ytot_RBC.png},thick,join=round](3.2,0.75) -- (5,-0.5) -- (5,1.5) -- (3.2,2.75) --cycle;
%
\node at (3,2.8) {$\Gamma$};
\draw [fill=blue!10,opacity=0.5,join=round](0.9,0.1) -- (-0.35,-0.3) -- (1,-1.2)  -- (2.25,-0.8)  --cycle;
\draw [fill=blue!10,opacity=0.5,join=round](0.9,0.1) -- (2.25,-0.8) -- (2.25,0.7) -- (0.9,1.6) --cycle;
\draw [join=round](0.9,0.1) -- (2.25,-0.8) -- (2.25,0.7) -- (0.9,1.6) --cycle;
\draw [fill=blue!10,opacity=0.5,join=round](0.9,0.1) -- (0.9,1.6) -- (-0.35,1.2) -- (-0.35,-0.3) --cycle;
\draw [join=round](0.9,0.1) -- (0.9,1.6) -- (-0.35,1.2) -- (-0.35,-0.3) --cycle;
\path (0,0) pic [rotate=-165,scale=1.5]{cone=gray};
\node at (1,0.3)
{\includegraphics[scale=0.075]{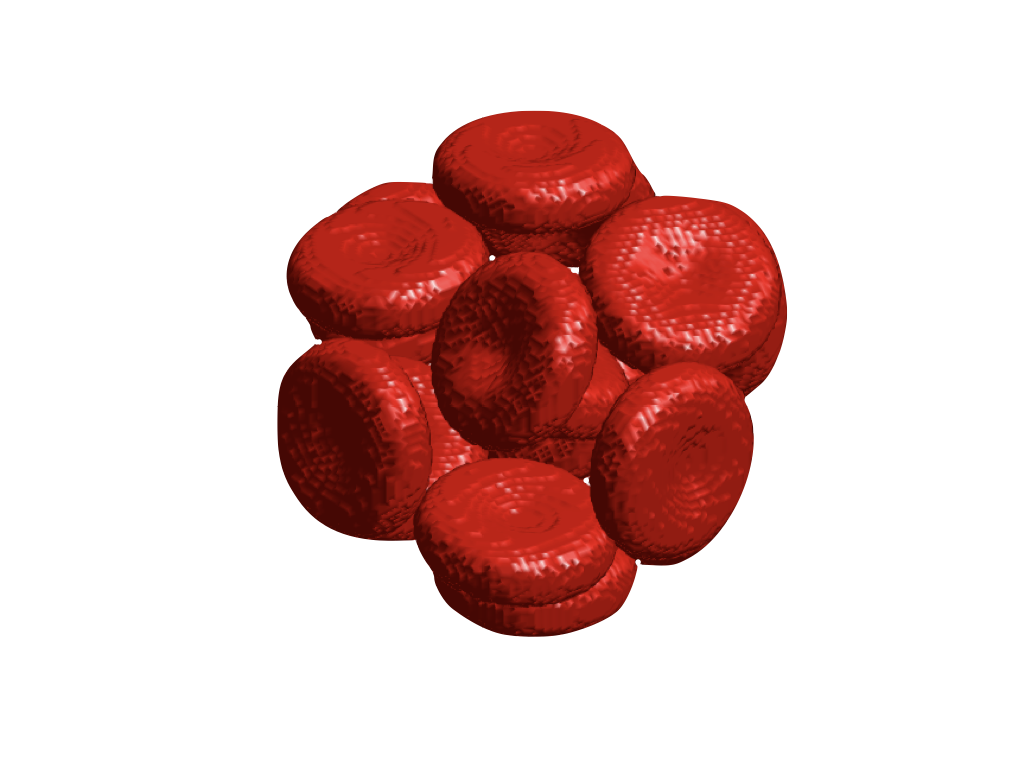}};
\draw [join=round](-0.35,-0.3) -- (1,-1.2) -- (1,0.37) -- (-0.35,1.2) --cycle;
\draw [join=round] (1,-1.2) -- (1,0.37) --  (2.25,0.7)  --(2.25,-0.8) --cycle;
\node at (-0.6,1.3) {$\Omega$};
\draw [latex-latex](-0.45,-0.4) -- (0.9,-1.3)   ;
\node at (0.05,-1) {$L$};
\draw [latex-latex](3.1,0.65) -- (4.9,-0.6)  ;
\node at (3.8,-0.1) {$\tilde{L}$};
\node at (1.55,-0.75) {$\eta(\xd)$};
\node[red] at (-2.4,0.5) {$\kd_q^\mathrm{in}$};
\end{tikzpicture} 
    \caption{Principle of optical diffraction tomography. The arrows represent the wave vectors $\{\mathbf{k}^\mathrm{in}_q \}_{q=1}^Q \in \R^3$ of the $Q$ incident plane waves $\{u^\mathrm{in}_q\}_{q=1}^Q$ which are limited to a cone around the optical axis.}
    \label{fig:ODT}
\end{figure}

\section{Physical Model}\label{sec:physical}
\subsection{Continuous-Domain Formulation}
Let $\eta : \Omega \rightarrow \R$ denotes the continuously-defined refractive index of a sample whose support is assumed to be included in the region of interest $\Omega \subset \R^3$. Without loss of generality and to simplify the presentation, let us consider that $\Omega = [-L/2,L/2]^3$ for $L>0$.
The interaction of the sample with a monochromatic incident field~$\uin : \Omega \rightarrow \C$ of wavelength~$\wl$ produces a scattered field $\usc : \Omega \rightarrow \C$. The resulting total field $u = \usc + \uin$  is governed by the Lippmann-Schwinger  equation
\begin{equation}
\label{eq:lipp}
    u(\xd) = \uin(\xd) + \int_\Omega g(\xd - \zd) f(\zd) u(\zd) \, \dd{\zd},
\end{equation}
where $f(\xd) = \kb^2 \left({\eta(\xd)^2}/{\nb^2} - 1\right)$ is the scattering potential. Here, $\kb = {2\pi \nb}/{\wl}$ is the wavenumber in the surrounding medium and $\nb$ the  corresponding refractive index. Finally, $g: \Omega \rightarrow \C$ is the free-space Green function which, under Sommerfeld’s radiation condition,  is given by~\cite{Schmalz2010}
\begin{equation}
\label{def:Green}
    g(\xd) = \frac{\exp\left( \ii \kb \|\xd \| \right)}{4\pi \|\xd \|}.
\end{equation}
Equation~\eqref{eq:lipp} completely characterizes the image formation model in ODT. Using an interferometric setup,   the total field $u$ is recorded at the focal plane $\Gamma = [-\tilde{L}/2,\tilde{L}/2]^2$, $\tilde{L} \geq L$, of the camera. This focal plane lies outside $\Omega$ at a distance denoted by $x_\Gamma>0$. Finally, we denote by  $M=m^2$ the number of pixels of the detector.

\subsection{Discrete Formulation}\label{sec:discreteForm}

To numerically solve the ODT inverse problem,~\eqref{eq:lipp} has to be properly discretized. To do so, we  first discretize $\Omega$ into $\Nf = n^3$ voxels\footnote{The generalization to the case where there is a different number of points in each dimension is straightforward.}.
Then, the computation of the scattered field~$\yd^\mathrm{sc} \in \C^\Nmeas$ at the camera plane $\Gamma$ follows a two-step process~\cite{liu2017seagle,soubies2017efficient},
\begin{align}\label{eq:lipp1}
    \ud &= \left(\Id - \Gd \, \diag{\fd}\right)^{-1}\uind\\
    \yd^\mathrm{sc} &= \Pd \Gtildd \, \diag{\fd}\ud,\label{eq:lipp2}
\end{align}
where $\Id \in \R^{\Nf \times \Nf}$ is the identity matrix,
\mbox{$\diag{\fd} \in \R^{\Nf \times \Nf}$} is a diagonal matrix formed out of the entries of~$\fd$, and
$\fd \in \R^N$, $\uind \in \C^\Nf$, and $\ud \in \C^\Nf$ are sampled version of $f$, $\uin$, and $u$ within~$\Omega$, respectively. The  matrix $\Gd \in \C^{\Nf \times \Nf}$ is the discrete counterpart of the continuous convolution with the Green function in~\eqref{eq:lipp}~(see Section~\ref{sec:vainikko}). Similarly,  $\Gtildd \in \C^{\Nmeas \times \Nf}$ is a matrix that, given $\ud$ and $\fd$ inside $\Omega$, gives the total field at the measurement plane $\Gamma$ (see Section~\ref{sec:far-field}). Finally, $\Pd \in \C^{\Nmeas \times \Nmeas}$ models the effect of the pupil function of the microscope and can also encode the contribution of a free-space propagation to account for an optical refocus of the measurements.

One will have noticed that~\eqref{eq:lipp1} requires the resolution of a linear system.
This can be efficiently performed using a conjugate-gradient method~\cite{soubies2017efficient} or a biconjugate-gradient stabilized method~\cite{horst1992BiCGSTAB}.
Yet,~\eqref{eq:lipp1} carries the main computational complexity of the forward process~\eqref{eq:lipp1}-\eqref{eq:lipp2}.
To obtain the scattered field at the camera plane $\Gamma$, a naive approach would be to compute the total field $\ud$ in~\eqref{eq:lipp1} on a large region that includes $\Gamma$. Here, the introduction of  $\Gtildd$ allows one to restrict the computation of  $\ud$ to the smaller region $\Omega$ as soon as it fully contains the support of the sample~\cite{liu2017seagle,soubies2017efficient}. This significantly reduces the computational burden of the forward process.

Needless to say, the matrices $\Gd$, $\Gtildd$, and $\Pd$ are never explicitly built. Instead, we exploit the fact that the application of the corresponding linear operators can be efficiently performed using the fast Fourier transform (FFT).

\section{Accurate and Efficient Implementation of the Forward Model}

\subsection{Green's Function Discretization for the Volume: $\Gd$}
\label{sec:vainikko}

Because of the singularity of the Green function~\eqref{def:Green} as well as of its Fourier transform (\ie $\hat{g}(\omegad) ={1}/({\kb^2 - \|\omegad\|^2})$), $\Gd$ in~\eqref{eq:lipp1} cannot be defined through a naive discretization of $g$.
In this section, we describe how $\Gd$ has to be defined in order to minimize the approximation error with respect to the continuous model~\eqref{eq:lipp}.

   First, let us recall that we aim at computing the total field $u$ only inside $\Omega$ and that the support of $f$ is itself assumed to be included in~$\Omega$. Hence,~\eqref{eq:lipp} can be equivalently written as, $\forall \xd \in \Omega$,
   \begin{equation}\label{eq:lippTrunc}
    u(\xd) = \uin(\xd) + \int_\Omega g_{\mathrm{t}}(\xd - \zd) f(\zd) u(\zd) \, \dd{\zd},
\end{equation}
where $g_{\mathrm{t}}$ is a truncated version of the Green function. More precisely, $g_{\mathrm{t}}$ is defined by
\begin{equation}
    g_{\mathrm{t}}(\xd) = \mathrm{rect}\left(\frac{\| \xd \|}{2\sqrt{3}L}\right) g(\xd),
\end{equation}
where $\mathrm{rect}(x) = \{ 1, |x|\leq 1/2; \, 0, \text{ otherwise}\}$. With this definition, one easily gets the equivalence between~\eqref{eq:lipp} and~\eqref{eq:lippTrunc}, as illustrated in Figure~\ref{fig:LippTrunc}.
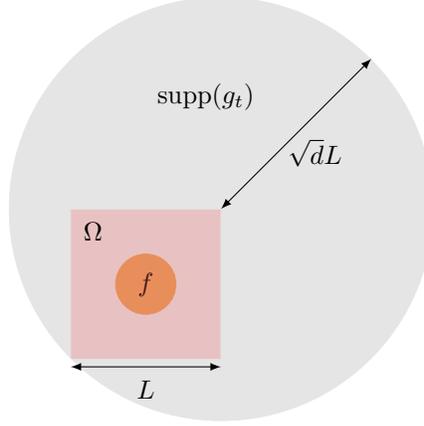
\begin{figure}
    \centering
      \begin{tikzpicture}
      \draw[orange!60,fill=orange!60] (1,1) circle (0.4);
      \node[black] at (1,1) {$f$};
      \draw[white,fill=gray,opacity=0.2] (2,2) circle (2.8284);
      \draw[red!20,fill=red,opacity=0.15] (0,0) rectangle (2,2);
      \node[black] at (0.3,1.7) {$\Omega$};
      \draw[black,latex-latex] (0,-0.1) -- (2,-0.1);
      \node[black] at (1,-0.4) {$L$};
      \draw[black,latex-latex] (2,2) -- (4,4);
      \node[black] at (3.25,2.75) {$\sqrt{d}L$};
      \node[black] at (1.8,3.5) {$\mathrm{supp}(g_t)$};
      \end{tikzpicture}
    \caption{Illustration in dimension two (\ie $d=2$) of the equivalence between~\eqref{eq:lipp} and~\eqref{eq:lippTrunc}.}
    \label{fig:LippTrunc}
\end{figure}

To the best of our knowledge, this observation has to be attributed to Vainikko~\cite{vainikko1997fast} but has then been revitalized by Vico \textit{et al.}~\cite{vico2016fast}. It is essential to a proper discretization of the Lippmann-Schwinger equation~\eqref{eq:lipp}. Specifically, we have that
\begin{equation}
\label{eq:GreenL}
    \widehat{g_{\mathrm{t}}}(\omegad) = \frac{1}{\|\omegad\|^2 - \kb^2}\bigg(1 - \mathrm{e}^{\ii \sqrt{3} L \kb}( \cos(\sqrt{3}L \|\omegad\|) + \ii \kb \sqrt{3}L \,\mathrm{sinc}( \sqrt{3}L\|\omegad\|) ) \bigg)
\end{equation}
for $\|\omegad\| \neq \kb$, which can be extended by continuity as
\begin{equation}\label{eq:GreenL2}
    \widehat{g_{\mathrm{t}}}(\omegad) =\ii \left(\frac{\sqrt{3}L}{2\kb} - \frac{\mathrm{e}^{\ii \sqrt{3} L \kb}}{2\kb^2}\sin(\sqrt{3} L\kb) \right)
\end{equation}
when $\|\omegad\| = \kb$. The practical outcome is that~\eqref{eq:lippTrunc} can now be discretized in the Fourier domain since $\widehat{g_{\mathrm{t}}}$ is a smooth function. 

We now show how  $g_\mathrm{t} \ast v$, for $v \in L_2(\R^3)$, can be numerically evaluated using FFTs and we provide error bounds on the approximation. The proof is provided in Appendix~\ref{proof:thDiscG}.

\begin{figure}[ht]
\centering
\begin{tikzpicture}[anchor=north]
\newcommand{\miewidth}{0.2}
\node[anchor=north] (Mie3D) at (0,0) {\includegraphics[width=\miewidth\textwidth,trim ={3.5cm 1.5cm 5cm 2cm}, clip]{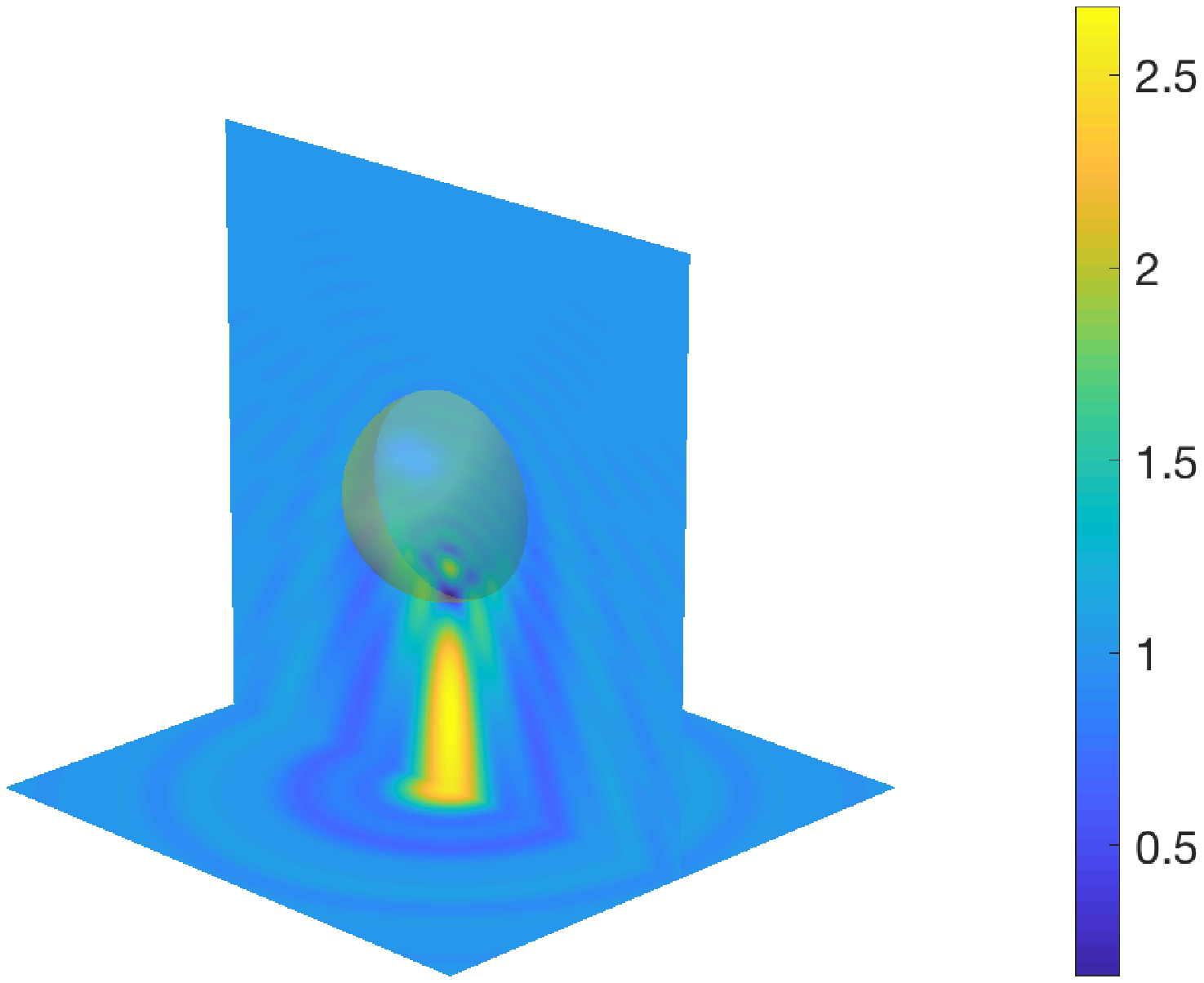}};
\node[anchor=base, baseline = center] (labelMie) at ($(Mie3D.north) - (0,0.5em)$) {Analytical solution};

\node[anchor=west] (Born3D) at ($(Mie3D.east) + (0.5cm,0)$) {\includegraphics[width=\miewidth\textwidth,trim ={3.5cm 1.5cm 5cm 2cm}, clip]{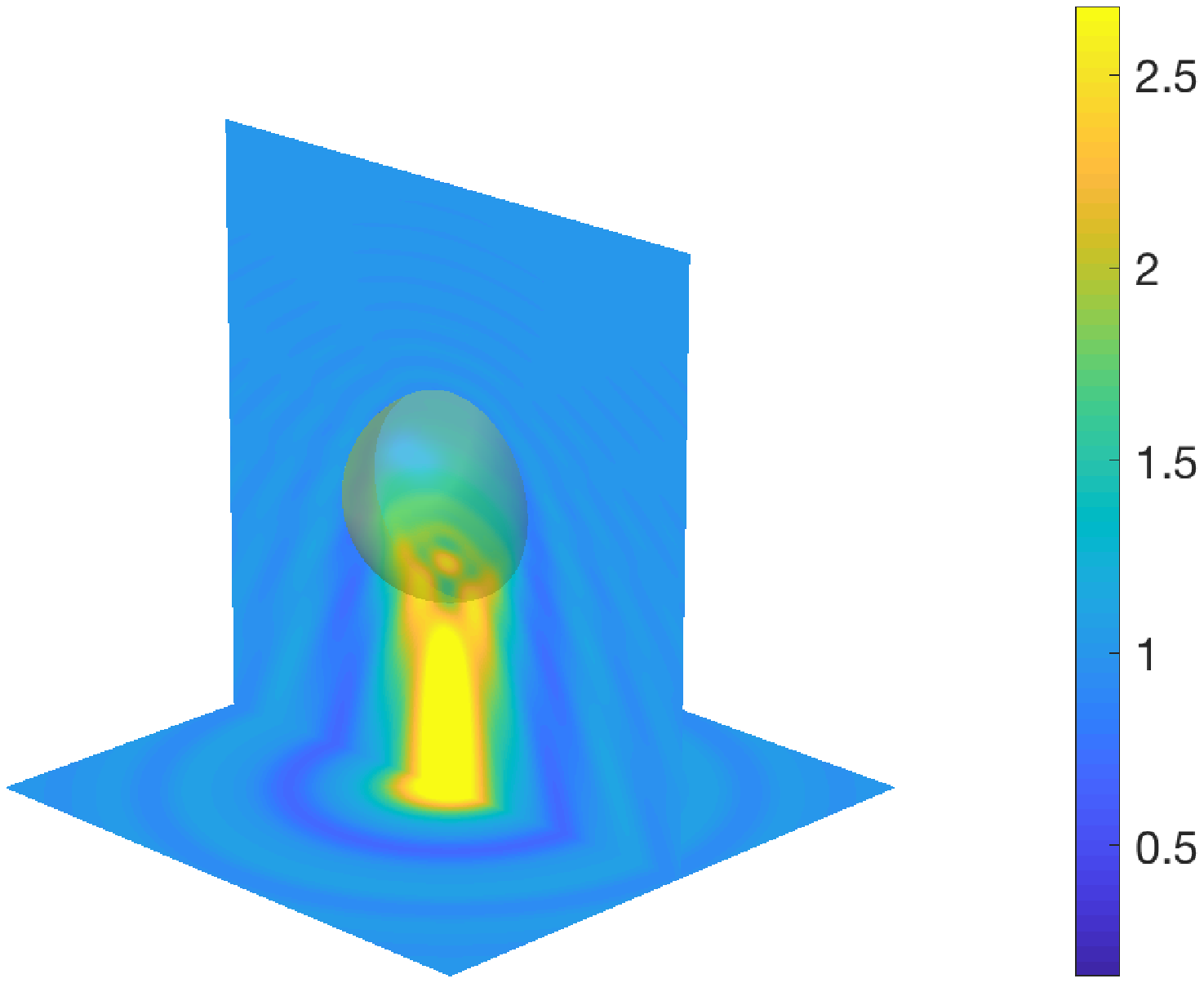}};
\node[anchor=base, baseline = center] (labelBorn) at ($(Born3D.north) - (0,0.5em)$) {Born};

\node[anchor=north] (BPM3D) at ($(Mie3D.south) - (0,0.05cm)$) {\includegraphics[width=\miewidth\textwidth,trim ={3.5cm 1.5cm 5cm 2cm}, clip]{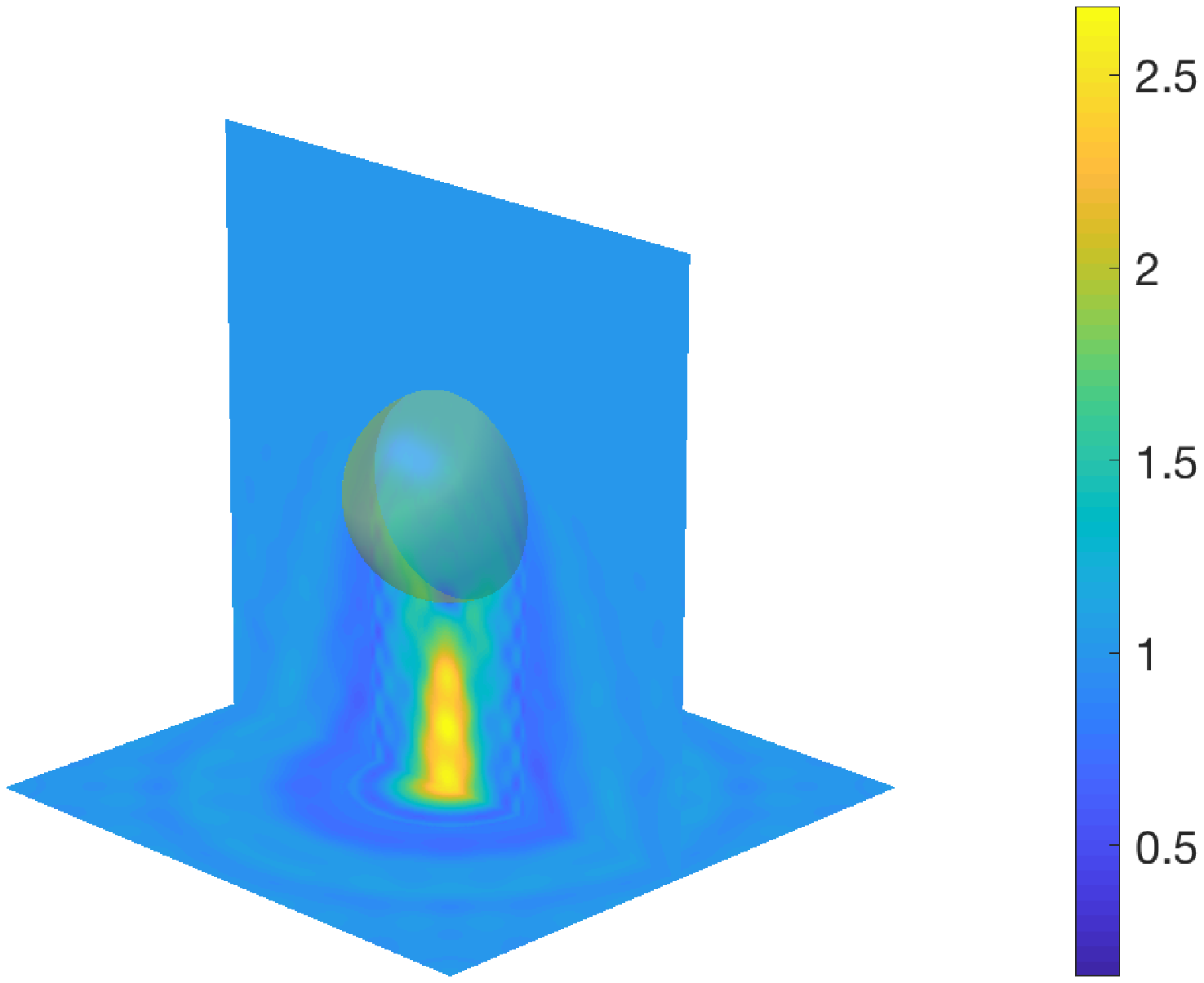}};
\node[anchor=base, baseline = center] (labelBPM) at ($(BPM3D.north) - (0,0.5em)$) {BPM};

\node[anchor=west] (Lipp3D) at ($(BPM3D.east) + (0.5cm,0)$) {\includegraphics[width=\miewidth\textwidth,trim ={3.5cm 1.5cm 5cm 2cm}, clip]{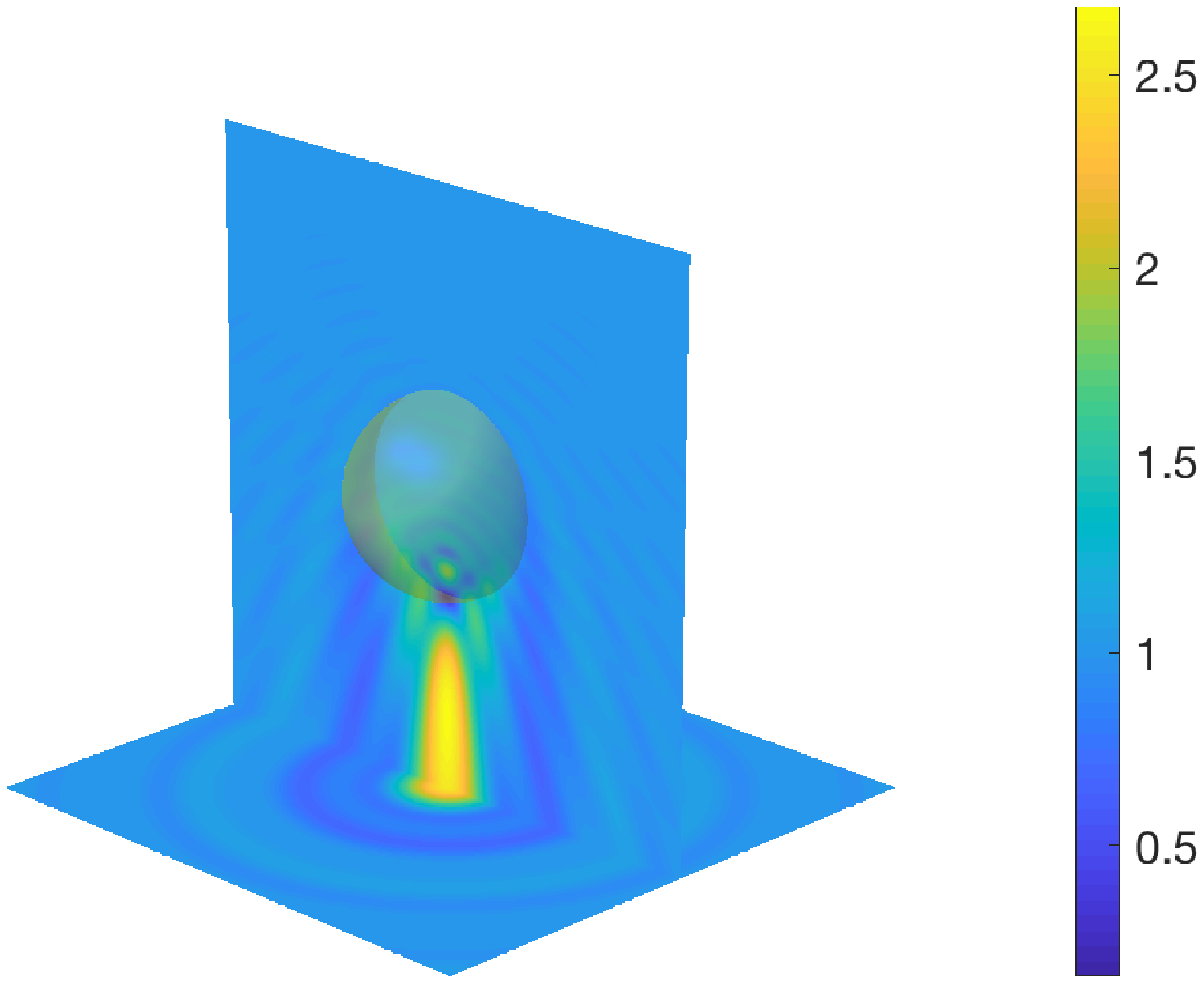}};
\node[anchor=base, baseline = center] (labelLipp) at ($(Lipp3D.north) - (0,0.5em)$) {LS model};
\end{tikzpicture}

\caption{Simulated scattering of a monochromatic wave~($\lambda=532$nm) by a bead embedded in water~($\nb = 1.3388$).
The bead has a diameter of $3\lambda$ and a refractive index of $1.4388$.
The reported total fields are obtained through the analytical solution, the Born model, and the Lippmann-Schwinger iterative forward model for $p=4$ and $h = \lambda/16$ (\ie $n=144$).
}
\label{fig:Mie}
\end{figure}

\begin{theorem}\label{th:DiscG}
   Let $v \in L_2\left([-\frac{L}{2},\frac{L}{2}]^3\right)$ and $\mathbf{v} \in \C^N$ be the sampled version of $v$ using $n>\kb L / \pi$ sampling points in each dimension ($N=n^3$). Let $\mathbf{v}_p$ be the $p$-times zero-padded version of $\mathbf{v}$. Define $h=L/n$ and $\delta = 2\pi/(Lp)$. Then, $\forall \kd \in  [\![\frac{-n}{2}+1;\frac{n}{2}]\!]^3$
     \begin{equation}\label{th:DiscG-eq}
         (\mathbf{G} \mathbf{v})[\kd] = \big(\mathbf{F}^{-1}(\widehat{\mathbf{g}_{\mathrm{t}}} \odot \widehat{\mathbf{v}_p}) \big)[\kd],
    \end{equation}
where  $\widehat{\mathbf{g}_{\mathrm{t}}} = (\widehat{g_{\mathrm{t}}}(\delta \qd))_{\qd \in [\![\frac{-np}{2}+1;\frac{np}{2}]\!]^3}$ and $\widehat{\mathbf{v}_p} = \mathbf{F}\mathbf{v}_p$.

Moreover, if $v$ has ($q-1$) continuous derivatives for $q\geq 3$ and a qth derivative of bounded variations, we have the  error bound
\begin{equation}\label{th:DiscG-error}
    |(g_\mathrm{t} \ast v)(h\kd) - (\mathbf{G} \mathbf{v})[\kd]| \leq \frac{C^\mathrm{tr}}{n^q} + \frac{C^\mathrm{al}}{n^{q-2}} + \frac{C^\mathrm{tp}}{p^2},
\end{equation}
where  $C^\mathrm{al}$, $C^\mathrm{tr}$, and $C^\mathrm{tp}$ are positive constants that are associated to the errors due to the aliasing in $\mathbf{v}$, the truncation of the Fourier integral, and the trapezoidal quadrature rule used to approximate this integral, respectively. 
\end{theorem}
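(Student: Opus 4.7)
The proof has two distinct components: the discrete identity \eqref{th:DiscG-eq} and the error bound \eqref{th:DiscG-error}. My plan is to derive both from a single chain of three successive approximations to the continuous convolution, with each approximation step contributing exactly one of the three terms in the error bound.

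The starting point is the Fourier representation $(g_\mathrm{t}\ast v)(\xd)=\frac{1}{(2\pi)^3}\int_{\R^3}\widehat{g_\mathrm{t}}(\omegad)\,\hat v(\omegad)\,\mathrm{e}^{\ii\omegad\cdot\xd}\,\dd\omegad$. I would then perform, in order: (i) replace $\hat v(\omegad)$ by $h^3\sum_{\nd\in[\![-n/2+1;n/2]\!]^3} v(h\nd)\mathrm{e}^{-\ii h\omegad\cdot\nd}$, the trapezoidal/Riemann-sum approximation of $\int_{[-L/2,L/2]^3}v(\yd)\mathrm{e}^{-\ii\omegad\cdot\yd}\,\dd\yd$ (this uses that $\mathrm{supp}(v)\subset\Omega$); (ii) restrict the outer $\omegad$-integral to the cube $[-\pi n/L,\pi n/L]^3$; (iii) apply a 3D trapezoidal quadrature in $\omegad$ with spacing $\delta=2\pi/(Lp)$. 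Step (iii) produces a finite sum over exactly $(np)^3$ frequency nodes $\delta\qd$ with $\qd\in[\![-np/2+1;np/2]\!]^3$, and the inner sum from step (i) becomes $\widehat{\mathbf{v}_p}[\qd]$ because zero-padding $\mathbf{v}$ by factor $p$ does not change the Fourier sum but finely samples its continuous-frequency image. A short bookkeeping check (using $(np)^3 h^3=(Lp)^3$ and $\delta^3/(2\pi)^3=1/(Lp)^3$) shows the resulting expression coincides exactly with $\mathbf{F}^{-1}(\widehat{\mathbf{g}_\mathrm{t}}\odot\widehat{\mathbf{v}_p})[\kd]$ evaluated at $\xd=h\kd$, which proves \eqref{th:DiscG-eq}.

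For the error bound I would control the three steps independently. Step~(i) is a Poisson-summation residual: its difference with $\hat v(\omegad)$ equals $\sum_{\kd\neq 0}\hat v(\omegad+2\pi\kd/h)$, and the hypothesis that $v$ has $q-1$ continuous derivatives with a BV $q$th derivative yields $|\hat v(\omegad)|\lesssim(1+\|\omegad\|)^{-q}$ after $q$ integrations by parts; summing this over the aliased lattice and integrating against $\widehat{g_\mathrm{t}}$ over the truncation cube gives the $C^{\mathrm{al}}/n^{q-2}$ contribution, the exponent being shifted by $2$ because the available $\|\omegad\|^{-2}$ decay of $\widehat{g_\mathrm{t}}$ (read off \eqref{eq:GreenL}) is too weak to fully absorb the Lebesgue volume of the cube. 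Step~(ii) is the tail $\int_{|\omegad|>\pi n/L}|\widehat{g_\mathrm{t}}\,\hat v|\,\dd\omegad$, and combining the two decay rates in spherical coordinates produces the $C^{\mathrm{tr}}/n^q$ term. For step~(iii) I would invoke the standard trapezoidal-rule bound on a compact cube, whose error is $O(\delta^2)$ times a norm of second partial derivatives of $\widehat{g_\mathrm{t}}(\omegad)\hat v(\omegad)\mathrm{e}^{\ii\omegad\cdot h\kd}$; since $\delta=2\pi/(Lp)$ and the integration domain depends on $n$ but not on $p$, this yields $C^{\mathrm{tp}}/p^2$ (absorbing the $n$-dependence into the constant).

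The main obstacle will be the second step of the error analysis, namely justifying that the aliasing error remains uniform in $\kd$ and that the factor-of-$n^2$ loss really is the right accounting, not better and not worse. This requires a careful tracking of which decay ($|\omegad|^{-2}$ from $\widehat{g_\mathrm{t}}$, $|\omegad|^{-q}$ from $\hat v$) is responsible for which exponent at each stage, and in particular using the BV hypothesis sharply rather than upgrading it to an extra derivative. Beyond that, the remaining work is bookkeeping: checking that the three error sources are additive in the stated sense and extracting explicit constants $C^{\mathrm{al}}$, $C^{\mathrm{tr}}$, $C^{\mathrm{tp}}$ depending on $\|v\|$, $L$, $\kb$, and the total-variation norm of $v^{(q)}$, which then yields \eqref{th:DiscG-error}.
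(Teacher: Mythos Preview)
Your three-step decomposition and the overall strategy match the paper's proof. Two of your error computations, however, do not go through as written.

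First, the Fourier decay you quote is one order too weak. From $(q-1)$ continuous derivatives and a BV $q$th derivative, Lemma~\ref{lemma:DecayFourier} extracts $|\hat v(\omegad)|\leq C\|\omegad\|^{-(q+1)}$, not $\|\omegad\|^{-q}$; the extra order comes precisely from the BV hypothesis (one more ``integration by parts'' against a finite measure). With only $\|\omegad\|^{-q}$ the tail integral in your step~(ii) yields $n^{-(q-1)}$ instead of $n^{-q}$, and the aliasing lattice sum $\sum_{\md\neq\mathbf 0}\|\md\|^{-q}$ diverges at the endpoint $q=3$, so the theorem would fail there.

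Second, your accounting for the aliasing exponent is not the paper's mechanism and does not actually produce the stated rate. The paper does \emph{not} invoke the $\|\omegad\|^{-2}$ decay of $\widehat{g_{\mathrm t}}$ for this term; it uses only $|\widehat{g_{\mathrm t}}|\leq\|\widehat{g_{\mathrm t}}\|_\infty$. The per-node aliasing error is $Ch^{q+1}$ (Lemma~\ref{lemma:aliasing}, via Poisson summation with the $(q+1)$-decay above), and multiplying by the volume factor $\delta^3(np)^3/(2\pi)^3=h^{-3}$ gives $C\|\widehat{g_{\mathrm t}}\|_\infty h^{q-2}$. Your route---$h^q$ per node combined with $\|\omegad\|^{-2}$ decay---yields $h^{q-1}$ if you actually integrate that decay over the cube, or $h^{q-3}$ if you use the sup-norm; neither is $h^{q-2}$.

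Two smaller points. Performing (i) before (ii) leaves an intermediate $\int_{\R^3}\widehat{g_{\mathrm t}}(\omegad)\cdot(\text{periodic trig.\ poly.})\cdot\mathrm{e}^{\ii\omegad^T\xd}\,\dd\omegad$ that is not absolutely convergent in three dimensions, since $\widehat{g_{\mathrm t}}$ decays only like $\|\omegad\|^{-2}$; the paper truncates and applies quadrature first (on the exact $\hat v$) and only then substitutes the DFT values, so every intermediate is a finite sum. Also, the trapezoidal rule on $[-\pi/h,\pi/h]^3$ has $(np+1)^3$ nodes with boundary weights $1/2,1/4,1/8$; the paper needs a short symmetry argument (radial symmetry of $\widehat{g_{\mathrm t}}$ plus periodicity of $\widehat{\mathbf v_p}$) to collapse this to the standard $(np)^3$-point inverse DFT appearing in~\eqref{th:DiscG-eq}, a step your ``short bookkeeping check'' glosses over.
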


\begin{remark}
   Equation~\eqref{th:DiscG-eq} is hiding a cropping operation. Indeed, the result of $\mathbf{F}^{-1}(\widehat{\mathbf{g}_{\mathrm{t}}} \odot \widehat{\mathbf{v}_p})$ is defined on the grid $[\![\frac{-np}{2}+1;\frac{np}{2}]\!]^3$ but we only retain the elements that belong to $ [\![\frac{-n}{2}+1;\frac{n}{2}]\!]^3$.  
\end{remark}

\begin{remark}
   The assumption that $n>\kb L / \pi \Leftrightarrow \kb < \pi /h$ ensures that the ``peaks'' of $|\hat{g_{\mathrm{t}}}(\omegad)|$ for $\|\omegad\| = \kb$ are included in the frequency domain associated to the DFT (\ie $[-\pi/h,\pi/h]^3$).  This is a natural and minimal requirement to reduce the approximation error. 
\end{remark}

  From Theorem~\ref{th:DiscG}, one sees that the number of sampling points $n$ controls both the aliasing error and the error due to the truncation of the Fourier integral. It is noteworthy that these bounds decrease with the smoothness of $v$ (\ie $q$).   On the other hand, the padding factor $p$ controls the error that results from the trapezoidal quadrature rule. 
  
  \begin{remark}\label{remark:p4}
       A simple argument suggests that the padding factor should be at least $p=4$  to properly capture the oscillations of $\widehat{g_{\mathrm{t}}}$. Indeed, in the spatial domain, the diameter of the support of $g_{\mathrm{t}}$ is $2\sqrt{3}L \approx 3.4 L$. Hence, in order to satisfy the Shannon-Nyquist criterion, the considered spatial domain should be at least of size $4L$, which corresponds to a padding factor  $p=4$.
  \end{remark}

To assess the practical accuracy of the implementation of $\Gd$ provided by Theorem~\ref{th:DiscG}, we consider  the interaction of a plane wave with a bead. For this particular setting, an analytical expression of the scattered field is known~\cite{devaney2012mathematical}. In Figure~\ref{fig:Mie} we compare the theoretical scattered field with the scattered field obtained through the  Born linearization, the BPM, and the resolution of~\eqref{eq:lipp1} with $\Gd$ implemented according to Theorem~\ref{th:DiscG}. One can appreciate the gain in accuracy that the proposed method brings over the standard approximations used in~ODT.

\subsubsection*{Memory Savings}  
 According to Theorem~\ref{th:DiscG}, an accurate computation of the field inside $\Omega$ requires one to zero-pad the volume~$\mathbf{v}$. From Remark~\ref{remark:p4}, we should set at least  $p=4$. This can lead to severe computational and memory issues for the reconstruction of large 3D volumes. Fortunately, as mentioned in~\cite{vico2016fast}, this computation can be reformulated as a discrete convolution with a modified kernel that only involves the twofold padding $p=2$. We summarize this result in Proposition~\ref{th:ReducedMem} and provide a detailed proof in Appendix~\ref{proof:thReducedMem}. Moreover, we provide an expression of the modified kernel that reveals how one can build it directly on the grid $[\![-n+1;n]\!]^3$.
 
\begin{proposition}\label{th:ReducedMem}
   Let $p \in 2 \N \setminus \{0\}$. Then, $\forall \kd \in [\![\frac{-n}{2}+1;\frac{n}{2}]\!]^3$, we have that
   \begin{equation}
    \big(\mathbf{F}^{-1}(\widehat{\mathbf{g}_{\mathrm{t}}} \odot \widehat{\mathbf{v}_{\mathrm{p}}}) \big) [\kd] = \big(\mathbf{F}^{-1}(\widehat{{\mathbf{g}}^\mathrm{m}_{\mathrm{t}}} \odot \widehat{\mathbf{v}_2}) \big) [\kd] ,
\end{equation}
where ${\mathbf{v}_2}$ is a twofold zero-padded version of $\mathbf{v}$, and ${\mathbf{g}}^\mathrm{m}_{\mathrm{t}}$ is the modified kernel
\begin{equation}
      \mathbf{g}_{\mathrm{t}}^\mathrm{m}[\kd]= \frac{8}{p^3} \mkern-10mu \sum_{\sd \in [\![0; \frac{p}{2}-1]\!]^3} \mkern-10mu \mathbf{F}^{-1}(\widehat{\mathbf{g}_{\mathrm{t}}}[\textstyle \frac{p}{2}\cdot-\sd] )[\kd] \,  \mathrm{e}^{ \frac{-2 \ii \pi}{np} \kd^T \sd} ,
\end{equation}
\end{proposition}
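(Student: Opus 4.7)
The strategy is to reduce the claimed equality to a pointwise identity between the two kernels on a small shift set. Since $\mathbf{v}_p$ vanishes outside the central cube $[\![-n/2+1;n/2]\!]^3$ and we read the output at $\kd$ in that same cube, the cyclic convolution on the $(pn)^3$ grid represented by $\mathbf{F}^{-1}(\widehat{\mathbf{g}_\mathrm{t}}\odot\widehat{\mathbf{v}_p})[\kd]$ coincides with the aperiodic convolution $\sum_{\mathbf{j}} \mathbf{g}_\mathrm{t}[\kd-\mathbf{j}]\mathbf{v}[\mathbf{j}]$, where each shift $\kd-\mathbf{j}$ lies in $[\![-n+1;n-1]\!]^3$ and $\mathbf{g}_\mathrm{t}[\cdot]$ denotes the full $(pn)^3$-point inverse DFT of $\widehat{\mathbf{g}_\mathrm{t}}$. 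The same argument applies to the cyclic convolution on the $(2n)^3$ grid with $\mathbf{v}_2$: no wraparound occurs because $[\![-n+1;n-1]\!]^3$ is contained in $[\![-n+1;n]\!]^3$. It therefore suffices to prove the pointwise identity $\mathbf{g}_\mathrm{t}^\mathrm{m}[\md]=\mathbf{g}_\mathrm{t}[\md]$ for every $\md \in [\![-n+1;n-1]\!]^3$.

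To establish this identity, the plan is to unfold the definition of $\mathbf{g}_\mathrm{t}^\mathrm{m}$ via the $(2n)^3$-point inverse DFT and then collapse the resulting nested sums by a change of variables. Working one coordinate at a time, the typical univariate term reads
\begin{equation*}
\frac{2}{p}\sum_{s=0}^{p/2-1}\frac{1}{2n}\sum_{\ell=-n+1}^{n}\widehat{\mathbf{g}_\mathrm{t}}\!\left[\tfrac{p}{2}\ell - s\right]\exp\!\left(\frac{2\pi\ii}{2n}k\ell - \frac{2\pi\ii}{np}ks\right).
\end{equation*}
Setting $\ell' = \tfrac{p}{2}\ell - s$, the two pieces of the exponent collapse to the single phase $\frac{2\pi\ii}{pn}k\ell'$. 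Using that $p$ is even, the map $(\ell,s)\mapsto \ell'$ is a bijection from $[\![-n+1;n]\!]\times[\![0;p/2-1]\!]$ onto $[\![-pn/2+1;pn/2]\!]$, so the double sum reassembles into
\begin{equation*}
\frac{1}{pn}\sum_{\ell'=-pn/2+1}^{pn/2}\widehat{\mathbf{g}_\mathrm{t}}[\ell']\,e^{\frac{2\pi\ii}{pn}k\ell'}=\mathbf{g}_\mathrm{t}[k].
\end{equation*}
The three-dimensional case follows by separability: the sum over $\sd\in[\![0;p/2-1]\!]^3$ factorizes across the three axes, the prefactor $(2/p)^3=8/p^3$ matches the one in the proposition, and the phase $e^{-\frac{2\pi\ii}{np}\kd^T\sd}$ tensors correctly into univariate factors.

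The main obstacle, and arguably the only non-routine point, is the index bookkeeping: one must verify that $(\ell,s)\mapsto \tfrac{p}{2}\ell-s$ is indeed a bijection onto $[\![-pn/2+1;pn/2]\!]$ — this is where the assumption $p\in 2\N\setminus\{0\}$ is essential, so that $p/2$ is an integer and consecutive values of $\ell$ produce disjoint blocks of $p/2$ outputs — and that the two exponent pieces cancel exactly, rather than modulo a multiple of $2\pi\ii$. Once those verifications are in place, the proposition follows, and the practical gain (halving the padding factor from $p$ to $2$ for every forward-model evaluation, at the one-time cost of precomputing $\mathbf{g}_\mathrm{t}^\mathrm{m}$ on the $(2n)^3$ grid) is immediate.
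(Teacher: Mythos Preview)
Your proposal is correct and rests on the same key step as the paper's proof: the change of variables $\ell'=\tfrac{p}{2}\ell-s$ (in each coordinate) that bijects $[\![-n+1;n]\!]\times[\![0;p/2-1]\!]$ onto $[\![-np/2+1;np/2]\!]$ and collapses the two phase factors into the single $(np)$-point inverse-DFT exponent. The only difference is organizational: the paper expands $\widehat{\mathbf{v}_p}$ inside the full expression and carries $\mathbf{v}$ through the computation, arriving at the convolution with $\mathbf{g}_\mathrm{t}^\mathrm{m}$ directly, whereas you first invoke the convolution theorem on both sides, observe that no wraparound occurs (since all relevant shifts lie in $[\![-n+1;n-1]\!]^3$), and thereby reduce the statement to the pointwise kernel identity $\mathbf{g}_\mathrm{t}^\mathrm{m}[\md]=\mathbf{g}_\mathrm{t}[\md]$ before doing the same index manipulation. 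Your route is slightly more modular and makes the role of the twofold padding (guaranteeing the aperiodic-equals-cyclic reduction) more explicit, but mathematically the two arguments are the same.
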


\subsection{Green's Function Discretization for the Measurements:~\texorpdfstring{$\Gtildd$}{$\mathbf{G}$}} \label{sec:far-field}

In  works dedicated to the 2D ODT problem, $\Gtildd \in \C^{M \times N}$ is sometimes accessible explicitly~\cite{liu2017seagle,soubies2017efficient,ma2018accelerated}. By contrast, the scale of the 3D ODT problem prevents this in the present work. 
Fortunately,  we are only interested in the evaluation of the total field at the $M$ voxels of the camera plane. By exploiting this planarity, we can significantly reduce the memory and the computational burden of the evaluation of $\Gtildd \mathbf{v}$.

Let $x_{\Gamma} >0$ be the axial position of the measurement plane $\Gamma$ (\ie $\forall \xd \in \Gamma$, $x_3 = x_\Gamma$). Then, letting $v=f \cdot  u$ and expressing the integral in~\eqref{eq:lipp} using a numerical quadrature along the third dimension, we get, $\forall \xd =(x_1,x_2,x_\Gamma) \in \Gamma$,
\begin{equation}\label{eq:Gtild-1}
     (g \ast v)(\xd) = \mkern-15mu  \sum_{k=-\frac{n}{2}+1}^{\frac{n}{2}} h  \int_{[\frac{-L}{2},\frac{L}{2}]^2}  \mkern-5mu  g (\xd - \zd_k) v(\zd_k) \,\dd{z_{k_1}} \dd{z_{k_2}},
\end{equation}
where $\zd_k = (z_{k_1},z_{k_2},kh)$.

From~\eqref{eq:Gtild-1}, $g \ast v$ is computed as a sum of 2D aperiodic convolutions. Considering that the sampling step at the camera plane $\Gamma$ is identical to that of the volume $\Omega$, the 2D convolutions in~\eqref{eq:Gtild-1} is evaluated in the same way as described in Theorem~\ref{th:DiscG}. 
This strategy reduces the computational complexity of the application of $\Gtildd$ to $\mathcal{O}(nM\log(M))$.
Note that, if the sampling step at the camera plane  is $q$ times that of the volume (\ie $h'= qh$, $q \in \N$), one can simply downsample the result of the above procedure by $q$.

\begin{figure}[t]
\centering 
\begin{adjustbox}{width=0.475\textwidth}
\begin{tikzpicture}[scale = 1,font = \huge]
\filldraw[fill=lightgray] (-2,-2.5) -- (-2,1) --  (-0.5,3) -- (-0.5,-0.5) -- (-2,-2.5);
\draw (-0.3,3) node[anchor=south] {$y^\mathrm{in}=a(\cdot)\mathrm{e}^{\ii  (\cdot)^T\tilde{\kd}^\mathrm{in} }$};

\draw[-latex, line width=2pt] (0,0) node[anchor=east]{} -- (3.5,0) node[anchor=west]{};
\draw (1.75, -1.2) node[anchor=south] {Propagation};

\filldraw[fill=lightgray,name] (4,-2.5) -- (4,1) --  (5.5,3) -- (5.5,-0.5) -- (4,-2.5);
\draw (5,3) node[anchor=south] (target) {$u^\mathrm{in}_\mathrm{prop}$};

\node[inner sep=0pt, anchor = west] (ideal) at (6.5,0) {\includegraphics[trim = {3.93 cm} {9.08 cm} {4.88 cm} {8.45 cm}, clip = true, width=0.4\textwidth]{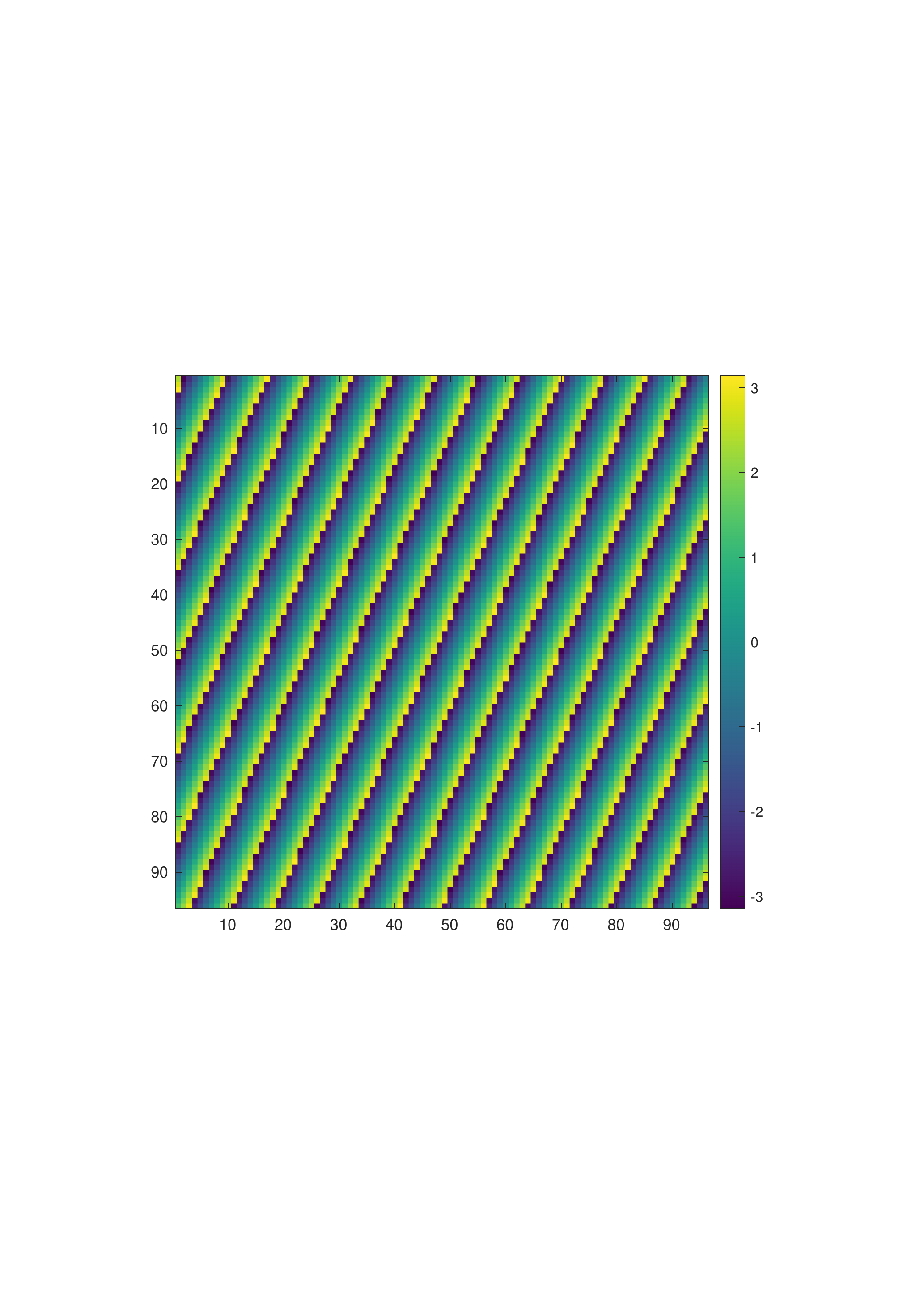} };
\draw ($(ideal.north)+(0,0.1)$) node[align=center,anchor=south] {$\mathrm{Arg}(u^\mathrm{in}_\mathrm{true})$};


\node[inner sep=0pt, anchor = west] (diff1) at ($(-2.5,-9.3)$) {\includegraphics[trim = {3.93 cm} {9.08 cm} {4.88 cm} {8.45 cm}, clip = true, width=0.4\textwidth]{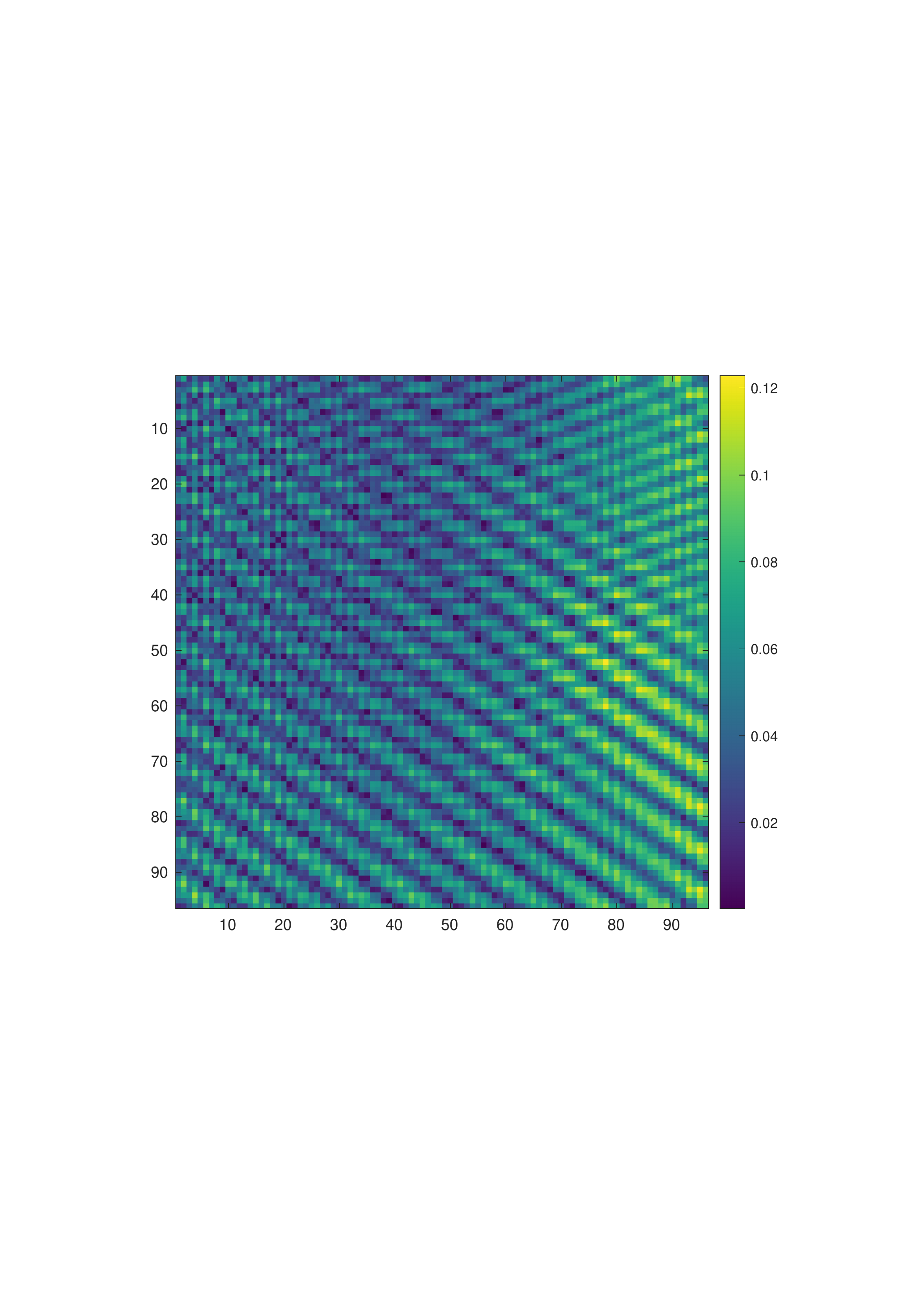} };
\draw ($(diff1.north) + (0,0.2)$) node[anchor=south,align=center] (diff1text) {AS without tilt transfer};

\node[inner sep=0pt, anchor = west] (diff2) at ($(diff1.east) + (1.7,0)$) {\includegraphics[trim = {3.93 cm} {9.08 cm} {4.88 cm} {8.45 cm}, clip = true,
width = 0.4\textwidth]{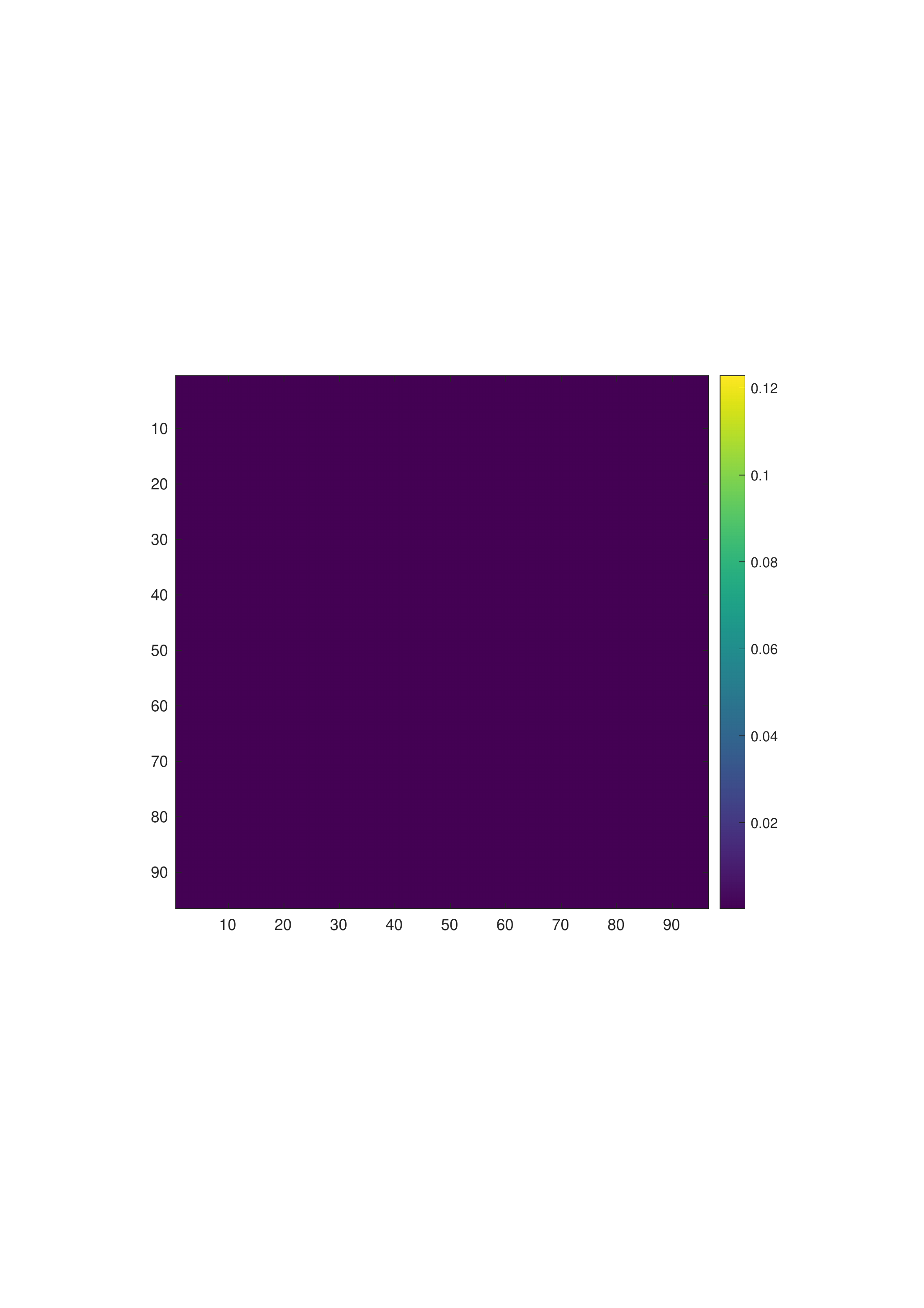} };
\draw ($(diff2.north) + (0,0.2)$) node[anchor=south,align=center] (diff2text) {AS with tilt transfer};
\begin{axis}[at = {($(diff2.north west) + (5,0)$)}, anchor = north west,
    hide axis,
    colormap/viridis,
    colorbar,
    point meta min=0,
    point meta max=0.1228,
    colorbar style={
        height=7.2cm,
        width=0.25cm,
        ytick={0,0.04,...,0.14},
        tick label style={/pgf/number format/fixed},
        /pgf/number format/precision=3
    }]
    \addplot [draw=none] coordinates {(0,0)};
\end{axis}
\end{tikzpicture}
\end{adjustbox}
\caption{Propagation of the incident field.
Top: Scheme of the numerical experiment (left) and phase of the expected propagated field (right).
Bottom: Error map  $| u^\mathrm{in}_\mathrm{true} - u^\mathrm{in}_\mathrm{prop}|$.
\label{fig:3dinc}}
\end{figure}

\subsection{Free-Space Propagation and Pupil Function: $\Pd$}\label{sec:defocus}

  The last matrix to describe in~\eqref{eq:lipp2} is $\Pd$. It models the lowpass filtering behavior of the microscope and can also be used to perform a free-space propagation of the field. For instance, this is required for the acquisition setup described in Section~\ref{sec:ReconsReal}. Hence, $\mathbf{P}$ corresponds to the discrete convolution operator associated to the continuously defined kernel  $p \in L_2(\R^2)$ that depends on the point-spread function (PSF) of the system as well as the considered propagation kernel. Although the output of $\Gtildd$ (scattered field on $\Gamma$) is not compactly supported, it enjoys fast decay, which allows us to apply $\Pd$  via a FFT with suitable padding.
  

\subsection{Computation of the 3D Incident Field: $\uind$}\label{sec:uin3D}

The evaluation of the forward model~\eqref{eq:lipp1} and~\eqref{eq:lipp2} at a given point $\fd \in \R^\Nf$ requires the knowledge of the 3D incident field $\uind \in \C^N$. Here, we propose to build this volume through the free-space propagation of the 2D measurement $\yd^\mathrm{in}\in \C^M$ of this field at the detector plane~$\Gamma$. This is possible as the area of~$\Gamma$ is assumed to be larger than that of a face of the volume~$\Omega$ since $L \leq \tilde{L}$.

Let us denote by $y^\mathrm{in} : \Gamma \rightarrow \C$ the continuous version of $\yd^\mathrm{in}$ to simplify the presentation. Then, we get from the angular spectrum method~\cite{goodman2005introduction} that, $\forall \xd = (x_1,x_2,x_3) \in \Omega$,
\begin{equation}\label{eq:propUin}
    u^\mathrm{in}(\xd) =( p_{x_3} \ast y^\mathrm{in})(x_1,x_2).
\end{equation}
There,  $p_{x_3}$ is the propagation kernel that is defined in the Fourier domain by 
\begin{equation}
    \hat{p}_z(\omegad)= \exp \left(-\ii (x_{\Gamma}-z)\sqrt{ \kb -(\omega_1^2+\omega_2^2)} \right),
\end{equation}
where $x_{\Gamma}$ denotes the position of the measurement plane $\Gamma$.

Because both the propagation kernel and the measured incident field are not compactly supported, a naive computation of the aperiodic convolution in~\eqref{eq:propUin} would introduce significant errors within the estimated volume $u^\mathrm{in}$. The difficulty lies in the way of properly extending the measured field $y^\mathrm{in}$ outside $\Gamma$ to ensure that the result of the convolution inside $\Omega$ is valid. For instance, a zero padding or a simple periodization are not satisfactory as they would introduce large discontinuities in the amplitude and/or the phase of $y^\mathrm{in}$.

Instead, let us inject in~\eqref{eq:propUin} the expression of $y^\mathrm{in}(\xd) = a(\xd) \exp(\ii \xd^T\tilde{\kd}^\mathrm{in})$, where $a : \Gamma \rightarrow \C$ is the complex amplitude of the field and $\tilde{\kd}^\mathrm{in} = (k^\mathrm{in}_{1},k^\mathrm{in}_{2})$ corresponds to the restriction of the wave vector $\kd^\mathrm{in} \in \R^3$ to its first two components, leading to
\begin{align}
    u^\mathrm{in}(\xd) &= \left(p_{x_3} \ast a(\cdot) \mathrm{e}^{\ii (\cdot)^T  \tilde{\kd}^\mathrm{in}}\right)(\tilde{\xd}) \notag \\
    &= \frac{1}{(2\pi)^2}\int_{\R^2} \widehat{p_{x_3}}(\omegad)\hat{a}(\omegad-\tilde{\kd}^\mathrm{in})\mathrm{e}^{\ii \omegad^T \tilde{\xd}}\dd\omegad \notag \\
    &=\frac{\mathrm{e}^{\ii \tilde{\xd}^T \tilde{\kd}^\mathrm{in}}}{(2\pi)^2}\int_{\R^d} \widehat{p_{x_3}}(\omegad + \tilde{\kd}^\mathrm{in})
    \hat{a}(\omegad)\mathrm{e}^{\ii \omegad^T \tilde{\xd}}\dd\omegad \notag \\
    &=   \mathrm{e}^{\ii \tilde{\xd}^T \tilde{\kd}^\mathrm{in} } \left( a \ast p_{x_3}(\cdot)\mathrm{e}^{-\ii (\cdot)^T  \tilde{\kd}^\mathrm{in}}   \right)(\tilde{\xd}),\label{eq:propUin-2}
\end{align}
with $\tilde{\xd}=(x_1,x_2)$ and $\omegad = (\omega_1,\omega_2) \in \R^2$.
Hence,~\eqref{eq:propUin} can be equivalently expressed as a 2D aperiodic convolution of the complex amplitude $a$ with the kernel $p_{x_3}(\cdot)\mathrm{e}^{-\ii (\cdot)^T \tilde{\kd}^\mathrm{in}}$, followed by a modulation in the space domain.
This approach is called tilt transfer because the shift of $y^\mathrm{in}$ in the Fourier domain is transferred to the propagation kernel~\cite{ritter2014modified,guo2014diffraction}.
The advantage of this formulation is that, by contrast to $y^\mathrm{in}$, the complex amplitude $a$ is not far from a constant signal, up to some noise and optical aberrations. Hence,we compute~\eqref{eq:propUin-2} using a  periodic convolution with minor discretization artifacts. 

The advantage of this approach is illustrated in Figure~\ref{fig:3dinc} where we propagate a slice of an ideal tilted plane wave $y^\mathrm{in}$ using the angular spectrum method with and without tilt transfer. The difference between the expected incident field~$u^\mathrm{in}_\mathrm{true}$ and the propagated field $u^\mathrm{in}_\mathrm{prop}$ is depicted in the bottom panel. Clearly, the tilt transfer allows one to significantly reduce the discretization errors and attenuate the aliasing artifacts.

\section{Reconstruction Framework}\label{sec:probform}
\subsection{Problem Formulation}

We adopt a standard variational formulation to recover the scattering potential $\fd$ from the $Q$ scattered fields $\{\yd^\mathrm{sc}_q\}_{q=1}^Q$ that are recorded when the sample is impinged with the incident fields $\{\uind_q\}_{q=1}^Q$. Specifically, the reconstructed $\fd^\ast$ is specified~as
\begin{equation}\label{eq:prob}
     \fd^\ast  \in   \bigg\lbrace \argmin{\fd \in \R^N}{  \bigg( \sum_{q=1}^Q \frac{1}{2\|\yd_q^\mathrm{sc}\|^2}\|\mathbf{H}_q(\fd) - \yd_q^\mathrm{sc} \|^2 + \tau\mathcal{R}(\fd) + i_{\geq 0}(\fd) \bigg)} \bigg\rbrace.
\end{equation}
In~\eqref{eq:prob}, $\mathbf{H}_q : \R^N \rightarrow \C^M$ denotes the forward model described by~\eqref{eq:lipp1} and \eqref{eq:lipp2} for the $q$th incident wave $\uind_q$,  $\mathcal{R}:\R^N \mapsto \R_{\geq 0}$ is a regularization functional, and  $\tau >0$ balances between data fidelity and regularization. The term $i_{\geq 0}(\fd) = \{ 0, \fd \in (\R_{\geq 0})^\Nf; \, +\infty, \text{ otherwise}\}$ is a nonnegativity  constraint that is suitable for our applications. For other applications that involve inverse scattering, this term is modified to constrain the scattering potential to a given range of values. Such priors have  been shown to significantly improve the quality of the reconstruction~\cite{lim2015comparative, sung2009optical}.  
Finally, we consider as regularizer $\mathcal{R}$ either the total-variation seminorm~\cite{Rudin1992} or the Hessian-Schatten norm~\cite{Lefkimmiatis2013}.

\subsection{Optimization} \label{sec:opti}

Following~\cite{liu2017seagle,soubies2017efficient,ma2018accelerated}, we deploy an accelerated forward-backward splitting (FBS) algorithm~\cite{beck2009fast,Nesterov2013} to solve the optimization problem~\eqref{eq:prob}. The iterates are summarized in Algorithm~\ref{Algo:FISTA}, with some further details below.

\begin{algorithm}[t]
		\caption{Accelerated FBS~\cite{beck2009fast,Nesterov2013} for solving \eqref{eq:prob} }\label{Algo:FISTA}
	\begin{algorithmic}[1]
		\REQUIRE  $\fd^0 \in \R^N$, $(\gamma_k >0)_{k \in \N \setminus \{0\}} $
		\STATE $\mathbf{v}^1=\fd^0$
		\STATE $\alpha_1=1$
		\STATE $k=1$
		\WHILE{(not converged)}
		    \STATE Select a subset $\mathcal{Q} \subset [1 \ldots Q]$ \label{alg:stochastic}
			\STATE$\disp\mathbf{d}^{k}= \sum_{q \in \mathcal{Q}} \frac{1}{\|\yd_q^\mathrm{sc}\|^2}\Re\left( \mathbf{J}_{\mathbf{H}_q}^\ast(\fd^k)  \left(\mathbf{H}_q(\fd^k) - \yd_q^\mathrm{sc} \right) \right)$ \label{alg:grad}
			\STATE$\fd^{k}=\mathrm{prox}_{\gamma_k \tau \mathcal{R} + i_{\geq 0}}\left( \mathbf{v}^k - \gamma_k \mathbf{d}^k \right)$ \label{alg:prox}
			\STATE $\displaystyle \alpha_{k+1} \leftarrow \frac{1 + \sqrt{1+4\alpha_k^2}}{2}$
			\STATE $\displaystyle  \mathbf{v}^{k+1} = \fd^k + \left(\frac{\alpha_k-1}{\alpha_{k+1} }\right)(\fd^k - \fd^{k-1}) $
			\STATE $k \leftarrow k+1$
 		\ENDWHILE
 	\end{algorithmic}
	\end{algorithm}

\begin{itemize}
    \item As in~\cite{soubies2017efficient}, we implemented a stochastic-gradient version of the algorithm by selecting a subset of of the measurements $\{\yd^\mathrm{sc}_q\}_{q=1}^Q$ at each iteration  (Line~\ref{alg:stochastic}). This allows us to reduce the computational burden of the method. 
    \item Line~\ref{alg:grad}  corresponds to the evaluation of the gradient of $\frac{1}{2\|\yd_q^\mathrm{sc}\|^2} \sum_{q \in \mathcal{Q}} \|\mathbf{H}_q(\cdot) - \yd^\mathrm{sc}_q\|^2$. An explicit expression of the Jacobian matrix  $\mathbf{J}_{\mathbf{H}_q}(\fd^k)$ of  $\mathbf{H}_q$ can be found in~\cite{soubies2017efficient,ma2018accelerated}. Similarly to the forward model~\eqref{eq:lipp1}, the application of this Jacobian matrix to a given vector of $\C^M$ requires the inversion of $(\Id - \diag{\fd} \Gtildd^\ast)$. Again, this inversion is performed using a conjugate-gradient-based algorithm.
    \item   For both the TV and Hessian-Schatten-norm regularizers, no known closed-form expression exists for the proximity operator of $\gamma_k \tau \mathcal{R} + i_{\geq 0}$ (Line~\ref{alg:prox}). However, there exist efficient algorithms to evaluate them. Specifically, we use the fast gradient-projection method for TV~\cite{beck2009fast2} and its extension to the Hessian-Schatten-norm regularizer~\cite{Lefkimmiatis2013}.
    \item We set the sequence of step sizes to $\gamma_k = \gamma_0 /\sqrt{k}$ for $\gamma_0>0$. This is standard and ensures the convergence of incremental proximal-gradient methods~\cite{bertsekas2011incremental}.
\end{itemize}

The whole reconstruction pipeline is implemented within the framework of the GlobalBioIm library\footnote{http://bigwww.epfl.ch/algorithms/globalbioim/}~\cite{soubies2018pocket} and will be made available online.

\section{Numerical Results}\label{sec:results}

 In this section , we present two types of experiments. First we validate our computational pipeline on simulated data. Then, we deploy the proposed approach on some real data. For both cases, we provide comparison with existing algorithms.
\subsection{Simulated Data}

\subsubsection{Simulation Setting}

We simulated red blood cells (RBCs) with a maximal RI of~$1.05$~(see Figure~\ref{fig:sim_slice} top row).
This sample is immersed in air~($\nb = 1$) and is illuminated by tilted plane waves with wavelength~$\lambda = 600$\nano\meter.
To simulate the ODT measurements, we used the discrete dipole approximation model on a grid with a resolution of~$75$\nano\meter.
To probe the sample, we generated 40 views within a cone of illumination whose half-angle is~$45^\degree$. This corresponds to severely restricted  angles of view and makes the reconstruction problem very challenging. 
Each view has $512^2$ measurements (resolution of $150$\nano\meter).
Finally, we have simulated, independently for each view, an acquisition of the incident field on $\Gamma$. 

\newcommand{\szsub}{0.145}
\newcommand{\szax}{3*\szsub}
\newcommand{\legjet}{0.9*\szsub}
\pgfplotsset{
        colormap={parula}{
            rgb255=(53,42,135)
            rgb255=(15,92,221)
            rgb255=(18,125,216)
            rgb255=(7,156,207)
            rgb255=(21,177,180)
            rgb255=(89,189,140)
            rgb255=(165,190,107)
            rgb255=(225,185,82)
            rgb255=(252,206,46)
            rgb255=(249,251,14)
        },
    }

\begin{table}[t]
\caption{Relative error of the RBCs reconstructions.}
    \centering
    \begin{tabular}{l|c|c|c}
        \toprule
        \toprule
        Method & Rytov & BPM & \LS{}
        \\
        \midrule
        $\frac{\|\hat{\mathbf{n}} - \mathbf{n}_\mathrm{gt} \|^2}{\| \mathbf{n}_\mathrm{gt} \|^2}$ & $1.8231 \times 10^{-4}$ & $2.4585 \times 10^{-5}$ & $\mathbf{9.0120\times 10^{-6}}$\\
        \bottomrule
        \bottomrule
    \end{tabular}
    \label{tab:relerr}
\end{table}

\renewcommand{\szsub}{0.145}
\renewcommand{\szax}{3*\szsub}
\renewcommand{\legjet}{0.9*\szsub}

\begin{figure}[t]
    \centering
    
    \begin{tikzpicture}
    \begin{axis}[at={(0,0)},anchor = south west,
    ylabel = Ground truth,
    xmin = 0,xmax = 216,ymin = 0,ymax = 72,
    width={\szax*\textwidth}, 
        scale only axis,
        enlargelimits=false,
        axis line style={draw=none},
        tick style={draw=none},
        axis equal image,
        xticklabels={,,},yticklabels={,,},
        ]
    \node[inner sep=0pt, anchor = south west] (gt_xy_1) at (0,0) {\includegraphics[ width=\szsub\textwidth]{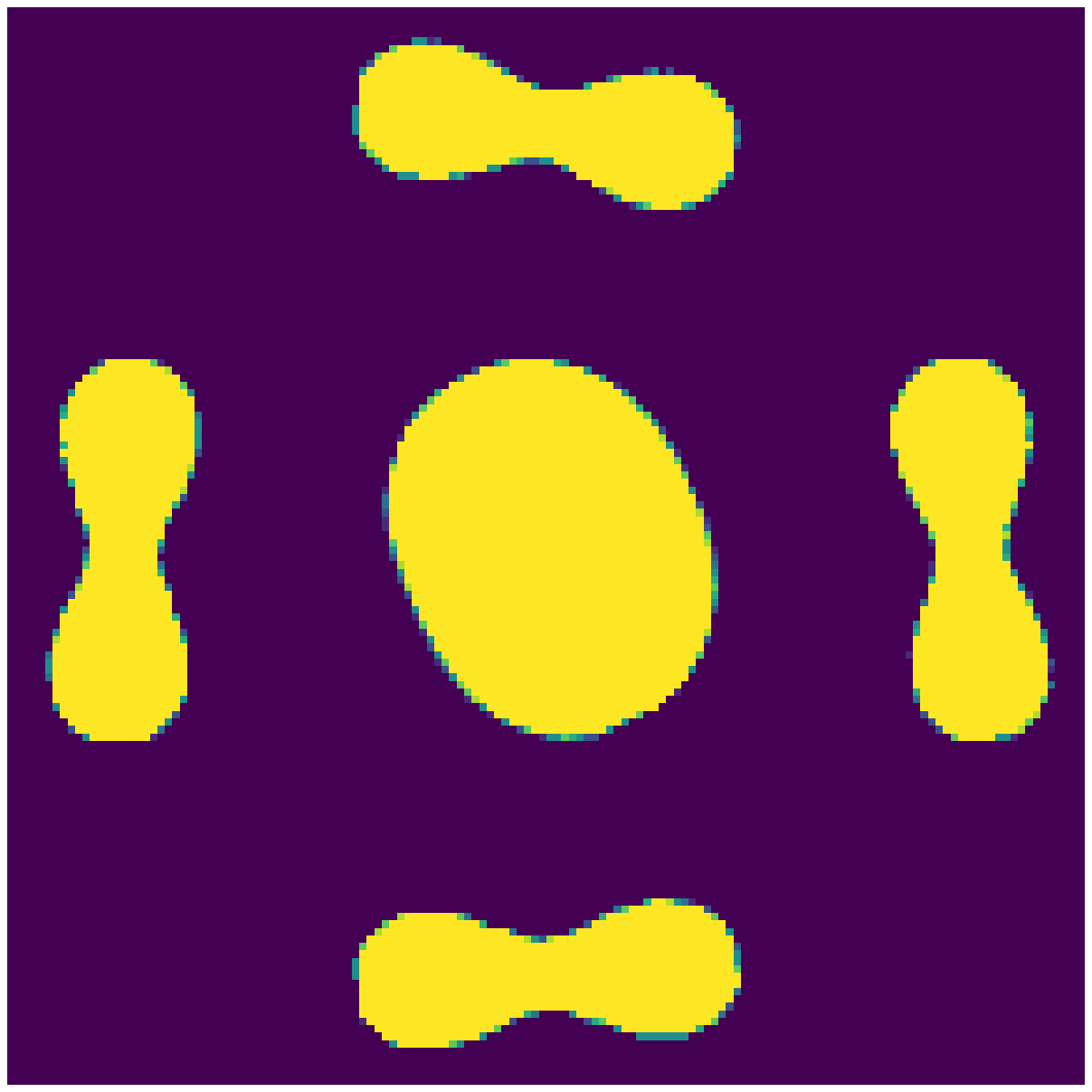}};
    \node[anchor = north east, white] at (gt_xy_1.north east) {XY};
    \draw[white,dotted,thick] (0,72- 30)--(72,72 - 30);
    \draw[white,dotted,thick] (30,0)--(30,72);
    
    \node[inner sep=0pt, anchor = west] (gt_xy_2) at (gt_xy_1.east) {\includegraphics[ width=\szsub\textwidth]{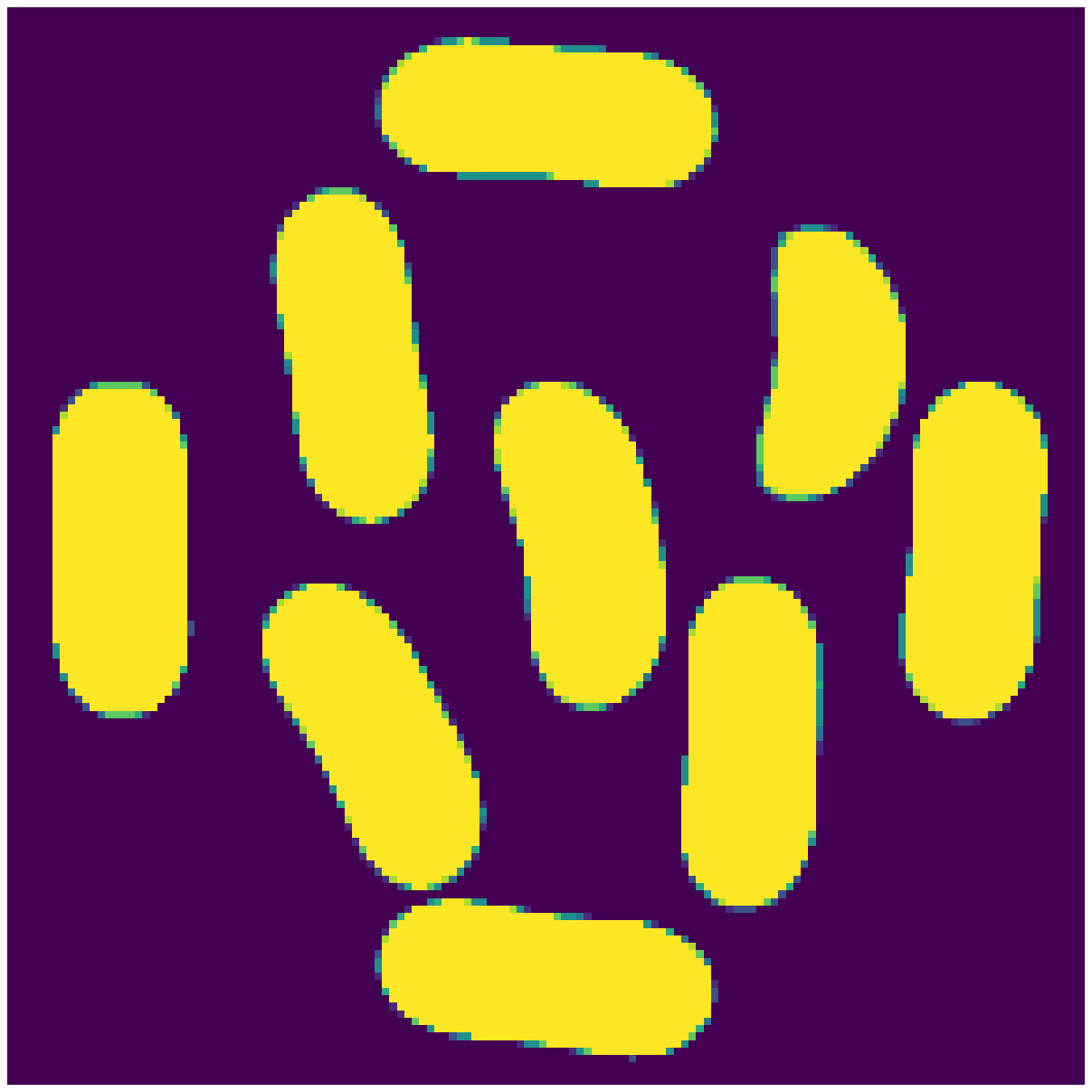}};
    \node[anchor = north east, white] at (gt_xy_2.north east) {XZ};
    \draw[white,dotted,thick] (72 ,72- 30)--(72 + 72,72 - 30);
    \draw[white,dotted,thick] (72 + 36,0)--(72 + 36,72);
    \draw[-{latex[length = 0.05mm]}, thick,white] (72 + 67, 57)-- (72 +  61, 54);
    
    \node[inner sep=0pt, anchor = west] (gt_xy_3) at (gt_xy_2.east) {\scalebox{1}[-1]{\includegraphics[ width=\szsub\textwidth,]{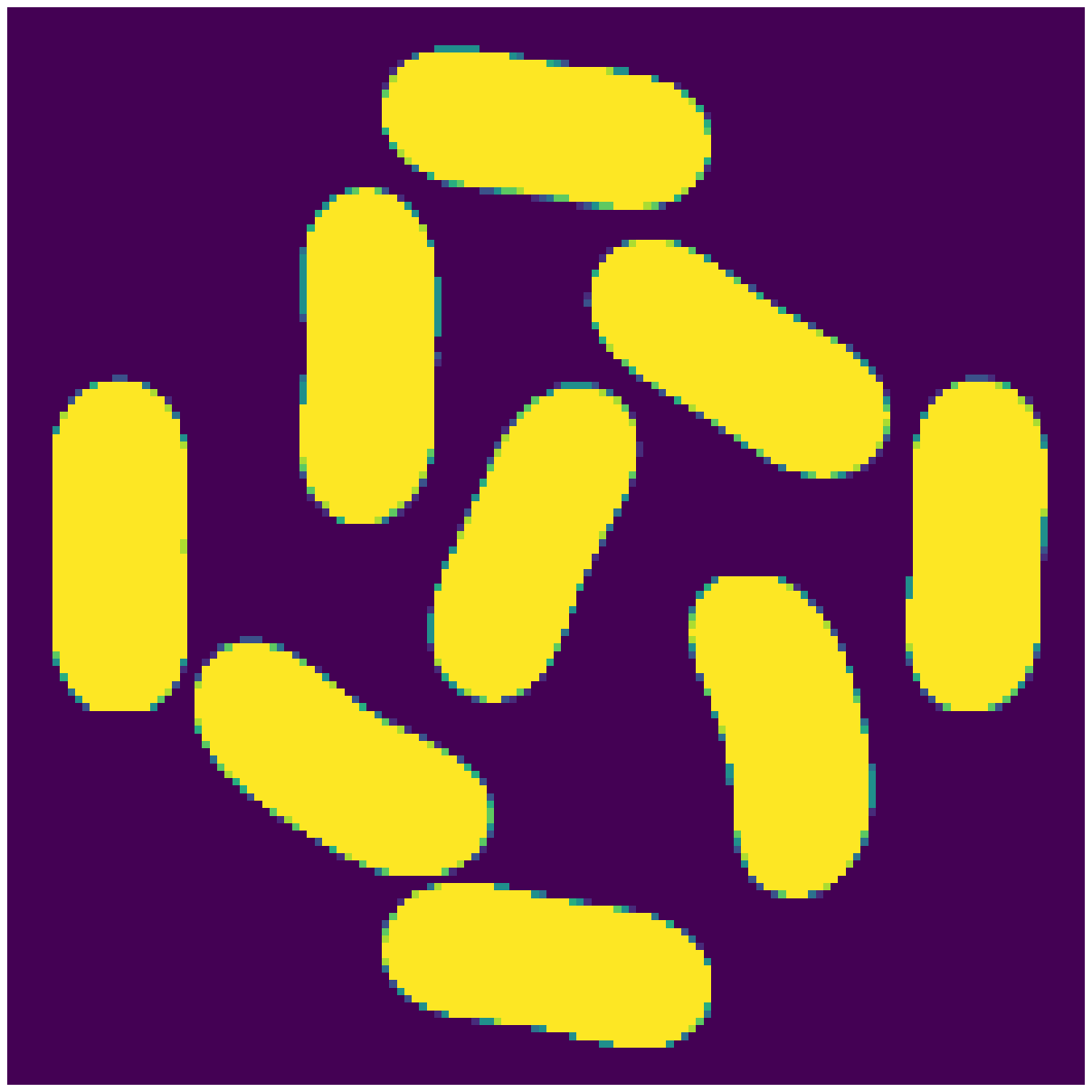}}};
     \node[anchor = north east, white] at (gt_xy_3.north east) {YZ};
     \draw[white,dotted,thick] (72*2 + 72 - 36,0)--(72*2 + 72 - 36,72);
    \draw[white,dotted,thick] (72*2+0,30)--(72*2+72,30);
    \draw[-{latex[length = 0.05mm]}, thick,white] (2*72 + 35, 22)-- (2*72 +  31, 27);
    \end{axis}
    
     \begin{axis}[at={(gt_xy_1.south west)},anchor = north west, ylabel = Rytov,
    xmin = 0,xmax = 216,ymin = 0,ymax = 72, width=\szax\textwidth,
        scale only axis,
        enlargelimits=false,
        axis line style={draw=none},
        tick style={draw=none},
        axis equal image,
        xticklabels={,,},yticklabels={,,},
        ]

    \node[inner sep=0pt, anchor = south west] (ryt_xy_1) at (0,0) {\includegraphics[ width=\szsub\textwidth]{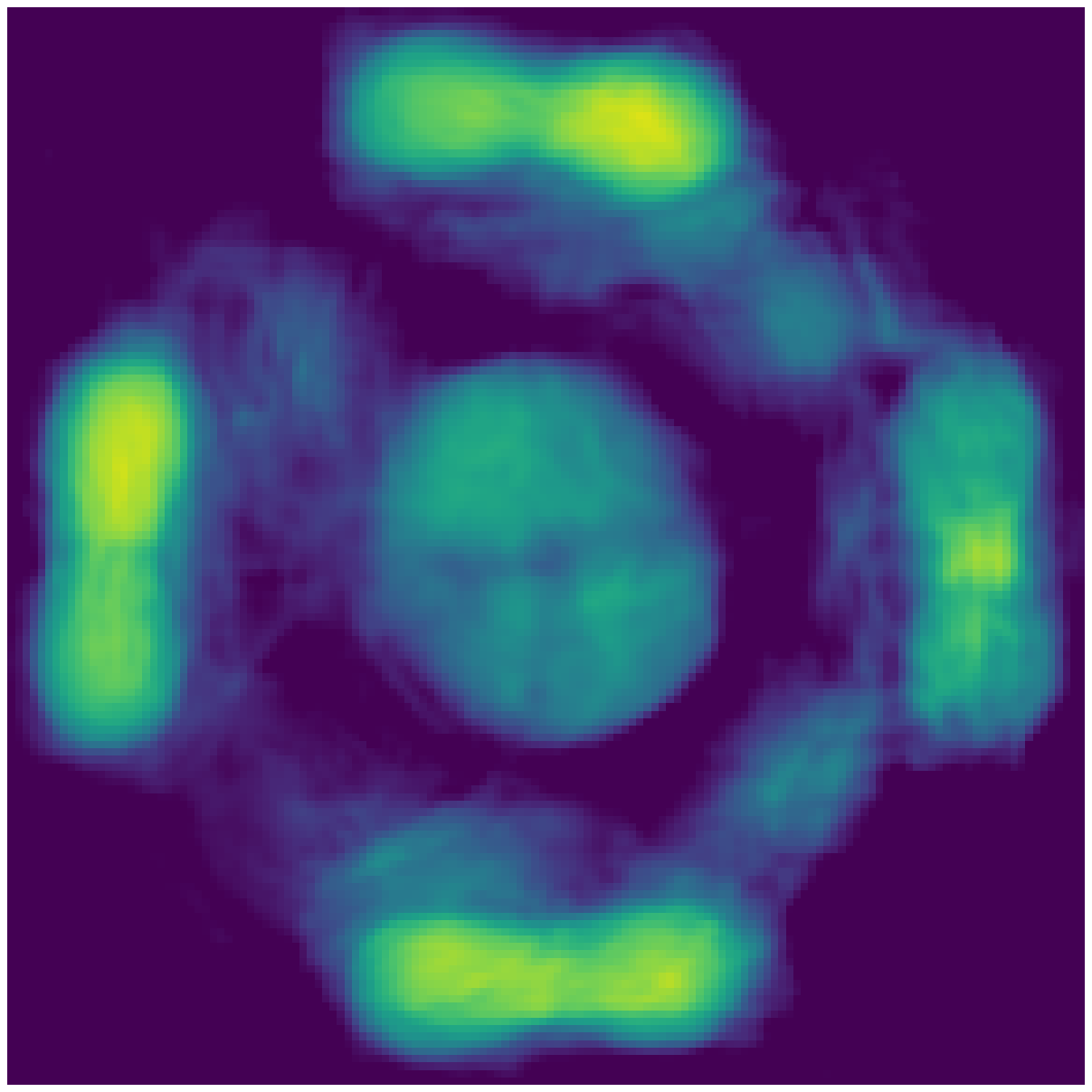}};
    
    \node[inner sep=0pt, anchor = west] (ryt_xy_2) at (ryt_xy_1.east) {\includegraphics[ width=\szsub\textwidth]{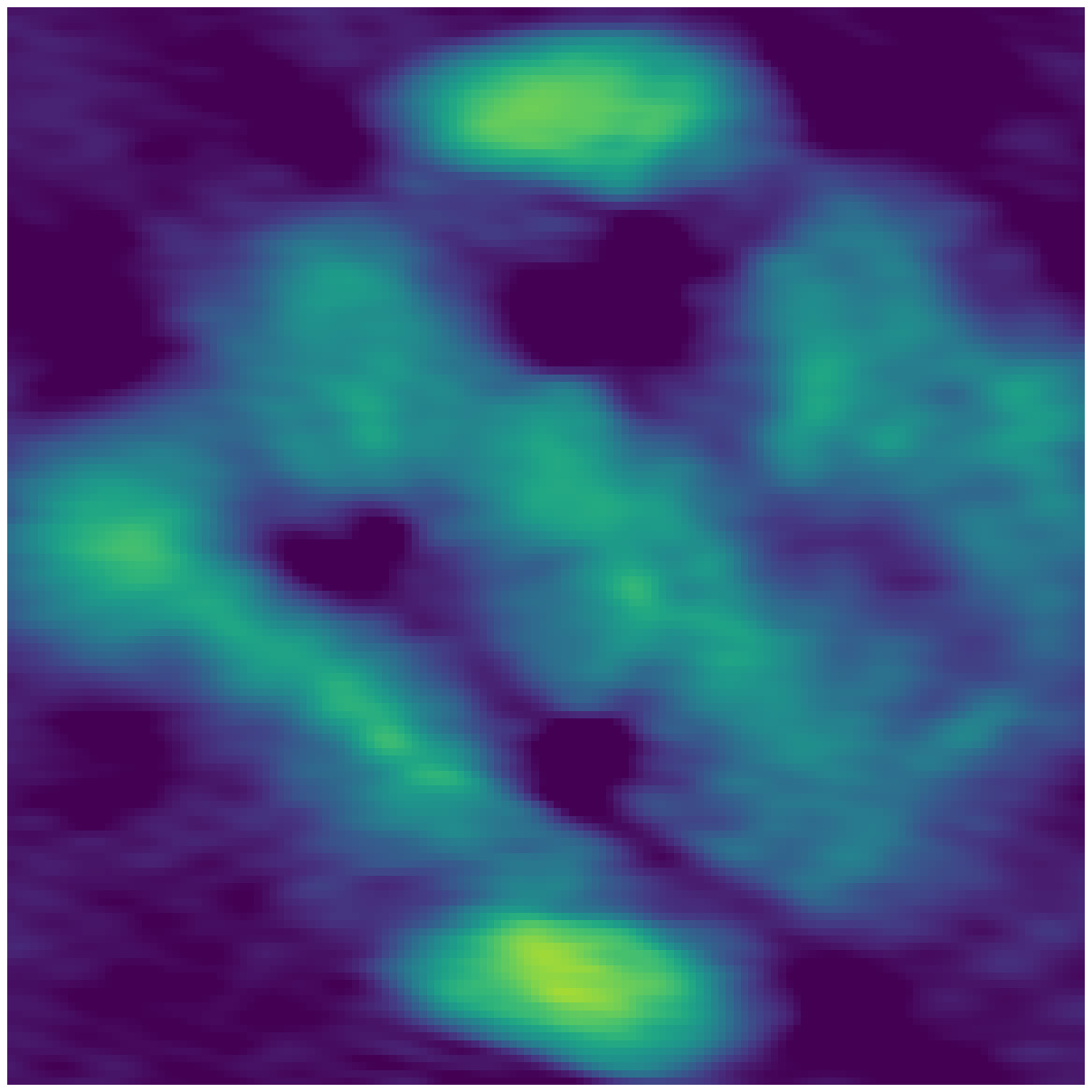}};
    \draw[-{latex[length = 0.05mm]}, thick,white] (72 + 67, 57)-- (72 +  61, 54);

    \node[inner sep=0pt, anchor = west] (ryt_xy_3) at (ryt_xy_2.east) {\scalebox{1}[-1]{\includegraphics[ width=\szsub\textwidth]{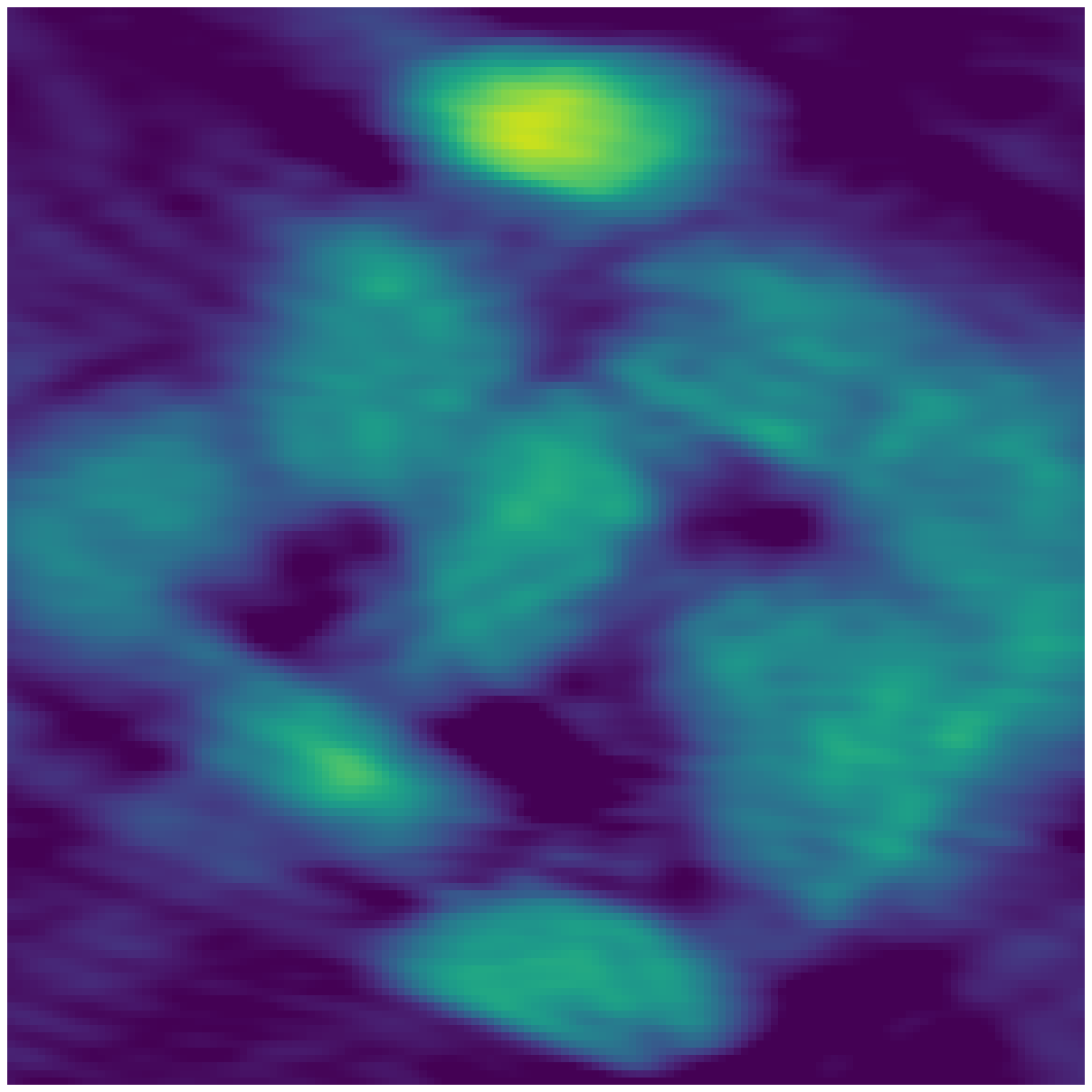}}};
    \draw[-{latex[length = 0.05mm]}, thick,white] (2*72 + 35, 22)-- (2*72 +  31, 27);
    \end{axis}
    
    \begin{axis}[at={(ryt_xy_1.south west)},anchor = north west,ylabel = BPM,
    xmin = 0,xmax = 216,ymin = 0,ymax = 72, width=\szax\textwidth,
        scale only axis,
        enlargelimits=false,
        axis line style={draw=none},
        tick style={draw=none},
        axis equal image,
        xticklabels={,,},yticklabels={,,}
        ]
    
    \node[inner sep=0pt, anchor = south west] (bpm_xy_1) at (0,0) {\includegraphics[ width=\szsub\textwidth]{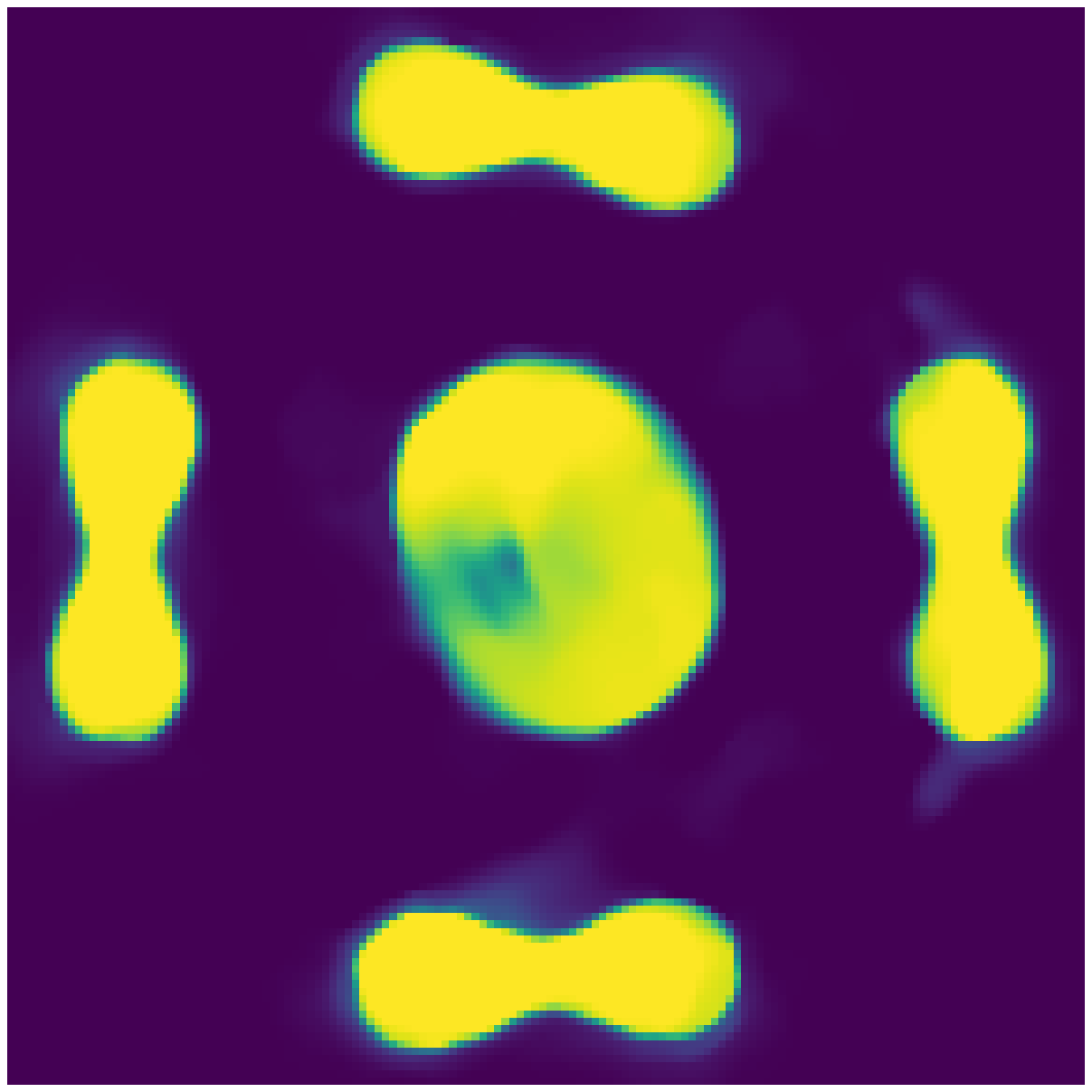}};
    
    \node[inner sep=0pt, anchor = west] (bpm_xy_2) at (bpm_xy_1.east) {\includegraphics[ width=\szsub\textwidth]{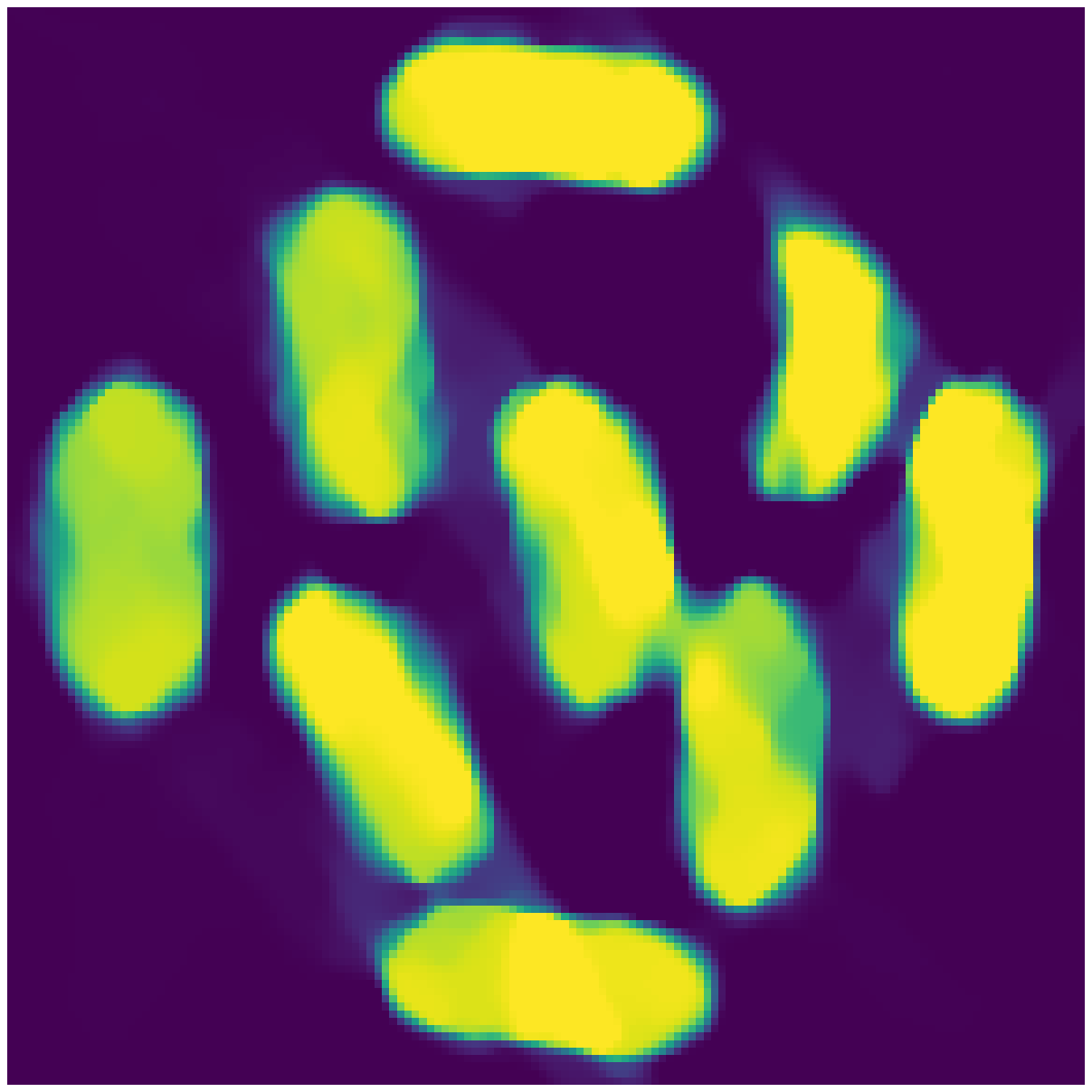}};
    \draw[-{latex[length = 0.05mm]}, thick,white] (72 + 67, 57)-- (72 +  61, 54);
    
    \node[inner sep=0pt, anchor = west] (bpm_xy_3) at (bpm_xy_2.east) {\scalebox{1}[-1]{\includegraphics[ width=\szsub\textwidth]{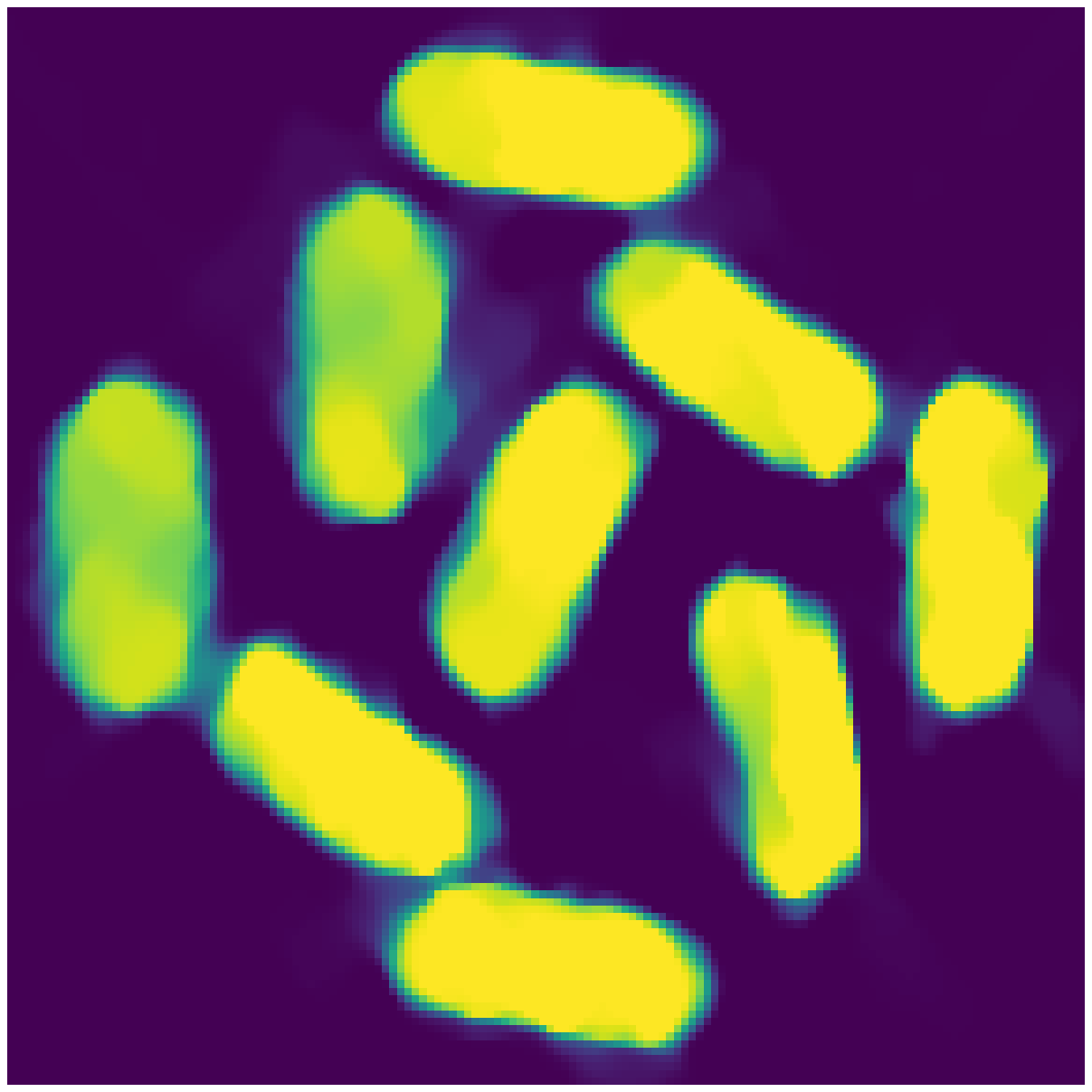}}};
    \draw[-{latex[length = 0.05mm]}, thick,white] (2*72 + 35, 22)-- (2*72 +  31, 27);
    \end{axis}
    
     \begin{axis}[at={(bpm_xy_1.south west)},anchor = north west, ylabel = LS model,
    xmin = 0,xmax = 216,ymin = 0,ymax = 72, width=\szax\textwidth,
        scale only axis,
        enlargelimits=false,
        axis line style={draw=none},
        tick style={draw=none},
        axis equal image,
        xticklabels={,,},yticklabels={,,},
        colormap/viridis,
        colorbar,
        colorbar style={
        at = {(0.007,-0.07)},
            anchor = west,
            xmin = 0, xmax = 1.05,
            point meta min = 1,
            point meta max = 1.05,
            rotate=-90,
            scale only axis,
            enlargelimits = false,
            scaled x ticks = false,
            scaled y ticks = false,
            height = 0.43\textwidth,
            samples = 200,
            width = 0.3cm,
            tick label style = {font=\normalsize, color=black,anchor = west,},
            ytick = {1,1.05}, 
            yticklabels={,},
            ylabel={1 -- 1.05},ylabel style={rotate=-90,yshift=0.4cm,xshift=-0.5cm},
        }
        ]
    
    \node[inner sep=0pt, anchor = south west] (lipp_xy_1) at (0,0) {\includegraphics[ width=\szsub\textwidth]{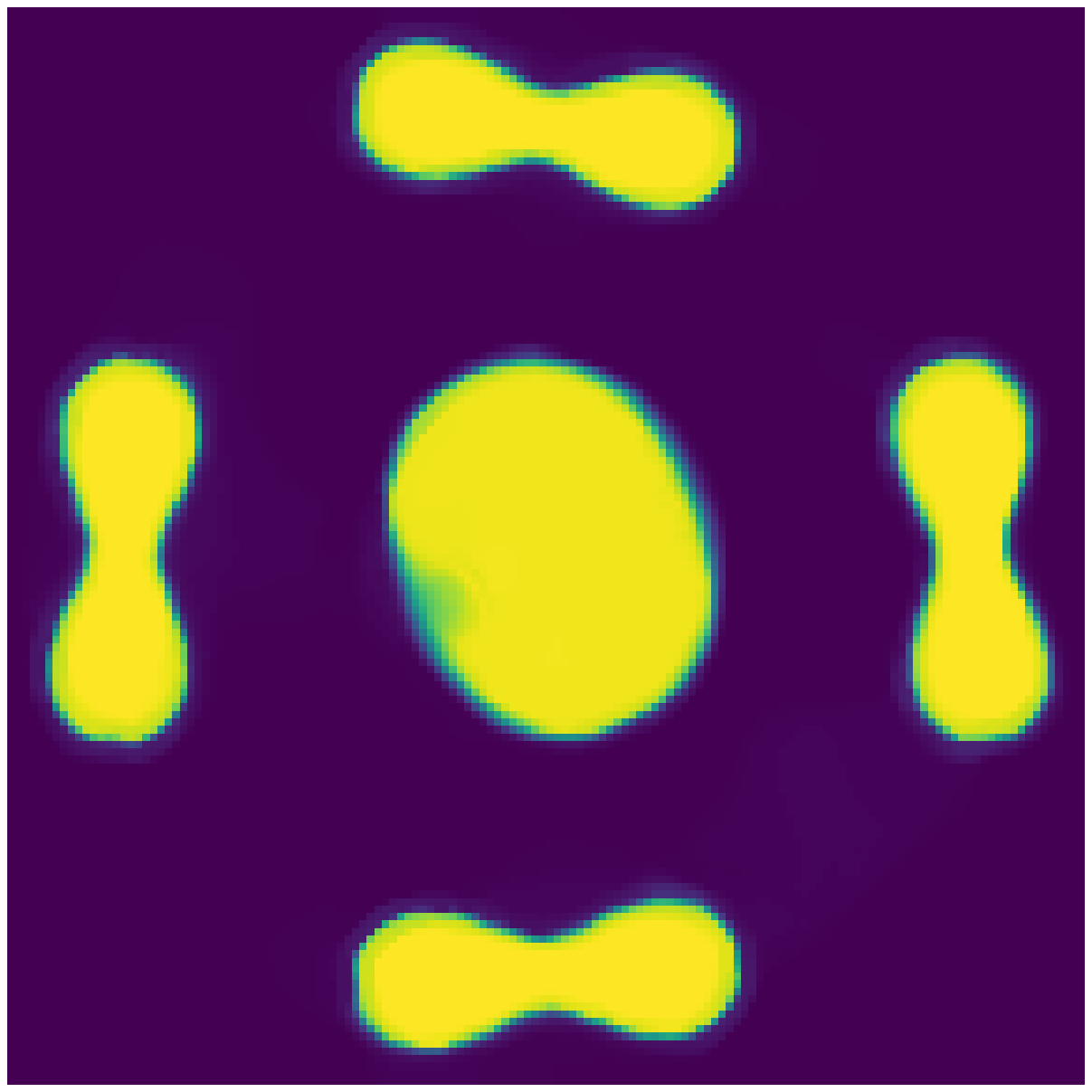}};
    
    \node[inner sep=0pt, anchor = west] (lipp_xy_2) at (lipp_xy_1.east) {\includegraphics[ width=\szsub\textwidth]{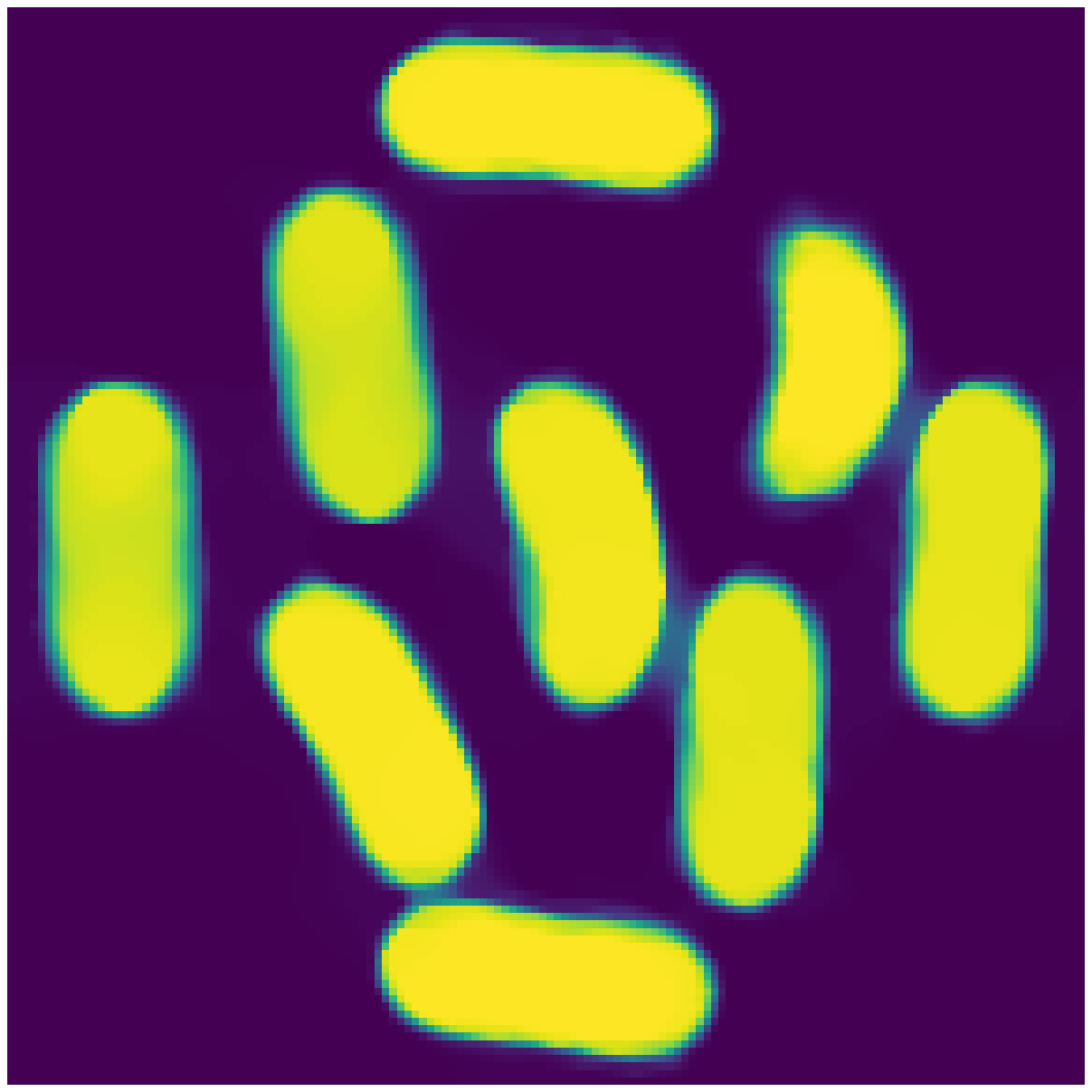}};
    \draw[-{latex[length = 0.05mm]}, thick,white] (72 + 67, 57)-- (72 +  61, 54);
    
    \node[inner sep=0pt, anchor = west] (lipp_xy_3) at (lipp_xy_2.east) {\scalebox{1}[-1]{\includegraphics[ width=\szsub\textwidth]{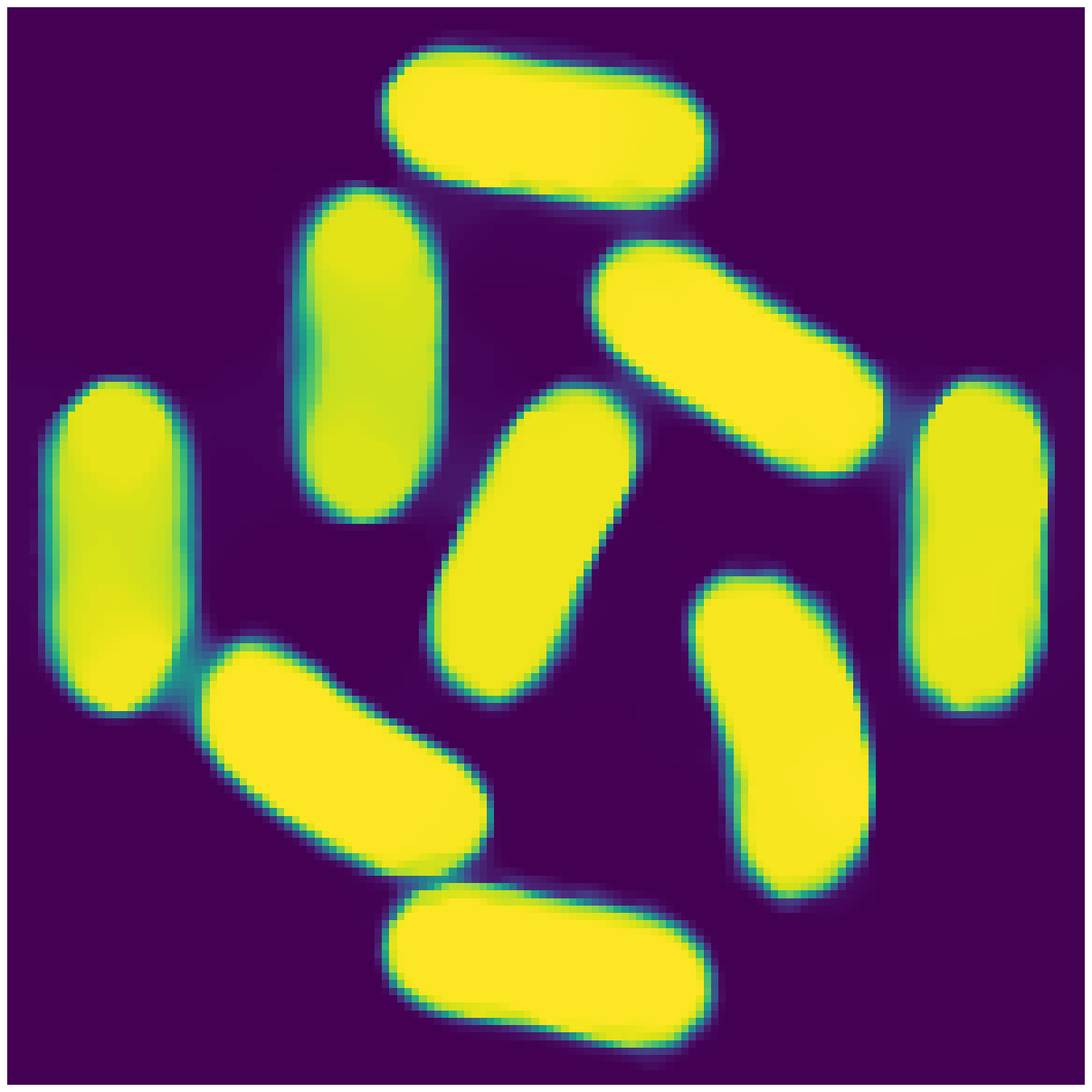}}};
    \draw[-{latex[length = 0.05mm]}, thick,white] (2*72 + 35, 22)-- (2*72 +  31, 27);
    \end{axis}
    \end{tikzpicture}
    
    \caption{Reconstructions of the simulated RBCs by Rytov, BPM, and the proposed method (LS model). 
    }
    \label{fig:sim_slice}
\end{figure}

\renewcommand{\szsub}{0.2}
\renewcommand{\szax}{4*\szsub}
\begin{figure*}[t]
    \centering
    
    \begin{tikzpicture}
     \begin{axis}[at={(0,0)},anchor = south west, ylabel = Rytov, xmin = 0,xmax = 288,ymin = 0,ymax = 72, width=\szax\textwidth,
        scale only axis,
        enlargelimits=false,
        axis equal image,
        xticklabels={,,}, yticklabels={,,},
        colormap/viridis,
        point meta min = 1.338,
        point meta max = 1.48,
        colorbar,
        colorbar style={
            at={(parent axis.east)},
            anchor = west,
            ymin = 1.338, ymax = 1.48,
            point meta min = 1.338,
            point meta max = 1.48,
            enlargelimits = false,
            scaled y ticks = false,
            height = 0.85*\pgfkeysvalueof{/pgfplots/parent axis height},
            width = 0.25cm,
            tick label style = {font=\normalsize, color=black},
            ytick = {1.338,1.48},%
            yticklabels={1.338,1.48},
        }
        ]
    \node[inner sep=0pt, anchor = south west] (ryt_xz) at (0,0) {\includegraphics[ width=\szsub\textwidth]{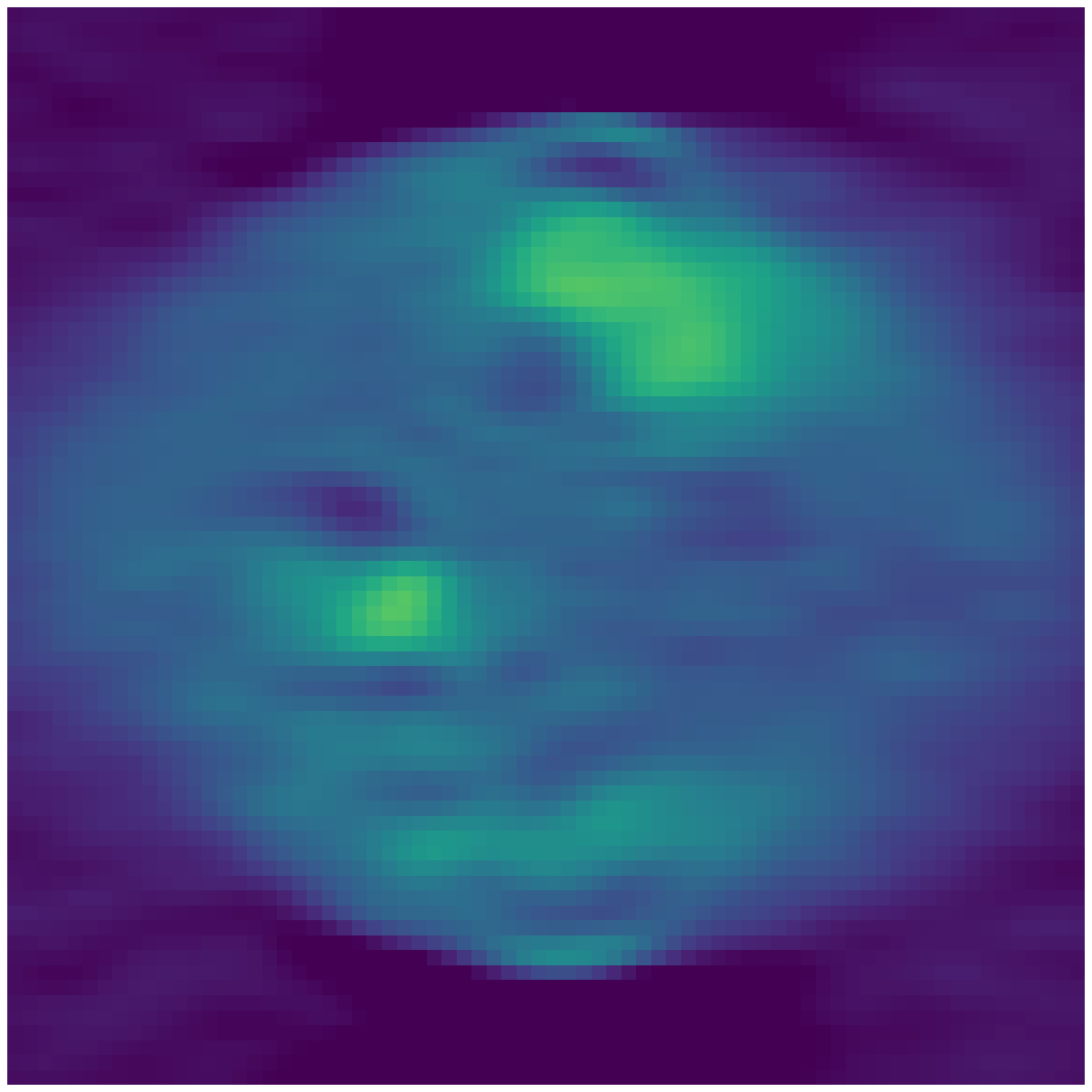}};
    \node[anchor = north east, white] at (ryt_xz.north east) {XZ};
    
    \draw[-,dotted,white,thick] (25,0)--(25,66.5) node [anchor = south,align=center,inner sep = 0] {$z_1$};
    \draw[-,dotted,white,thick] (31,0)--(31,66.5) node [anchor = south,align=center,inner sep = 0] {$z_2$};
    \draw[-,dotted,white,thick] (36,0)--(36,66.5) node [anchor = south,align=center, inner sep = 0] {$z_3$};
    
    \node[inner sep=0pt, anchor = west] (ryt_xy_1) at (ryt_xz.east) {\includegraphics[ width=\szsub\textwidth]{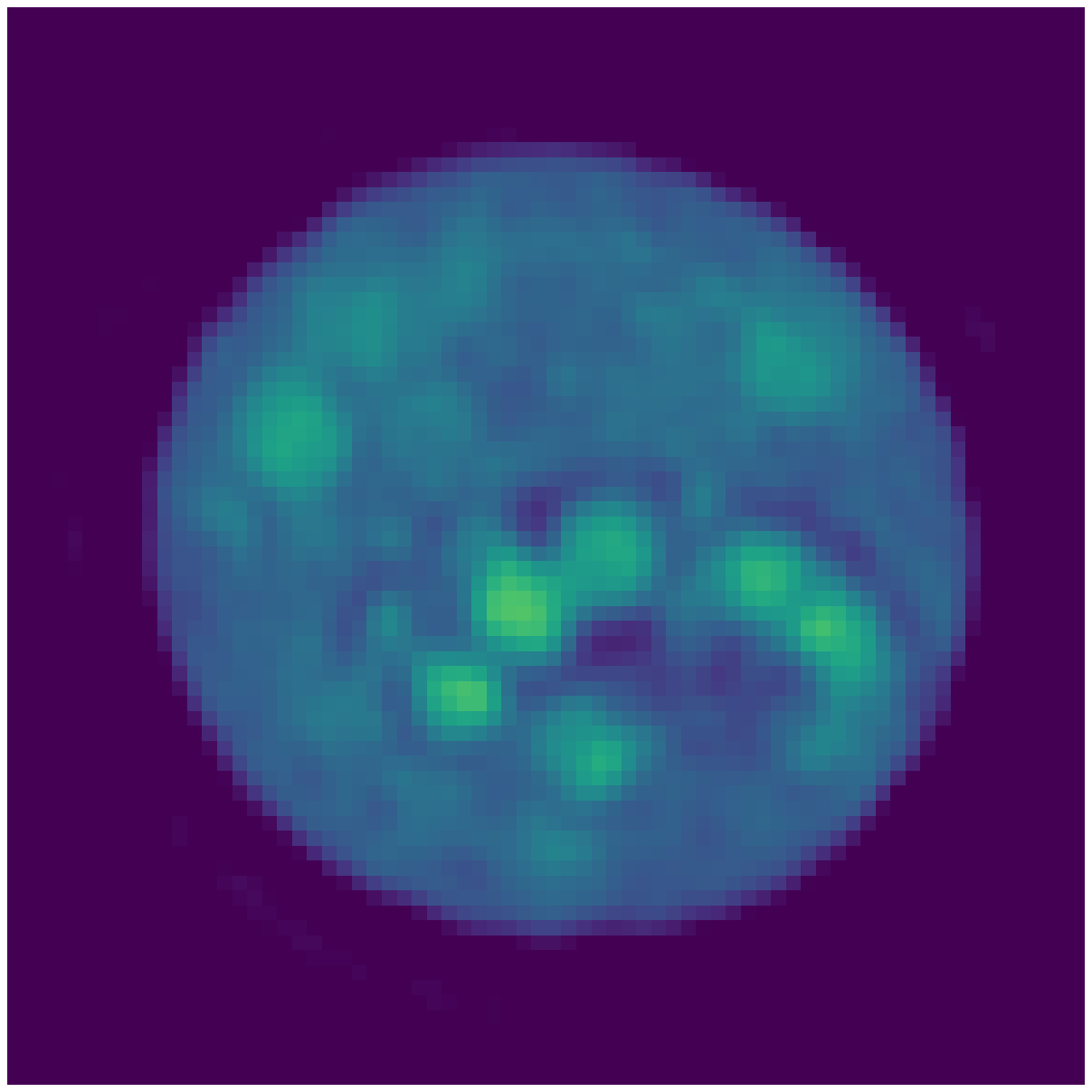}};
   \draw[-{latex[length = 0.05mm]}, thick,red] (72 + 10, 48)-- (72 +  16, 46);
    \node[anchor = north east, white] at (ryt_xy_1.north east) {$z_1 = -1.092$\micro{m}};
    
    \node[inner sep=0pt, anchor = west] (ryt_xy_2) at (ryt_xy_1.east) {\includegraphics[ width=\szsub\textwidth]{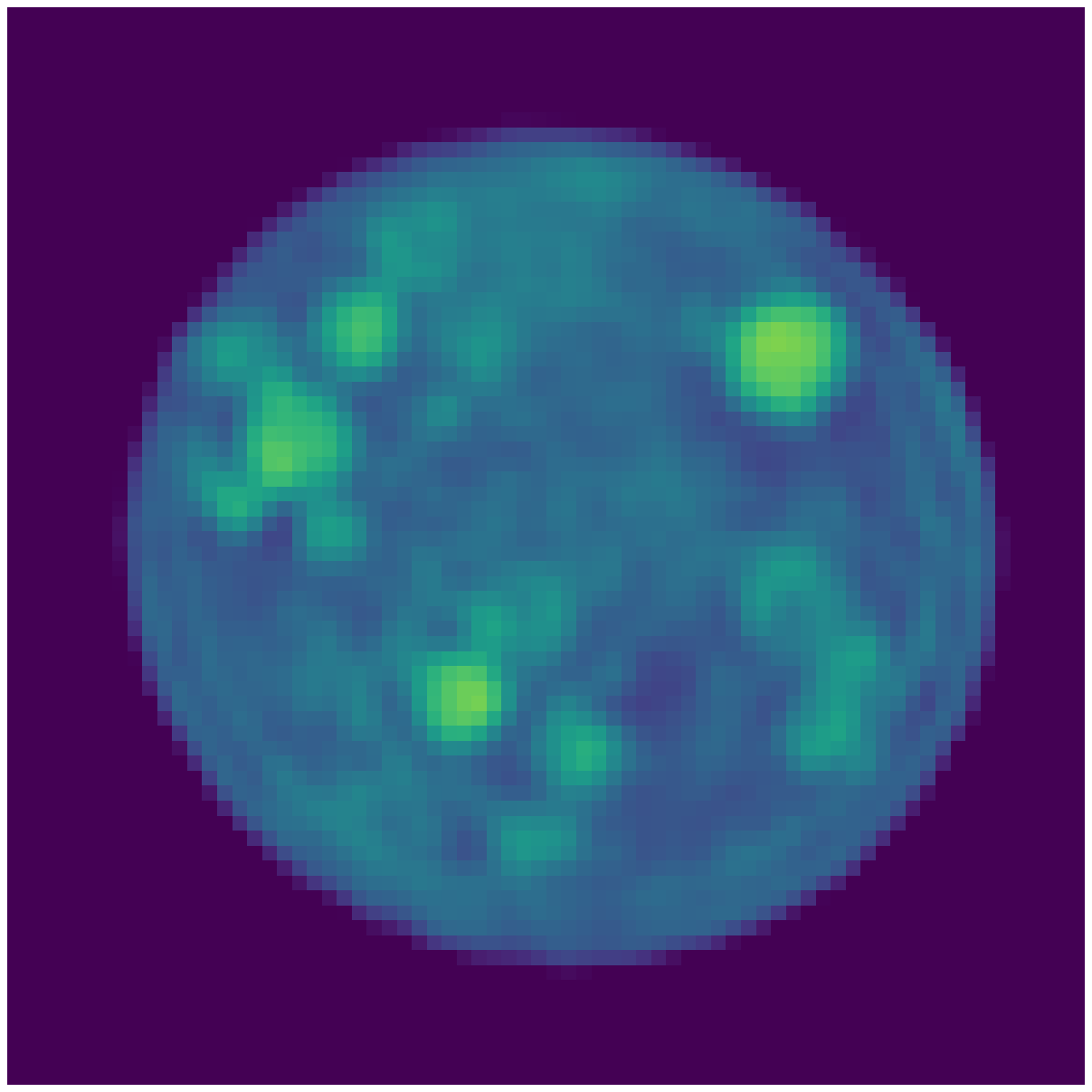}};
    \draw[-{latex[length = 0.05mm]}, thick,red] (2*72 + 6, 48)-- (2*72 +  12, 46);
    \node[anchor = north east, white] at (ryt_xy_2.north east) {$z_2 = -0.496$\micro{m}};
    
    \node[inner sep=0pt, anchor = west] (ryt_xy_3) at (ryt_xy_2.east) {\includegraphics[ width=\szsub\textwidth]{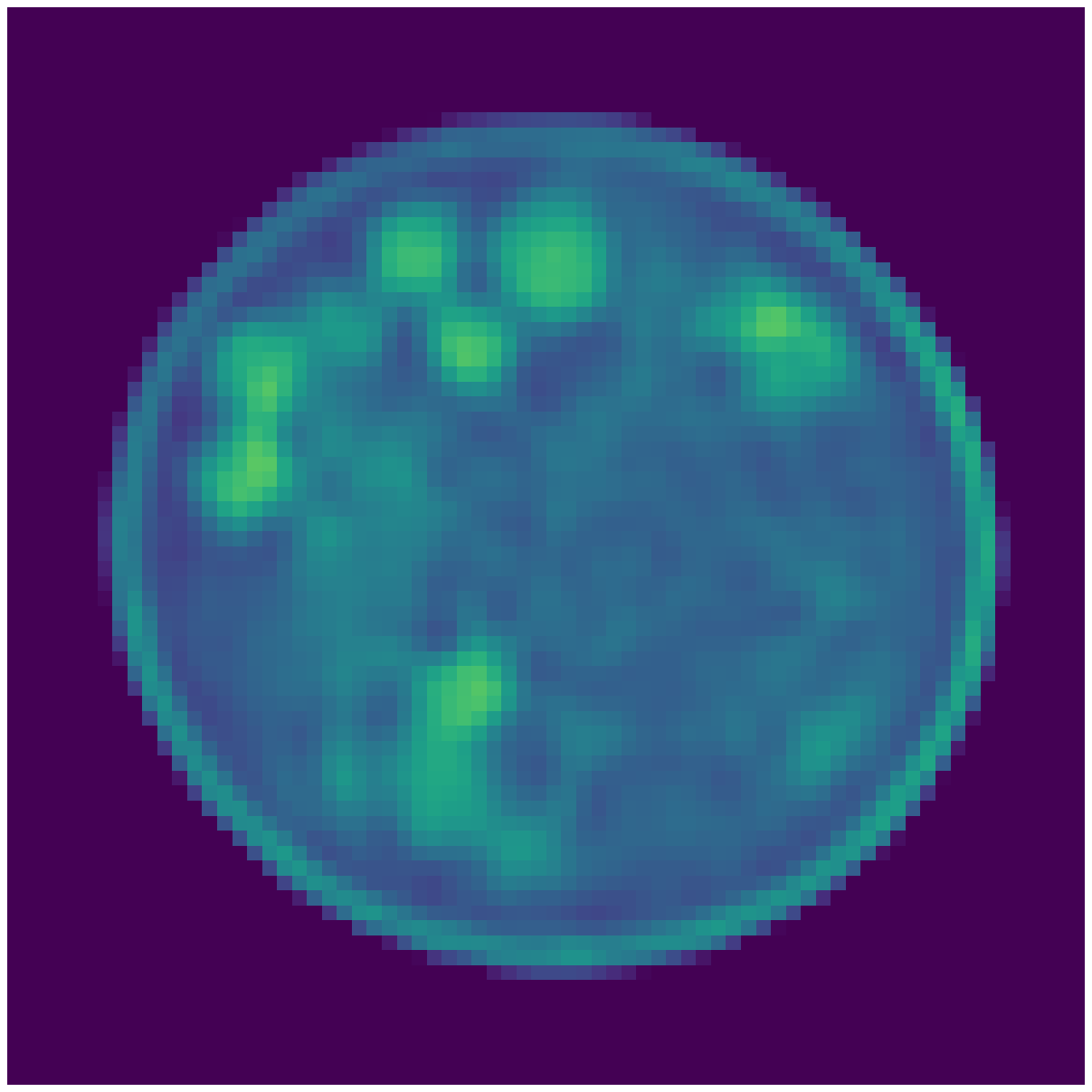}};
    \node[anchor = north east, white] at (ryt_xy_3.north east) {$z_3 = 0$\micro{m}};
    \end{axis}
   
     \begin{axis}[at={(ryt_xz.south west)},anchor = north west,ylabel = BPM,
    xmin = 0,xmax = 288,ymin = 0,ymax = 72, width=\szax\textwidth,
        scale only axis,
        enlargelimits=false,
        axis equal image,
        xticklabels={,,},yticklabels={,,}
        ]
    \node[inner sep=0pt, anchor = south west] (bpm_xz) at (0,0) {\includegraphics[ width=\szsub\textwidth]{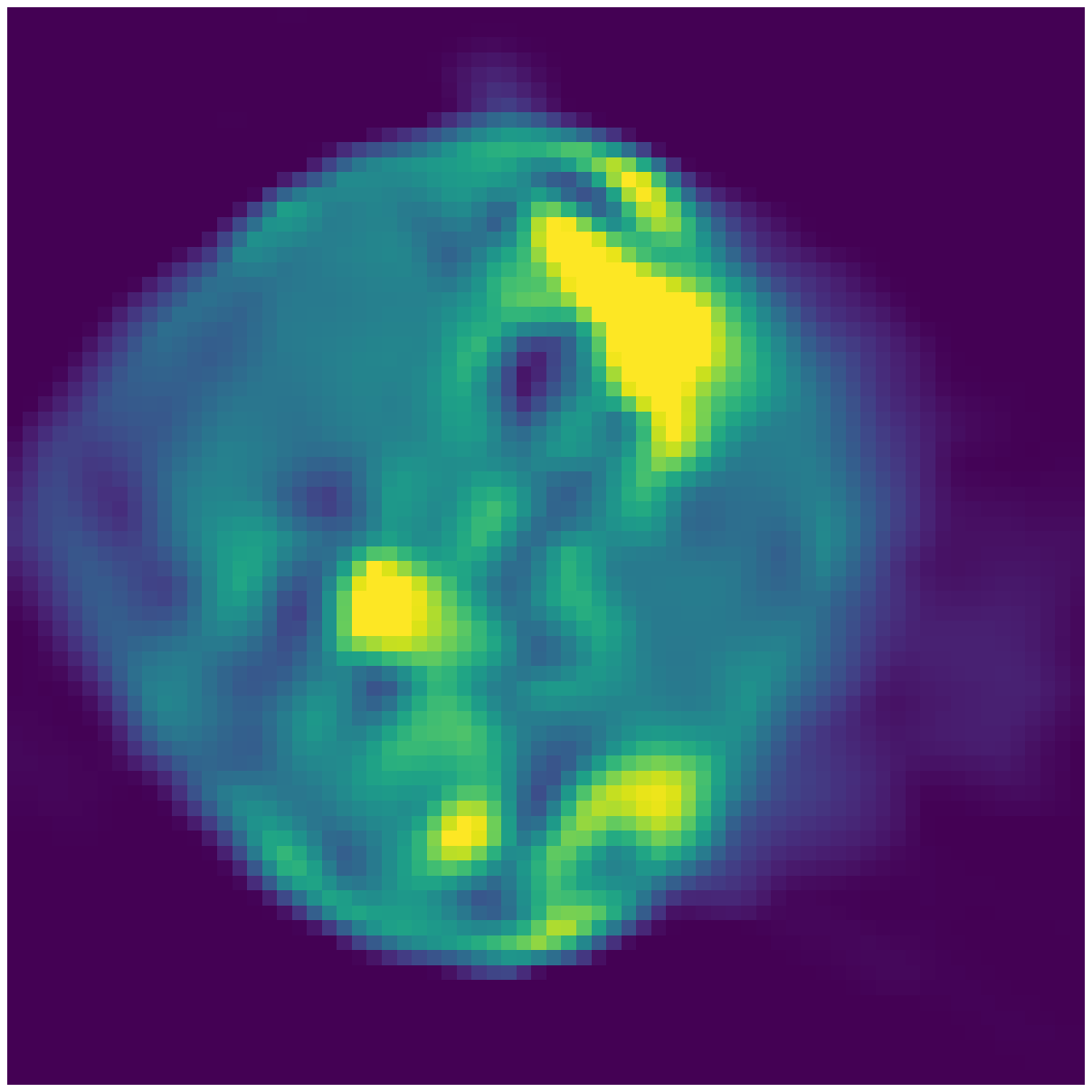}};
    
    \node[inner sep=0pt, anchor = west] (bpm_xy_1) at (bpm_xz.east) {\includegraphics[ width=\szsub\textwidth]{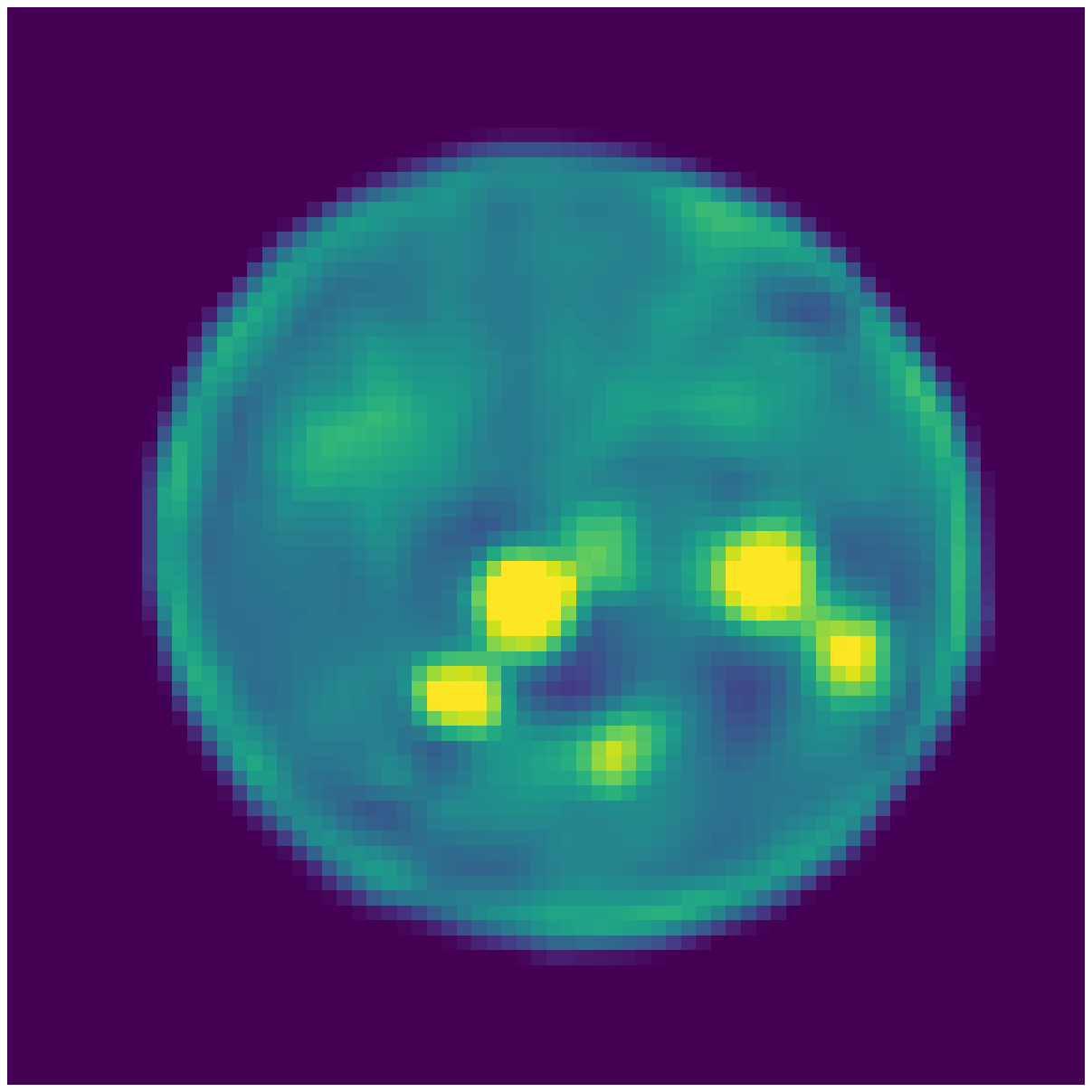}};
    \draw[-{latex[length = 0.05mm]}, thick,red] (72 + 10, 48)-- (72 +  16, 46);
    
    \node[inner sep=0pt, anchor = west] (bpm_xy_2) at (bpm_xy_1.east) {\includegraphics[ width=\szsub\textwidth]{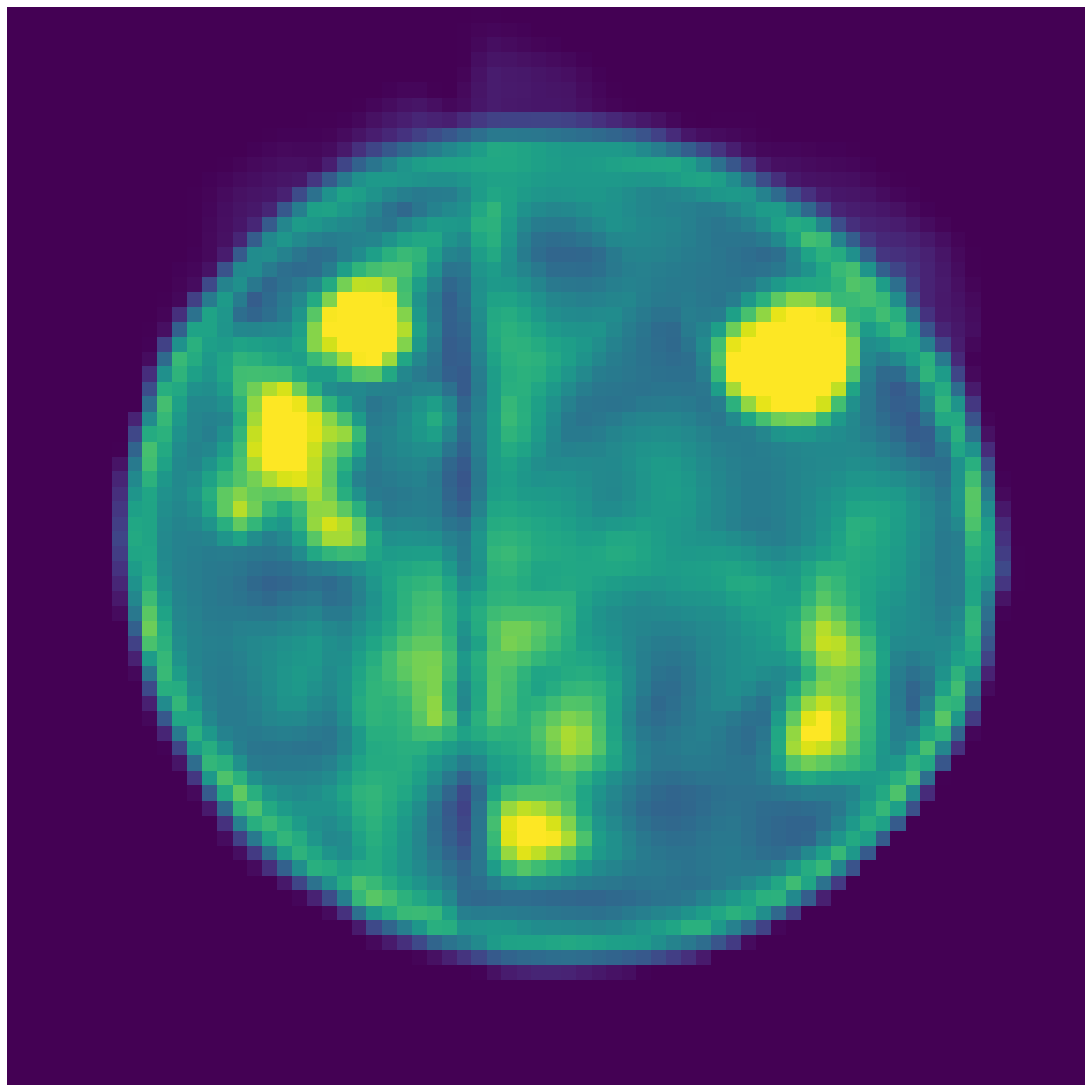}};
    \draw[-{stealth[length = 0.05mm]}, ultra thick, white] (2*72 + 41,36) -- (2*72 +  36, 40);
    \draw[-{latex[length = 0.05mm]}, thick,red] (2*72 + 6, 48)-- (2*72 +  12, 46);
    
    \node[inner sep=0pt, anchor = west] (bpm_xy_3) at (bpm_xy_2.east) {\includegraphics[ width=\szsub\textwidth]{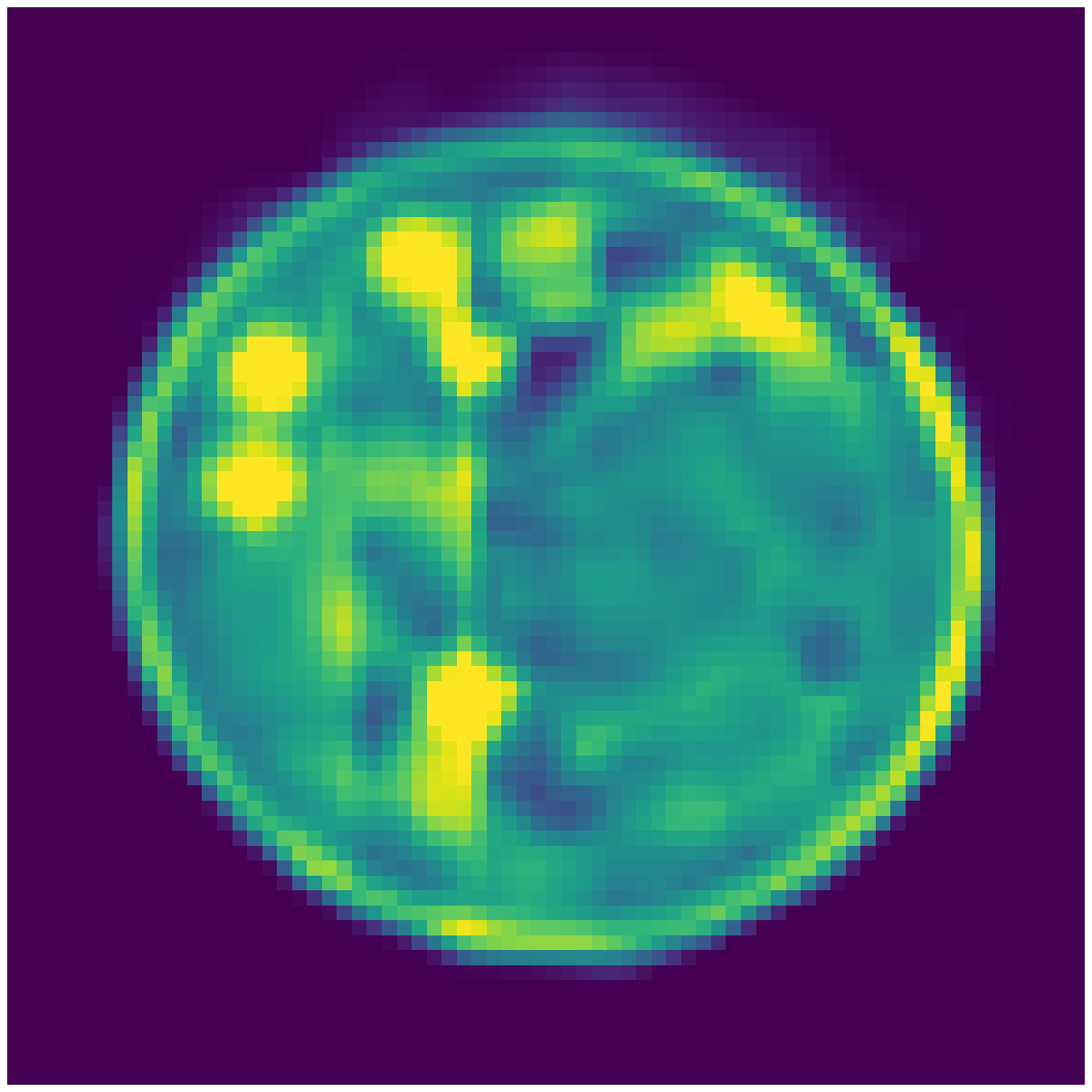}};
    \draw[-{stealth[length = 0.05mm]}, ultra thick, white] (3*72 + 41,36) -- (3*72 +  36, 40);
    \end{axis}
    
     \begin{axis}[at={(bpm_xz.south west)},anchor = north west, ylabel = LS model,
    xmin = 0,xmax = 288,ymin = 0,ymax = 72, width=\szax\textwidth,
        scale only axis,
        enlargelimits=false,
        axis equal image,
        xticklabels={,,},yticklabels={,,},
        ]
    \node[inner sep=0pt, anchor = south west] (lipp_xz) at (0,0) {\includegraphics[ width=\szsub\textwidth]{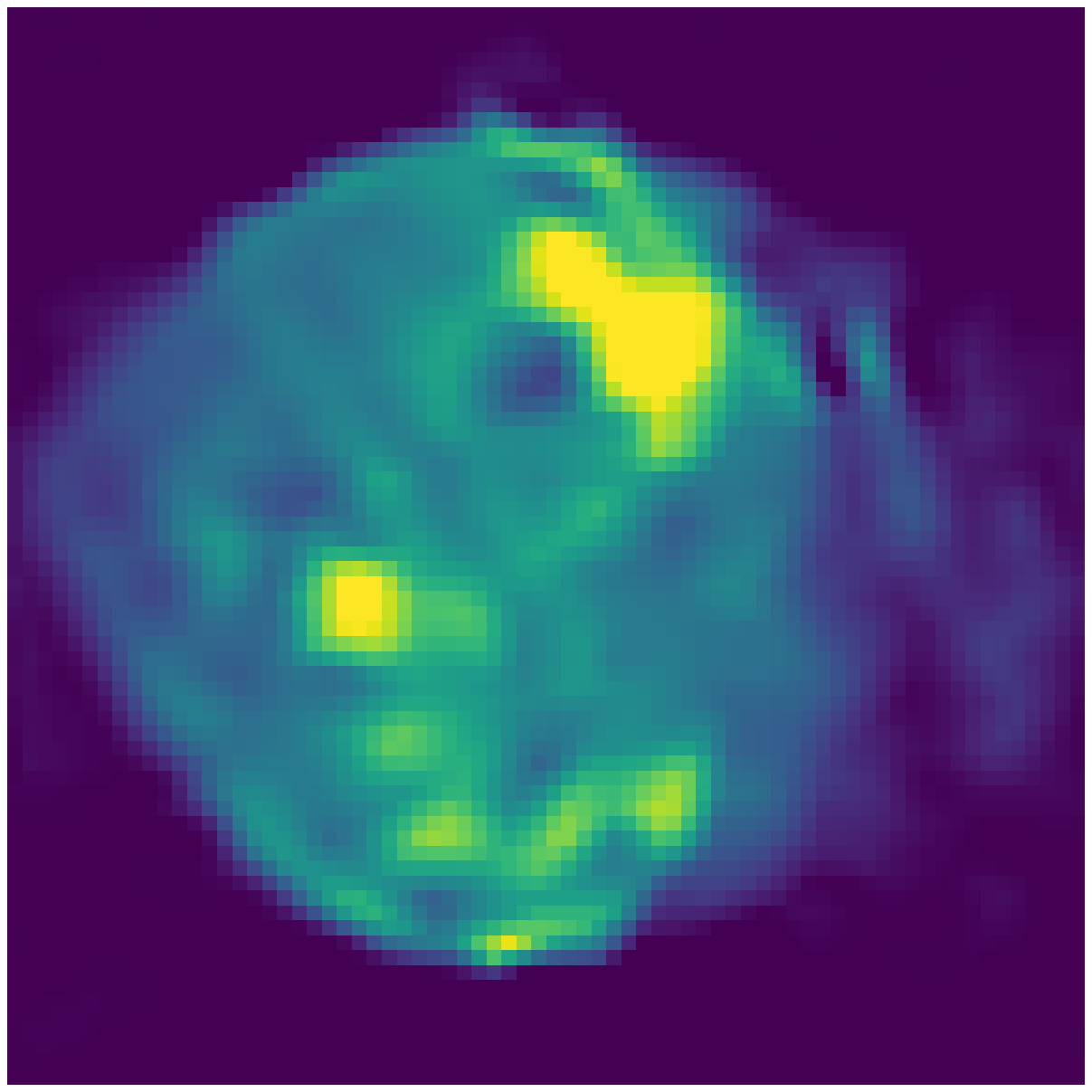}};
    
    \node[inner sep=0pt, anchor = west] (lipp_xy_1) at (lipp_xz.east) {\includegraphics[ width=\szsub\textwidth]{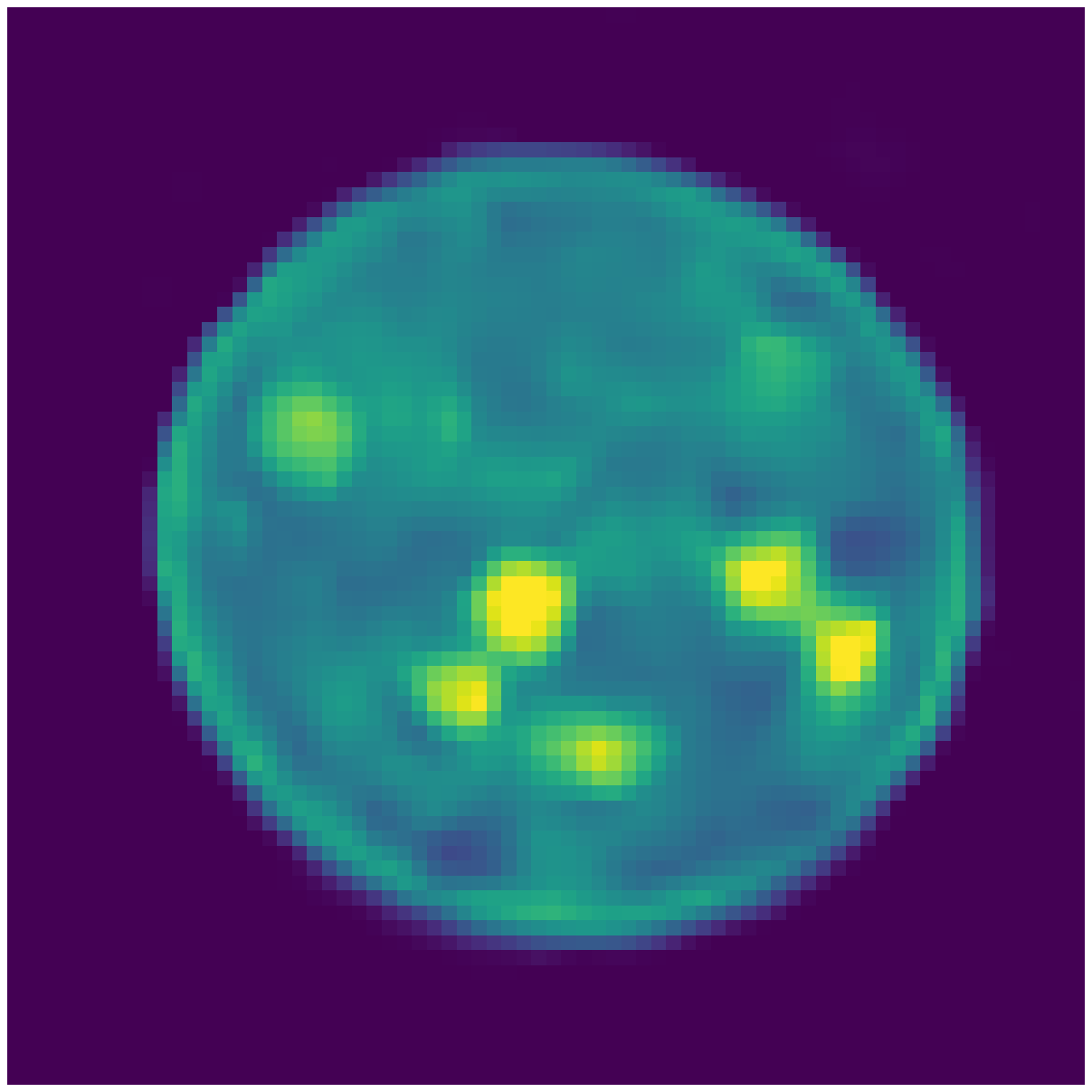}};
    \draw[-{latex[length = 0.05mm]}, thick,red] (72 + 10, 48)-- (72 +  16, 46);
    
    \node[inner sep=0pt, anchor = west] (lipp_xy_2) at (lipp_xy_1.east) {\includegraphics[ width=\szsub\textwidth]{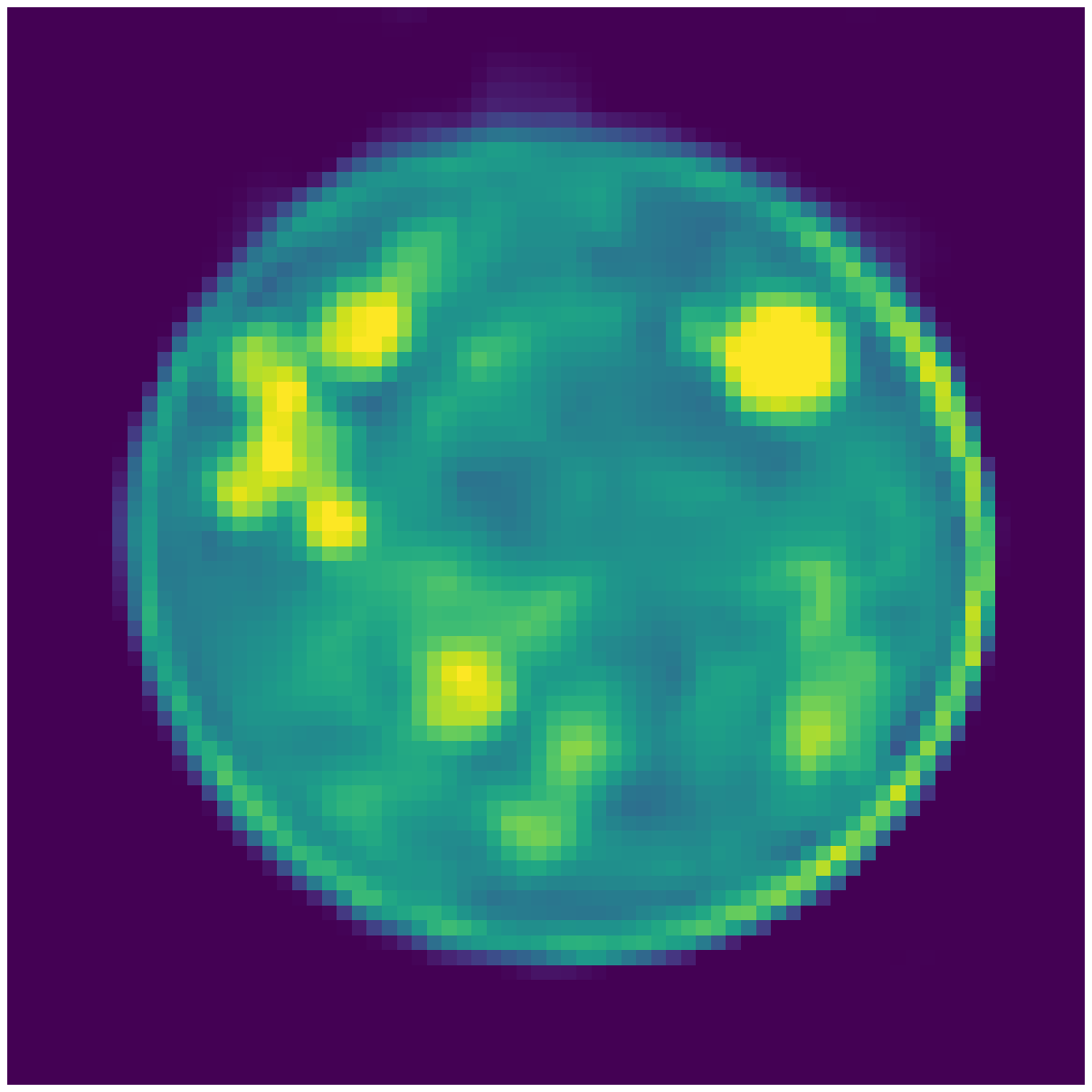}};
    \draw[-{stealth[length = 0.05mm]}, ultra thick, white] (2*72 + 41,36) -- (2*72 +  36, 40);
    \draw[-{latex[length = 0.05mm]}, thick,red] (2*72 + 6, 48)-- (2*72 +  12, 46);
    
    \node[inner sep=0pt, anchor = west] (lipp_xy_3) at (lipp_xy_2.east) {\includegraphics[ width=\szsub\textwidth]{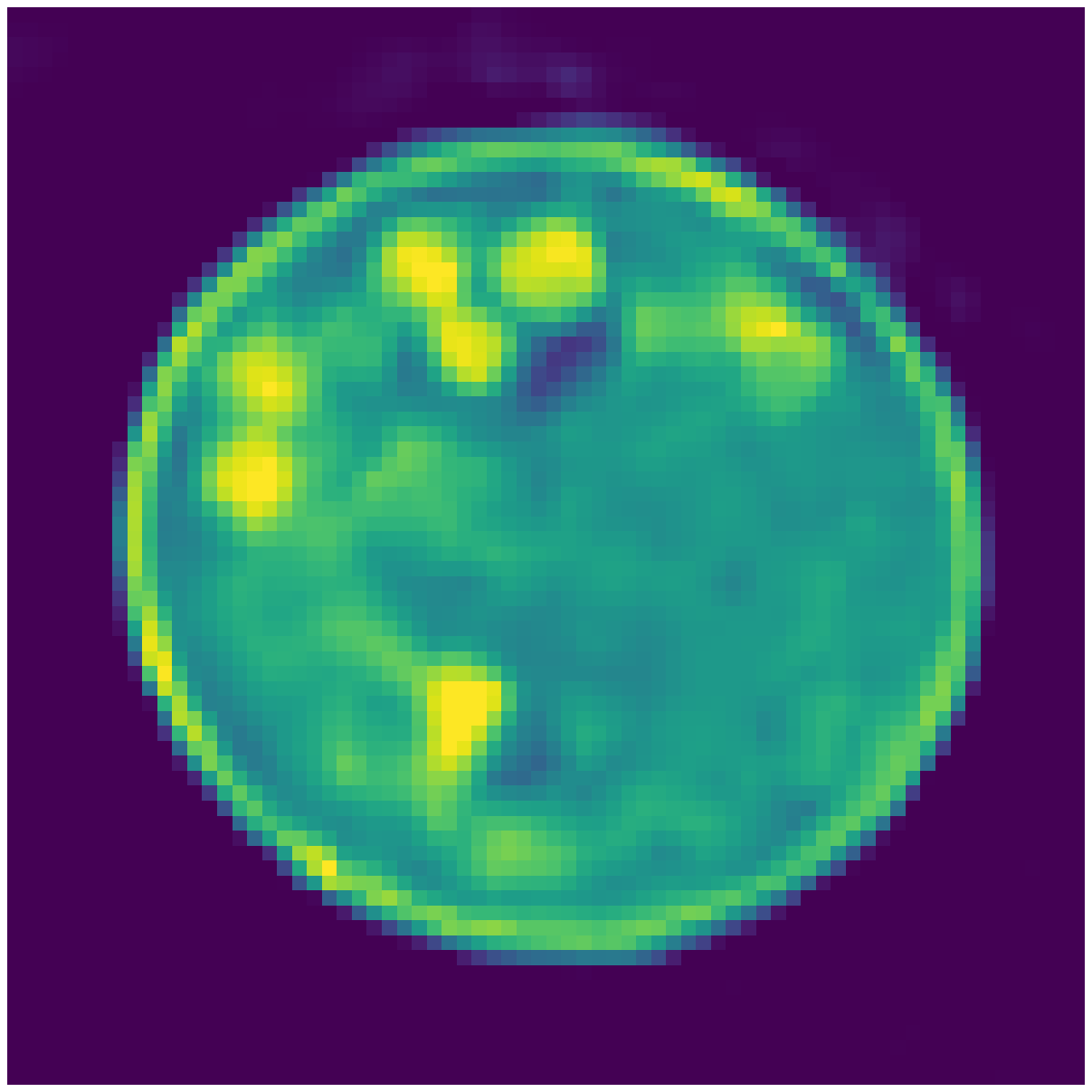}};
    \draw[-{stealth[length = 0.05mm]}, ultra thick, white] (3*72 + 41,36) -- (3*72 +  36, 40);
    \end{axis}
    \end{tikzpicture}
    \caption{
    Reconstructions of the yeast cell with Rytov, BPM, and the proposed method (LS model). 
   The first column corresponds to the central  XZ  slice  of  the  sample. Then, from left to right: XY slices at  depths $z_1 = -1.092\micro{m}$, $z_1 = -0.496\micro{m}$, and $z_1 = 0\micro{m}$.
   }
    \label{fig:real_slice}
\end{figure*}
\subsubsection{Comparisons}

We compare our LS-based reconstruction method with  the direct back-propagation  algorithm that is based on the Rytov model. In addition, we do compare it to BPM.
For each iterative method (BPM and ours),
we used TV regularization together with a nonnegativity constraint.
Finally, the regularization parameter~$\tau > 0$ was optimized through grid search in each scenario to maximize the performance with respect to the ground truth.

In Figure~\ref{fig:sim_slice}, one observes that our method faithfully recovers RBCs at several orientations. In comparison with the considered baselines, we observe that the LS model allows to recover more accurately the RBCs shape (and RI) as pointed out by the white arrows.
In Table~\ref{tab:relerr}, we present the relative error of the RBCs reconstructions.
As expected, the more sophisticated LS model obtains the lowest relative error.

\subsection{Real Data}\label{sec:ReconsReal}

\subsubsection{Acquisition Setup}

We acquired real data using the experimental tomographic setup described in~\cite{ayoub2019method}.
The sample is a yeast cell immersed in water~($\nb=1.338$) and is illuminated by tilted incident waves with wavelength $\lambda=532$\nano\meter.
As in our simulation setup, we acquired 61 views within a cone of illumination whose half-angle is~$35^\degree$.
The measurements lie on a plane that is centered and perpendicular to the optical axis.
The complex fields with and without the sample were acquired for each view,
thus providing the total and incident field, respectively.
The pixel size is $99$\nano\meter.

The  reconstructions are performed on a grid of the same resolution than that of the measurements.
We used the Hessian-Schatten-norm regularization as we found it more suitable for this type of sample.
Finally, we model $\Pd$ as the composition of a linear filtering by an ideal pupil function (binary disk in Fourier domain with radius $2\mathrm{NA} /\lambda$, $\mathrm{NA}=1.45$) and a free-space propagation to the center of the sample.
\begin{figure}[t]
\begin{center}
\begin{tikzpicture}
    \node[inner sep=0] (ryt) at (0,0) {\includegraphics[width = 0.16\textwidth,trim={5cm 3cm 5cm 3cm},clip]{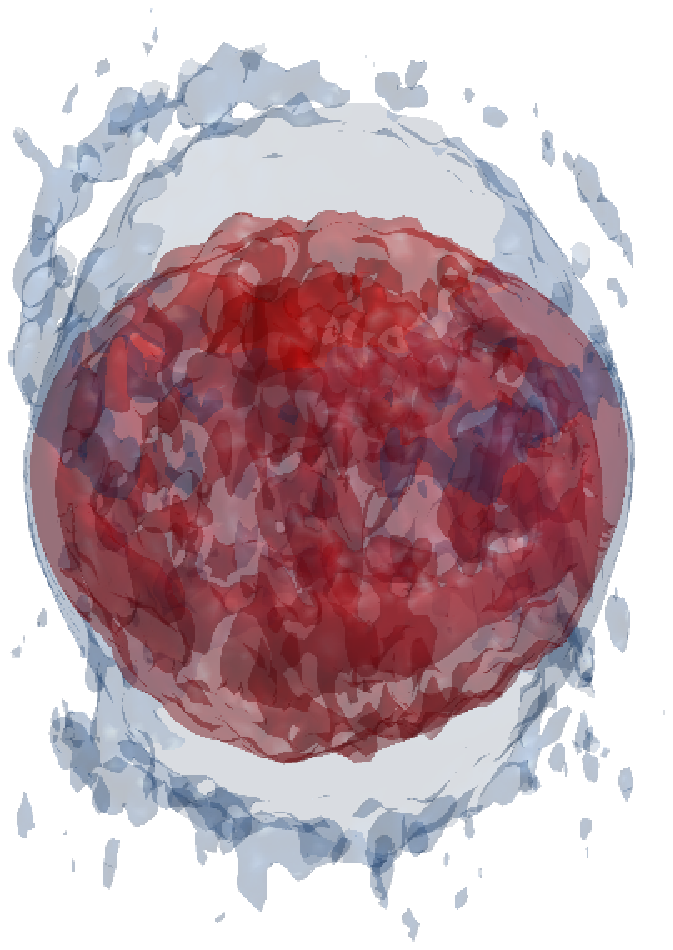}};
    \node[anchor=south] at (ryt.north) {Rytov};
    
    \node[inner sep=0,anchor = west] (bpm) at (ryt.east) {\includegraphics[width = 0.16\textwidth,trim={5cm 3cm 5cm 3cm},clip]{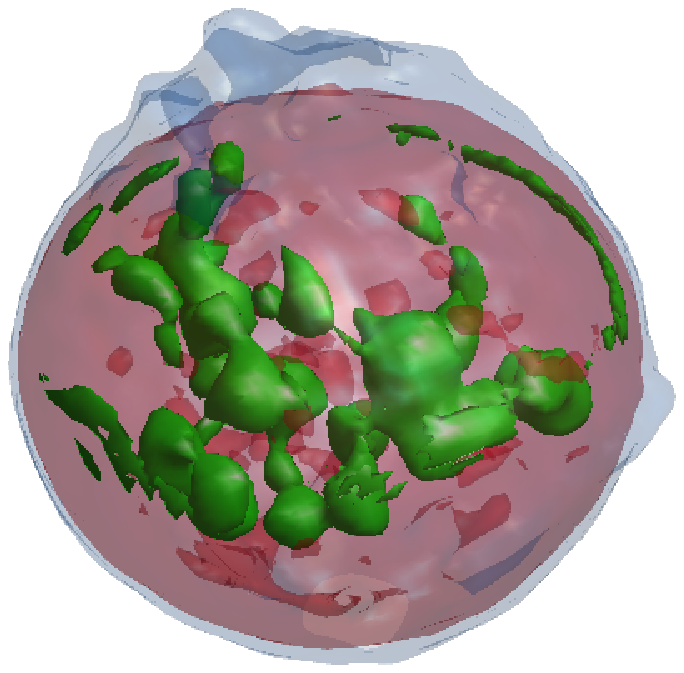}};
    \node[anchor=south] at (bpm.north) {BPM};
    
    \node[inner sep=0,anchor=west] (lipp) at (bpm.east) {\includegraphics[width = 0.16\textwidth,trim={5cm 3cm 5cm 3cm},clip]{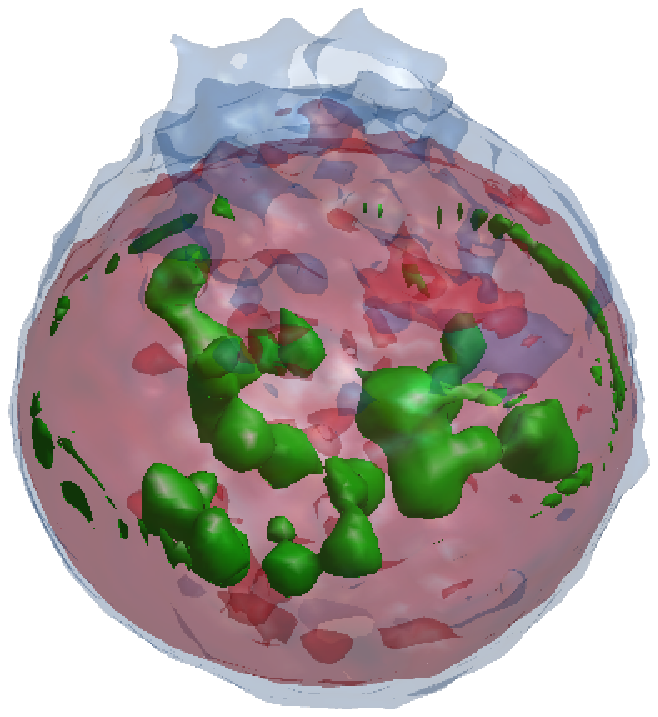}};
    \node[anchor=south] at (lipp.north) {LS model};
\end{tikzpicture}
    \caption{Iso-surface color renderings of the reconstructions of the yeast.
    The isovalues are $1.35$, $1.38$, and $1.46$ for the blue, red, and green color channels, respectively.
    }
    \label{fig:realiso}
    \end{center}
\end{figure}
\subsubsection{Reconstruction Results and Discussion}

The reconstructed volumes obtained with the Rytov method, the BPM, and the proposed approach are presented in Figure~\ref{fig:real_slice}.
Once again, nonlinear models clearly outperform the (linear) Rytov reconstruction.
Moreover, the reconstruction of the RI obtained by the LS model does not suffer from the artefacts indicated in BPM slices  $z_2, z_3$ with thick white arrows.
Also, the areas with higher RI are better resolved ($z_1, z_2$, thin red arrows) when the LS model is deployed.
Finally, one can appreciate in Figure~\ref{fig:realiso} that the inner areas with higher RI (green) are more resolved for the LS model than for BPM.

\section{Conclusion}

Three-dimensional optical diffraction tomography reconstruction is a challenging inverse problem. Its success depends on the accuracy of the implementation of the physical model. In this work, we proposed an accurate and efficient implementation of the forward model that is based on the exact Lippmann-Schwinger model. To that end, we tackled important difficulties that are related to the discretization of the model, the computational and memory burden, as well as the calibration of the incident field. Finally, we showed on both simulated and real data that the use of the  proposed model improves the quality/faithfulness of the reconstructions.

\section{Acknowledgments}
This research was supported by the European Research Council (ERC) under the European Union’s Horizon 2020 research and innovation programme, Grant Agreement No.\ 692726 GlobalBioIm: Global integrative framework for computational bio-imaging.

The authors would like to thank Ferr\'eol Soulez and Harshit Gupta for fruitful discussions.

\section{Preliminary Lemmas}

\begin{lemma}[Smoothness of a function and decay of its Fourier transform in $\R^3$]\label{lemma:DecayFourier}
  Let $v \in L_2(\R^3)$ have $(q-1)$ continuous derivatives in $L_2(\R^3)$ for some $q\geq 1$ and a $q$th derivative of bounded variations. Then, 
  \begin{equation}
      |\hat{v}(\omegad)| \leq \frac{C_1}{\|\omegad\|^{q+1}}\quad \forall \omegad \text{ s.t. } \|\omegad\| \geq C_2,
  \end{equation}
  where $C_1$ and $C_2$ are positive constants.
\end{lemma}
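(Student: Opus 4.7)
The plan is to reduce to a one‑dimensional Fourier decay statement by differentiating in the direction of $\omegad$ itself, then closing the argument with the standard fact that the Fourier transform of a finite vector‑valued Radon measure is bounded. Write $\omegad = \rho \vec{e}$ with $\rho = \|\omegad\|$ and unit vector $\vec{e} = \omegad/\|\omegad\|$, and introduce the directional function
\begin{equation*}
w := (\vec{e}\cdot \nab)^q v = \sum_{|\alpha|=q} \binom{q}{\alpha}\vec{e}^{\,\alpha}\, \partial^\alpha v.
\end{equation*}
Since each coefficient $|\vec{e}^{\,\alpha}|\leq 1$ and each $\partial^\alpha v$ ($|\alpha|=q$) is assumed to be of bounded variation on $\R^3$, the function $w$ is BV on $\R^3$ with total variation bounded uniformly in $\vec{e}$ by a constant $C$ depending only on $v$.

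The first step is to relate $\hat{v}$ and $\hat{w}$. Integrating by parts $q$ times (the boundary terms vanish because $v, \partial^\alpha v \in L_2(\R^3)$ for $|\alpha|\leq q-1$ together with the BV regularity at order $q$ gives the required decay at infinity), one obtains
\begin{equation*}
\hat{w}(\omegad) \;=\; (\ii\, \omegad\cdot\vec{e})^{q}\,\hat{v}(\omegad) \;=\; (\ii\rho)^{q}\,\hat{v}(\omegad).
\end{equation*}
The second step is to apply the BV decay to $\hat{w}$. Because $w$ is BV, its distributional gradient $\nab w$ is a finite $\R^3$‑valued Radon measure with total mass at most $\sqrt{3}\,C$. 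Using $\widehat{\nab w}(\omegad) = \ii\,\omegad\,\hat{w}(\omegad)$ together with the elementary bound $\lvert \widehat{\mu}(\omegad)\rvert \leq \|\mu\|_{\mathcal{M}}$ for a finite measure $\mu$, one gets $\rho\,|\hat{w}(\omegad)| \leq \sqrt{3}\,C$ uniformly in $\vec{e}$. Combining with the first step yields $\rho^{q+1}|\hat{v}(\omegad)| \leq \sqrt{3}\,C$, which is the claim with $C_1 = \sqrt{3}\,C$; the threshold $C_2$ simply absorbs any preliminary region where boundary/low‑frequency terms prevent the clean bound.

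The main obstacle I expect is not the overall scheme, which is classical, but making precise two points: (i) what ``$q$th derivative of bounded variations'' should mean in $\R^3$ so that the combination $w$ inherits a BV norm bounded \emph{uniformly} in the direction $\vec{e}$, and (ii) justifying the $q$‑fold integration by parts on $\R^3$ when $v$ is only in $L_2$. The natural reading, in which all partials $\partial^\alpha v$ with $|\alpha|=q$ are BV, resolves (i) via the triangle inequality on total variation and a bound on the multinomial weights. For (ii), once $v$ has a BV top‑order derivative, iterated Sobolev/BV embedding forces sufficient decay of the lower‑order derivatives to kill boundary contributions, but this is the step that requires the most care.
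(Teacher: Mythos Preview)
Your argument is sound and, in fact, goes well beyond what the paper does: the paper's ``proof'' of this lemma consists of a single sentence stating that it is an extension of the well-known one-dimensional result and citing Theorems~6.1 and~6.2 of Briggs--Henson, \emph{The DFT}. No details of the extension to $\R^3$ are given.

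Your directional-derivative reduction is precisely the natural way to carry out that extension, and it is essentially the same idea as the one-dimensional proof the paper cites: repeated integration by parts transfers $q$ powers of the frequency onto $\hat v$, and the final BV hypothesis furnishes the extra power via the finite-measure bound $|\hat\mu|\le\|\mu\|_{\mathcal M}$. The device of fixing $\vec e=\omegad/\|\omegad\|$ and working with $(\vec e\cdot\nab)^q v$ is exactly what converts the multivariate statement into the one-dimensional one, and your uniformity-in-$\vec e$ observation (via $|\vec e^{\,\alpha}|\le 1$ and the multinomial expansion) is what makes the constant $C_1$ independent of direction.

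The two caveats you flag are real but minor. Point~(i) is handled exactly as you say once one reads ``$q$th derivative of bounded variation'' as ``every $\partial^\alpha v$ with $|\alpha|=q$ is BV on $\R^3$''. Point~(ii) is the only place requiring care: to justify the $q$-fold integration by parts one can either mollify and pass to the limit, or argue distributionally (the identity $\widehat{\partial^\alpha v}=(\ii\omegad)^\alpha\hat v$ holds for tempered distributions, and the hypotheses place each intermediate derivative in $L_2$, so $\hat v$ and $\hat w$ are genuine functions where needed). Either route closes the argument; the paper simply does not engage with these details.
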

\begin{proof}
It is an extension of the well known result in one-dimension, see for instance~\cite[Theorems 6.1 and 6.2]{briggs1995dft}.
\end{proof}

\begin{lemma}[DFT aliasing for compactly supported functions in $\R^3$]\label{lemma:aliasing}
 Let $v \in L_2([-L/2,L/2]^3)$ be compactly supported, have $(q-1)$ continuous derivatives in $L_2(\R^3)$ for some $q\geq 3$, and a $q$th derivative of bounded variations. Let $\mathbf{v} \in \R^{N}$ ($N=n^3$) be a sampled version of $v$ with sampling step $h=L/n$. Finally,  denote by $\delta = 2\pi /(hn)$ the frequency sampling step of $\hat{\mathbf{v}}$, the DFT of $\mathbf{v}$.
Then, for all 
$\qd \in [\![\frac{-n}{2}+1;\frac{n}{2}]\!]^3$
\begin{equation}
     \left|\hat{v}(\delta \qd) -
     h^3 \widehat{\mathbf{v}}[\qd]\right| \leq C h^{q+1}
\end{equation}
for a positive constant $C>0$.
\end{lemma}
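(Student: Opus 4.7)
The plan is to reduce the lemma to the Poisson summation formula. Because $v$ is compactly supported in $[-L/2,L/2]^3$, the samples $\mathbf{v}[\kd] = v(h\kd)$ are nonzero only for $\kd \in [\![\frac{-n}{2}+1;\frac{n}{2}]\!]^3$, so the finite DFT sum can be identified with the full infinite sum $\sum_{\kd \in \Z^3} v(h\kd)\mathrm{e}^{-\ii h(\delta\qd)^T\kd}$. Poisson summation then yields the identity
\begin{equation*}
    h^3 \widehat{\mathbf{v}}[\qd] \;=\; \sum_{\md \in \Z^3} \hat{v}\bigl(\delta \qd + \tfrac{2\pi}{h}\md\bigr),
\end{equation*}
from which extracting the $\md=\mathbf{0}$ term gives
\begin{equation*}
    \hat{v}(\delta \qd) - h^3 \widehat{\mathbf{v}}[\qd] \;=\; -\sum_{\md \in \Z^3\setminus\{\mathbf{0}\}} \hat{v}\bigl(\delta \qd + \tfrac{2\pi}{h}\md\bigr).
\end{equation*}

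Next, I would bound each alias using Lemma \ref{lemma:DecayFourier}. For $\qd \in [\![\frac{-n}{2}+1;\frac{n}{2}]\!]^3$, each coordinate of $\delta\qd$ lies in $[-\pi/h,\pi/h]$, so for any $\md \neq \mathbf{0}$, the coordinate of $\delta\qd + \frac{2\pi}{h}\md$ corresponding to an index where $\|\md\|_\infty$ is attained is bounded below in absolute value by $\frac{2\pi}{h}\|\md\|_\infty - \frac{\pi}{h} \geq \frac{\pi}{h}\|\md\|_\infty$. Consequently $\|\delta\qd + \frac{2\pi}{h}\md\| \geq \frac{\pi}{h}\|\md\|_\infty$, and for $h$ small enough the argument lies in the region where Lemma \ref{lemma:DecayFourier} applies (the finitely many exceptional $\md$, as well as the regime of large $h$, can be absorbed into the final constant by using that $\|\hat{v}\|_\infty < \infty$, which follows from $v \in L_1$ since $v$ has compact support).

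Putting this together,
\begin{equation*}
    \left|\hat{v}(\delta \qd) - h^3 \widehat{\mathbf{v}}[\qd]\right| \;\leq\; \sum_{\md \neq \mathbf{0}} \frac{C_1}{\bigl(\tfrac{\pi}{h}\|\md\|_\infty\bigr)^{q+1}} \;=\; \frac{C_1\, h^{q+1}}{\pi^{q+1}} \sum_{\md \neq \mathbf{0}} \frac{1}{\|\md\|_\infty^{q+1}}.
\end{equation*}
The remaining series is a standard lattice sum that converges precisely when $q+1 > 3$, i.e.\ when $q \geq 3$, which is exactly the hypothesis. Setting $C$ equal to $C_1/\pi^{q+1}$ times this convergent sum (plus the contribution handling the low-frequency aliases when $h$ is not small) yields the claimed bound.

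The main obstacle is not really technical but notational: one must justify cleanly that Poisson summation applies (smoothness $q\geq 3$ ensures the necessary decay of $\hat v$ and absolute convergence on both sides), and one must argue the uniform lower bound $\|\delta\qd + \frac{2\pi}{h}\md\| \geq \frac{\pi}{h}\|\md\|_\infty$ is tight enough to make the $q\geq 3$ hypothesis sharp. Everything else reduces to the standard tail estimate of a $\Z^3$ lattice sum with exponent $q+1$.
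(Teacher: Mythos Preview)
Your proposal is correct and follows essentially the same route as the paper: Poisson summation identifies $h^3\widehat{\mathbf v}[\qd]$ with the periodized Fourier transform, the $\md=\mathbf 0$ term is isolated, and Lemma~\ref{lemma:DecayFourier} bounds the aliases, leaving a lattice sum that converges exactly when $q\ge 3$. The only difference is cosmetic: you control the aliased frequencies via the $\ell_\infty$-norm bound $\|\delta\qd+\tfrac{2\pi}{h}\md\|\ge \tfrac{\pi}{h}\|\md\|_\infty$ and then count $\ell_\infty$-shells, whereas the paper passes to the $\ell_1$ norm and counts points in the sets $S_{\qd}^m=\{\md:\ m\le \|\qd/n+\md\|_1<m+1\}$; your estimate is slightly cleaner but leads to the same $O(h^{q+1})$ conclusion.
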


\begin{proof}
From Poisson's summation formula and the compact support of $v$, we have that
\begin{equation}\label{eq:Poisson}
    \sum_{\kd \in [\![\frac{-n}{2}+1;\frac{n}{2}]\!]^3} \mkern-15mu   \mathbf{v}[\kd] \mathrm{e}^{- \ii h \kd^T \omegad} = \frac{1}{h^3}\sum_{\md \in \Z^3} \hat{v}(\omegad + 2\pi \md/h).
\end{equation}
Setting $\omegad = \delta \qd  = 2 \pi \qd /(hn) $ in~\eqref{eq:Poisson}, one recognizes that the left-hand side is the DFT of $\mathbf{v}$.
Hence, we obtain that
\begin{equation}\label{eq:Poisson2}
    \hat{v}(\delta \qd) = 
     h^3 \widehat{\mathbf{v}}[\qd] - \sum_{\substack{\md \in \Z^3 \\ \md \neq \mathbf{0}}} \hat{v}\left( \delta \qd + 2\pi \md/h \right).
\end{equation}
Then, from Lemma~\ref{lemma:DecayFourier}, we obtain that there exists $C>0$ such that
\begin{align}
     \left|\hat{v}(\delta \qd) -
     h^3 \widehat{\mathbf{v}}[\qd]\right| & \leq  \sum_{\substack{\md\in \Z^3 \\ \md \neq \mathbf{0}}} \frac{C}{\| \delta \qd + 2\pi \md/h \|^{q+1}} \notag \\
     & \leq \frac{C h^{q+1}}{(2\pi)^{q+1} }\sum_{\substack{\md\in \Z^3 \\ \md \neq \mathbf{0}}} \frac{1}{\| \qd/n +  \md \|^{q+1}}\label{eq:ProofLemmaA2-1}
\end{align}
Let us now study the convergence of the  series in~\eqref{eq:ProofLemmaA2-1}. Using the fact that $\| \cdot \|_2 \leq \|\cdot\|_1 \leq \sqrt{N} \|\cdot\|_2$, we obtain that
\begin{equation}
    \sum_{\substack{\md \in \Z^3 \\ \md \neq \mathbf{0}}} \frac{1}{\| \qd/n +  \md \|^{q+1}_2} \leq \sum_{\substack{\md\in \Z^3 \\ \md \neq \mathbf{0}}} \frac{\sqrt{N}}{\| \qd/n +  \md \|^{q+1}_1}.\label{eq:ProofLemmaA2-2}
\end{equation}
Then, for $\qd  \in [\![\frac{-n}{2}+1;\frac{n}{2}]\!]^3$  and $m \in \N$  we introduce the set
\begin{equation}
    S_{\qd}^m = \left\lbrace \md \in \Z^3 : m \leq \|\qd/n + \md \|_1< m +1 \right\rbrace.
\end{equation}
Using the fact that $\qd \in [\![\frac{-n}{2}+1;\frac{n}{2}]\!]^3 \Rightarrow \qd/n \in (-1/2,1/2]^3$, we have that
\begin{equation}
    \|\md\|_1 - 3/2 \leq  \|\qd/n + \md \|_1 \leq \|\md\|_1 + 3/2,
\end{equation}
which implies that
\begin{align}
    |S_{\qd}^m| &  \leq \sum_{m'=m-2}^{m+2} |S_{\bm{0}}^{m'}| \notag \\
    & \leq 5 |S_{\bm{0}}^{m+2}| = 5\left( 4(m+2)^2 + 2\right),\label{eq:ProofLemmaA2-3}
\end{align}
where $|\cdot|$ stands for the cardinality of the set. 
Using the inequality~\eqref{eq:ProofLemmaA2-3}, we can bound the right-hand side of~\eqref{eq:ProofLemmaA2-2} as
\begin{align}
   \sum_{\substack{\md\in \Z^3 \\ \md \neq \mathbf{0}}} \frac{\sqrt{N}}{\| \qd/n +  \md \|^{q+1}_1} & \leq \sum_{m=1}^{+\infty} \frac{\sqrt{N} |S_{\qd}^m|}{m^{q+1}} \notag \\
   & \leq  \sum_{m=1}^{+\infty} \frac{5\sqrt{N}  \left( 4(m+2)^2 + 2\right)}{m^{q+1}},
\end{align}
which is a convergent series when $q\geq 3$. This completes the proof.
\end{proof}

\section{Proof of Theorem~\ref{th:DiscG}} \label{proof:thDiscG}

From the Fourier-convolution theorem, we have that
\begin{align}
    (g_\mathrm{t} \ast v)(\xd) & =  \int_\Omega g_{\mathrm{t}}(\xd - \zd)v(\zd) \, \dd{\zd} \notag \\
    & = \frac{1}{(2\pi)^3} \int_{\R^3}  \widehat{g_{\mathrm{t}}}(\omegad) \hat{v}(\omegad) \mathrm{e}^{ \ii \omegad^T \xd} \, \dd{\omegad}. \label{eq:intFourier}
\end{align}

Let $n = \in 2 \N \setminus \{0\}$  and  $h=L/n$ be the spatial sampling step of the volume $\Omega$ in each dimension. It follows that the frequency domain that is associated to the DFT is $\widehat{\Omega}=[-\pi /h, \pi /h]^3$. Then,  the padding factor $p \in \N_{>0}$  enlarges the spatial domain to $[-pL/2, pL/2]^3$, resulting in the frequency sampling step  $\delta = 2\pi /(hnp)=2 \pi /(Lp)$, so that $\widehat{\Omega}$ is sampled using $np$ equally spaced points in each dimension.

We are now equipped to discretize the integral in~\eqref{eq:intFourier}. To that end, we use a trapezoidal quadrature rule on $\widehat{\Omega}$ and write that
\begin{equation} \label{eq:quadrature}
     (g_\mathrm{t} \ast v)(\xd) \approx  \frac{\delta^3}{(2\pi)^3}  \mkern-10mu \sum_{\qd \in  [\![\frac{-np}{2};\frac{np}{2}]\!]^3} \mkern-10mu w_{\qd} \,   \widehat{g_{\mathrm{t}}}\left(\delta\qd\right) \hat{v}\left(\delta\qd\right)  \mathrm{e}^{ \ii \delta \qd^T \xd}.
\end{equation}
There, the weights $w_{\qd}$ are equal to $1$, $1/2$, $1/4$, and $1/8$ when $\qd$ belongs to the interior, the interior of the faces, the interior of the edges, and the corners of the cube $[\![\frac{-np}{2};\frac{np}{2}]\!]^3$, respectively. 

The approximation we made in~\eqref{eq:quadrature} generates two error terms. 
\begin{enumerate}
    \item The error $\varepsilon^\mathrm{tp}$ that is due to the trapezoidal quadrature rule used to approximate the integral over the domain $\widehat{\Omega}$. This error is well documented in the  literature~\cite{ralston2001first}. For integrand that are twice differentiable, such as $\omegad \mapsto  \widehat{g_{\mathrm{t}}}(\omegad) \hat{v}(\omegad) \mathrm{e}^{ \ii \omegad^T \xd} $, we have that 
    \begin{equation}
        |\varepsilon^\mathrm{tp}| \leq C \delta^2 = C \left( \frac{2\pi}{Lp} \right)^2
    \end{equation}
    for a positive constant $C>0$.
    \item The error $\varepsilon^\mathrm{tr}$ that is due to the truncation of the integral in~\eqref{eq:intFourier} to the domain $\widehat{\Omega}$, bounded as
    \begin{align}
        |\varepsilon^\mathrm{tr}| & =  \frac{1}{(2\pi)^3} \left|\int_{\R^3 \setminus \widehat{\Omega}}  \widehat{g_{\mathrm{t}}}(\omegad) \hat{v}(\omegad) \mathrm{e}^{ \ii \omegad^T \xd} \, \dd{\omegad} \right| \notag \\
        & \leq  \frac{1}{(2\pi)^3} \int_{\R^3 \setminus \widehat{\Omega}}  \left| \widehat{g_{\mathrm{t}}}(\omegad) \hat{v}(\omegad) \mathrm{e}^{ \ii \omegad^T \xd} \right|  \, \dd{\omegad} \notag  \\
        & \leq  \frac{C}{(2\pi)^3} \int_{\R^3 \setminus \widehat{\Omega}}  \frac{2}{(\|\omegad\| - \kb ) \|\omegad\|^{q+2}} \, \dd{\omegad},\label{eq:truncErrBound}
    \end{align}
    for a constant $C>0$.
\end{enumerate}    
    
    The last inequality in~\eqref{eq:truncErrBound} has been established in two steps. First, the assumption that $\kb < \pi /h$  implies that $\forall \omegad \in \R^3 \setminus \widehat{\Omega}$, $\|\omegad\| > \kb$. Then, one gets from~\eqref{eq:GreenL} that, $\forall \omegad \in \R^3 \setminus \widehat{\Omega}$,
    \begin{equation}
        |\widehat{g_{\mathrm{t}}}(\omegad)| \leq \frac{2}{(\|\omegad\|-\kb)\|\omegad\|}.
    \end{equation}
    Second, Lemma~\ref{lemma:DecayFourier}, along with the fact that $v$ has ($q-1$) continuous derivatives with a $q$th derivative of bounded variations, implies that its Fourier transform decays as 
    \begin{equation}
        |\hat{v}(\omegad) |\leq \frac{C}{\|\omegad\|^{q+1}}
    \end{equation}
    for a constant $C>0$. Combining these two bounds with $| \mathrm{e}^{ \ii \omegad^T \xd} |= 1$ finally leads to~\eqref{eq:truncErrBound}. 
    
    A further refinement of the bound~\eqref{eq:truncErrBound} is needed to recover the statement of Theorem~\ref{th:DiscG}.
    Denoting by $\mathcal{B}_{\pi/h}^2 = \{ \omegad \in \R^3 : \|\omegad \| \leq \pi /h \}$ the $\ell_2$-ball of radius~$\pi/h$, one sees that the integral in~\eqref{eq:truncErrBound} is upper-bounded by the integration of the same integrand over the larger domain $\R^3 \setminus \mathcal{B}_{\pi/h}^2$. This bound is easier to evaluate using spherical coordinates, as in 
    \begin{align}
         |\varepsilon^\mathrm{tr}| & \leq \frac{2C}{(2\pi)^3} \int_{\R^3 \setminus \mathcal{B}_{\pi/h}^2}  \frac{1}{(\|\omegad\| - \kb ) \|\omegad\|^{q+2}} \, \dd{\omegad} \notag \\
         & = \frac{2C}{(2\pi)^3} \int_0^{2\pi} \mkern-10mu \int_0^{\pi} \mkern-10mu \int_{\pi/h}^{+\infty} \frac{r^2 \sin(\theta)}{(r-\kb)r^{q+2}} \, \dd{r} \, \dd{\theta} \, \dd{\phi} \notag \\
         & = \frac{C}{\pi^2} \int_{\pi/h}^{+\infty} \frac{1}{(r-\kb)r^{q}} \, \dd{r} . \label{eq:intProof1} 
    \end{align}
    To evaluate~\eqref{eq:intProof1}, we use the partial fraction decomposition
    \begin{equation}
        \frac{1}{(r-\kb)r^{q}} = \frac{1}{\kb^q(r-\kb)} - \sum_{m=0}^{q-1} \frac{1}{\kb^{q-m}r^{m+1}}.
    \end{equation}
    Hence, we have that
    \begin{align}
     \mkern-10mu   |\varepsilon^\mathrm{tr}| & \leq \frac{C}{\pi^2} \bigg( \frac{1}{\kb^q}  \log(r-\kb) \Big|_{r=\frac{\pi}{h}}^{+\infty}  -  \frac{1}{\kb^q}  \log(r) \Big|_{r=\frac{\pi}{h}}^{+\infty}  \notag \\
        & \qquad  \qquad \qquad - \sum_{m=1}^{q-1} \frac{1}{\kb^{q-m}} \left(- \frac{1}{m r^m} \right) \Bigg|_{r=\frac{\pi}{h}}^{+\infty}  \bigg) \notag \\
        & = \frac{-C}{\kb^q \pi^2} \left( \log\left(1 - \frac{\kb h}{\pi} \right)\mkern-5mu + \mkern-5mu \sum_{m=1}^{q-1} \frac{1}{m} \left(\frac{\kb h}{\pi} \right)^m  \right) \label{eq:intProof2}  \\
        & = \frac{C}{\kb^q \pi^2} \sum_{m=q}^{+\infty} \frac{1}{m} \left(\frac{\kb h}{\pi} \right)^m  \label{eq:intProof3}  \\
        & = \frac{C}{\kb^q \pi^2} \left(\frac{\kb h}{\pi} \right)^q \sum_{m=0}^{+\infty} \left(\frac{\kb h}{\pi} \right)^m \frac{1}{m+q} . \label{eq:intProof4}
     \end{align}
     To obtain~\eqref{eq:intProof3} from~\eqref{eq:intProof2}, we used the fact that $\kb h/\pi < 1$ together with $\log(1-x) = (- \sum_{m=1}^{+\infty} x^m / m)$ for $|x| < 1$. Finally, we get the bound $C^\mathrm{tr}/n^q$ from the convergence of the series in~\eqref{eq:intProof4} and $h=L/n$.

Let us focus on aliasing. As opposed to $\widehat{g_\mathrm{t}}$ for which we have access to  an explicit expression in~\eqref{eq:GreenL}--\eqref{eq:GreenL2}, the samples $\hat{v}(\delta \qd)$ in~\eqref{eq:quadrature} have to be approximated by the DFT coefficients of a $p$-times zero-padded version of the sampled signal $\mathbf{v}\in \C^N$, denoted $\mathbf{v}_p \in \C^{Np^3}$, and defined by, $\forall \kd \in [\![\frac{-np}{2}+1;\frac{np}{2}]\!]^3$,
\begin{equation}\label{eq:padV}
    \mathbf{v}_p[{\kd}] = \left\lbrace \begin{array}{ll}
         \mathbf{v}[{\kd}]  = v(h\kd),  & \kd \in  [\![\frac{-n}{2}+1;\frac{n}{2} ]\!]^3 \\
         0, & \text{ otherwise} .
    \end{array} \right.
\end{equation}
We then replace $\hat{v}(\delta \qd)$ in~\eqref{eq:quadrature} by $h^3\widehat{\mathbf{v}_p}[\qd]$ and obtain that
\begin{equation} \label{eq:quadrature2}
     (g_\mathrm{t} \ast v)(\xd) \approx  \frac{1}{(np)^3}  \mkern-10mu \sum_{\qd \in  [\![\frac{-np}{2};\frac{np}{2}]\!]^3} \mkern-10mu w_{\qd} \,   \widehat{g_{\mathrm{t}}}\left(\delta\qd\right) \widehat{\mathbf{v}_p}[\qd]  \mathrm{e}^{ \ii \delta \qd^T \xd}.
\end{equation}
This approximation  introduces an error term $\varepsilon^\mathrm{al}$ that is due to aliasing. More precisely, we have that
\begin{align}
    |\varepsilon^\mathrm{al}| & \leq  \frac{\delta^3}{(2\pi)^3}  \mkern-10mu \sum_{\qd \in  [\![\frac{-np}{2};\frac{np}{2}]\!]^3} \mkern-10mu w_{\qd} \left|\widehat{g_{\mathrm{t}}}\left(\delta\qd\right) \right| \left|\hat{v}(\delta \qd) -
     h^3 \widehat{\mathbf{v}_p}[\qd]\right| \notag \\
     & \leq   \frac{\delta^3 }{(2\pi)^3} \mkern-10mu \sum_{\qd \in  [\![\frac{-np}{2};\frac{np}{2}]\!]^3} \mkern-10mu w_{\qd} \left|\widehat{g_{\mathrm{t}}}\left(\delta\qd\right) \right| C h^{q+1} \label{eq:aliasingErr-1}\\
     & \leq \frac{\delta^3 C h^{q+1} }{(2\pi)^3} (np)^3 \|\widehat{g}_\mathrm{t}\|_{\infty}  \notag \\
     & = C \|\widehat{g}_\mathrm{t}\|_{\infty} h^{q-2} = \frac{C \|\widehat{g}_\mathrm{t}\|_{\infty} L^{q-2}}{n^{q-2}},
\end{align}
where~\eqref{eq:aliasingErr-1} comes from Lemma~\ref{lemma:aliasing}.

To complete the proof, it remains to recognize an inverse DFT within~\eqref{eq:quadrature2}.
Let $\{\qd_i\}_{i=1}^8$ denotes the eight corners of the cube $[\![\frac{-np}{2};\frac{np}{2}]\!]^3$.
Then, because $\widehat{g_\mathrm{t}}$ is radially symmetric (see~\eqref{eq:GreenL} and~\eqref{eq:GreenL2}), and by periodicity of $\widehat{\mathbf{v}_p}$, we have that
\begin{equation}
    \widehat{g_{\mathrm{t}}}\left(\delta\qd_i\right) \widehat{\mathbf{v}_p}[\qd_i] = \widehat{g_{\mathrm{t}}}\left(\delta\qd_1\right) \widehat{\mathbf{v}_p}[\qd_1], \; \forall i \in \{2,\ldots,8\}.
\end{equation}
Hence we can factorize the corresponding terms in~\eqref{eq:quadrature2} as 
\begin{equation}
    \sum_{\qd \in \{\qd_i\}^8_{i=1}} \frac18  \widehat{g_{\mathrm{t}}}\left(\delta\qd_i\right) \widehat{\mathbf{v}_p}[\qd_i]  = \widehat{g_{\mathrm{t}}}\left(\delta\qd_1\right) \widehat{\mathbf{v}_p}[\qd_1].
\end{equation}
Finally, using the same arguments for points within the faces and edges of the cube $[\![\frac{-np}{2};\frac{np}{2}]\!]^3$, and sampling~\eqref{eq:quadrature2} at points $h\kd$, $\kd \in [\![\frac{-n}{2}+1;\frac{n}{2}]\!]^3$, we obtain that
\begin{equation} \label{eq:quadrature3}
    (\Gd \mathbf{v})[\kd] =  \frac{1}{(np)^3}  \mkern-10mu \sum_{\qd \in  [\![\frac{-np}{2}+1;\frac{np}{2}]\!]^3} \mkern-10mu   \widehat{\mathbf{g}_{\mathrm{t}}}[\qd] \widehat{\mathbf{v}_p}[\qd]  \mathrm{e}^{ \frac{2 \ii \pi}{np} \qd^T \kd},
\end{equation}
where $\widehat{\mathbf{g}_{\mathrm{t}}} = (\widehat{g_{\mathrm{t}}}(\delta \qd))_{\qd \in [\![\frac{-np}{2}+1;\frac{np}{2}]\!]^3}$. We recognize an inverse DFT, which completes the proof.

\section{Proof of Proposition~\ref{th:ReducedMem}} \label{proof:thReducedMem}
First, let us introduce the notation $\Omega_n = [\![\frac{-n}{2}+1;\frac{n}{2}]\!]^3$. Then, we have that, for all $\kd \in \Omega_n$,
\begin{align}
    \mkern-15mu & \left( \mathbf{F}^{-1} (\widehat{\mathbf{g}_{\mathrm{t}}} \odot \widehat{\mathbf{v}_{\mathrm{p}}}) \right)[\kd]  \notag \\ \mkern-15mu
    &= \mkern-5mu \frac{1}{(np)^3}  \sum_{\qd \in \Omega_{np}}  \widehat{\mathbf{g}_{\mathrm{t}}}[\qd] \widehat{\mathbf{v}_p}[\qd] \, \mathrm{e}^{ \frac{2 \ii \pi}{np} \qd^T \kd} \notag \\ \mkern-15mu
     &= \mkern-5mu\frac{1}{(np)^3}  \sum_{\qd \in \Omega_{np}}   \widehat{\mathbf{g}_{\mathrm{t}}}[\qd]  \sum_{\tilde{\qd} \in \Omega_{np}} \mathbf{v}_p[\tilde{\qd}] \,  \mathrm{e}^{ \frac{-2 \ii \pi}{np} \tilde{\qd}^T \mkern-2mu \qd} \, \mathrm{e}^{ \frac{2 \ii \pi}{np} \qd^T \kd} \notag \\ \mkern-15mu 
    & = \mkern-5mu\frac{1}{(np)^3}   \sum_{\tilde{\qd} \in \Omega_{2n}}  \mathbf{v}_2[\tilde{\qd}]  \sum_{\qd \in \Omega_{np}}    \widehat{\mathbf{g}_{\mathrm{t}}}[\qd]   \,  \mathrm{e}^{ \frac{2 \ii \pi}{np} (\kd - \tilde{\qd})^T \qd}  \label{eq:proofthReducedMem-1} \notag  \\ \mkern-15mu
     & = \mkern-5mu\frac{1}{(np)^3}  \sum_{\tilde{\qd} \in \Omega_{2n}}  \mkern-13mu \mathbf{v}_2[\tilde{\qd}] \mkern-10mu \sum_{\substack{\bm{s} \in [\![0;\frac{p}{2}-1]\!]^3 \\ \qd \in \Omega_{2n}}} \mkern-10mu  \widehat{\mathbf{g}_{\mathrm{t}}}[{\textstyle \frac{p}{2}}\qd \mkern-3mu-\mkern-3mu\bm{s}]   \,  \mathrm{e}^{ \frac{2 \ii \pi}{np} (\kd - \tilde{\qd})^T ({\textstyle \frac{p}{2}}\qd-\bm{s})} \mkern-30mu \notag  \\ \mkern-15mu
     & = \mkern-5mu \frac{8}{p^3} \mkern-10mu \sum_{\tilde{\qd} \in \Omega_{2n}} \mkern-10mu  \mathbf{v}_2[\tilde{\qd}]   \mkern-20mu   \sum_{\bm{s} \in [\![0;\frac{p}{2}-1]\!]^3} \mkern-23mu  \mathbf{F}^{-1}(\widehat{\mathbf{g}_{\mathrm{t}}}[{\textstyle \frac{p}{2}}\cdot \mkern-5mu - \mkern-3mu\bm{s}] )[\kd\mkern-2mu - \mkern-2mu \tilde{\qd}] \,  \mathrm{e}^{ \frac{-2 \ii \pi}{np} (\kd - \tilde{\qd})^T \mkern-5mu  \bm{s}}   \mkern-10mu
\end{align}
where we have used the fact that $\mathrm{supp}(\mathbf{v}_p) =\mathrm{supp}(\mathbf{v}) \subseteq \Omega_n \subseteq \Omega_{2n}$.
Hence, we have shown that $\big(\mathbf{F}^{-1}(\widehat{\mathbf{g}_{\mathrm{t}}} \odot \widehat{\mathbf{v}_{\mathrm{p}}})\big) \big|_{\Omega_n}$ can be obtained as the valid part of the discrete convolution between $\mathbf{v}_2$, defined as $\mathbf{v}$ padded with $p=2$, and a modified truncated Green function given by, $\forall \kd \in \Omega_{2n}$,
\begin{equation}
    \mathbf{g}_{\mathrm{t}}^\mathrm{m}[\kd]=  \frac{8}{p^3} \sum_{\bm{s} \in [\![0;{\textstyle \frac{p}{2}}-1]\!]^3} \mkern-10mu \mathbf{F}^{-1}(\widehat{\mathbf{g}_{\mathrm{t}}}[{\textstyle \frac{p}{2}}\cdot-\bm{s}] )[\kd] \,  \mathrm{e}^{ \frac{-2 \ii \pi}{np} \kd^T \bm{s}} ,
\end{equation}
which completes the proof.

\bibliographystyle{IEEEtran}
\bibliography{ODT}

\end{document}